\newcommand{\perm}{\operatorname{perm}}
\newcommand{\stc}{\operatorname{sc}}
\newcommand{\Shuf}{\mathcal{S}huf}
\newcommand{\SE}{\mathcal{S}\mathcal{E}}
\DeclareMathOperator{\lcm}{lcm}
\def\dotminus{\mathbin{\ooalign{\hss\raise1ex\hbox{.}\hss\cr
  \mathsurround=0pt$-$}}} 
\spnewtheorem{my_problem}{Problem}{\bfseries}{\itshape}
\theoremstyle{plain}
\DeclareFontFamily{U}{bigshuffle}{}
\DeclareFontShape{U}{bigshuffle}{m}{n}{
  <5-8> s*[1.7] shuffle7
  <8->  s*[1.7] shuffle10
}{}
\DeclareSymbolFont{BigShuffle}{U}{bigshuffle}{m}{n}
\DeclareMathSymbol\bigshuffle{\mathop}{BigShuffle}{"001}
\DeclareMathSymbol\bigcshuffle{\mathop}{BigShuffle}{"002}
\begin{document}
\title{The Commutative Closure of Shuffle Languages over Group Languages is Regular}
\titlerunning{Commutative Closure on Shuffle Languages over Group Languages}

%
%
\author{Stefan Hoffmann\orcidID{0000-0002-7866-075X}}
\authorrunning{S. Hoffmann}
%
\institute{Informatikwissenschaften, FB IV, 
  Universit\"at Trier,  Universitätsring 15, 54296~Trier, Germany, 
  \email{hoffmanns@informatik.uni-trier.de}}
\maketitle              
\begin{abstract}
 
 We show that the commutative closure combined with the iterated
 shuffle is a regularity-preserving operation on group languages.
 In particular, for commutative group languages,
 the iterated shuffle is a regularity-preserving operation.
 We also give bounds for the size of minimal recognizing automata.
 Then, we use this result
 to deduce that the commutative
 closure of any shuffle language over group languages, i.e., a language given by a shuffle expression, i.e., expressions involving 
 shuffle, iterated shuffle, concatenation, Kleene star and union in any order,
 starting with the group languages, always yields a regular language.


\keywords{commutative closure \and group language \and permutation automaton \and shuffle expression \and shuffle \and iterated shuffle} 
\end{abstract}

\section{Introduction}

Having applications in regular model checking~\cite{DBLP:journals/iandc/BouajjaniMT07,DBLP:journals/ita/CeceHM08}, or arising naturally in the
theory of traces~\cite{DBLP:books/ws/95/DR1995,DBLP:conf/latin/Sakarovitch92}, one model for parallelism,
the (partial) commutative closure has been extensively studied~\cite{GinsburgSpanier66,GinsburgSpanier66b,Gohon85,DBLP:journals/iandc/GomezGP13,jalc/Hoffmann21,DBLP:conf/dcfs/Hoffmann20,Lvov73,DBLP:journals/ipl/MuschollP96,Redko63}.

%
%
%
%

In~\cite{DBLP:journals/iandc/GomezGP13}, the somewhat informal notion of a \emph{robust class}
was introduced, meaning roughly a class\footnote{We relax the condition from~\cite{GomezA08} that it must be a class of regular languages. However, some mechanism to represent the languages from the class
should be available.} closed under some of the usual
operations on languages, such as Boolean
operations, product, star, shuffle, morphism, inverses
of morphisms, residuals, etc.
 Motivated by two guiding problems formulated in~\cite{DBLP:journals/iandc/GomezGP13},
 we formulate the following slightly altered, but related problems:
 
 \begin{my_problem}
  When is the closure of a language under [partial] commutation regular?
 \end{my_problem} 
 
 \begin{my_problem}
  Are there any robust classes for some common operations such that the commutative closure
  is (effectively) regular?
 \end{my_problem}
 By effectively regular, we mean the stipulation that an automaton
 of the result of the commutation operation
 is computable from a representational scheme for the language class at hand.

 Here, we will investigate the commutation operation on the closure
 of the class (or variety thereof) of group languages
 under union, shuffle, iterated shuffle, concatenation and Kleene star.
 For the class of finite languages, this closure, called the class of \emph{shuffle languages},
 is definable by so called \emph{shuffle expressions}~\cite{FerPSV2017,DBLP:journals/tcs/Jantzen81,DBLP:journals/tcs/Jantzen85,DBLP:journals/tcs/JedrzejowiczS01,DBLP:conf/stoc/Kimura76,Shaw78zbMATH03592960}.
 This is also true in our case, but the atomic expressions are interpreted not as finite languages, but
 as group languages.
 In this sense, we use the term shuffle expressions, or shuffle language, in a wider sense, by allowing
 different atomic languages.
 It will turn out that the commutation operation yields a regular language
 on this class of languages, and it is indeed effectively regular.
 However, I do not know if the languages class itself consists only of regular languages.

 The shuffle and iterated shuffle have been introduced and studied to understand
 the semantics of parallel programs. This was undertaken, as it appears
 to be, independently by Campbell and Habermann~\cite{CamHab74}, by Mazurkiewicz~\cite{DBLP:conf/mfcs/Marzurkiewicz75}
 and by Shaw~\cite{Shaw78zbMATH03592960}. They introduced \emph{flow expressions}, 
 which allow for sequential operators (catenation and iterated catenation) as well
 as for parallel operators (shuffle and iterated shuffle). 
 These operations have been studied extensively, see for example~\cite{FerPSV2017,DBLP:journals/tcs/Jantzen81,DBLP:journals/tcs/Jantzen85,DBLP:journals/tcs/JedrzejowiczS01}.

 The shuffle operation as a binary operation, but not the iterated shuffle,
is regularity-preserving on all regular languages. The size of recognizing automata was investigated in~\cite{DBLP:journals/iandc/BrodaMMR18,BrzozowskiJLRS16,DBLP:journals/jalc/CampeanuSY02,DBLP:journals/jalc/CaronLP20,Hoffmann20,DBLP:conf/cai/Hoffmann19}.





\section{Preliminaries and Definitions}

By $\Sigma$ we denote a finite set of symbols, i.e., an \emph{alphabet}. 
By $\Sigma^*$ we denote the set of all words with the concatenation operation. 
The \emph{empty word}, i.e., the word of length zero, is denoted by $\varepsilon$.
If $u \in \Sigma$, by $|u|$ we denote the length of $u$, and if $a \in \Sigma$,
by $|u|_a$ we denote the number of times the letter $a$ appears in $u$.
A language is a subset $L \subseteq \Sigma^*$. For a language $L \subseteq \Sigma^*$,
we set $L^+ = \{ u_1 \cdots u_n \mid \{ u_1, \ldots, u_n \}\subseteq L, n > 0\}$
and $L^* = L^+ \cup \{\varepsilon\}$. By $\mathbb N_0$, we denote the natural numbers with zero.

A finite (complete and deterministic\footnote{Here, only complete and deterministic automata are used, hence just called automata for short.}) \emph{automaton} $\mathcal A = (\Sigma, Q, \delta, q_0, F)$
over $\Sigma$ consists of a finite state set $Q$, a totally defined transition
function $\delta : Q \times \Sigma \to Q$, start state $q_0 \in Q$
and final state set $F \subseteq Q$. The transition function could
be extended to words in the usual way by setting, for $u \in \Sigma^*$, $a \in \Sigma$
and $q \in Q$, $\hat \delta(q, ua) = \delta(\hat \delta(q, u), a)$
and $\hat \delta(q, \varepsilon) = q$. In the following, we will drop the distinction
with $\delta$ and will denote this extension also by $\delta : Q \times \Sigma^* \to Q$.
The language \emph{recognized}, or \emph{accepted},
by $\mathcal A$ is $L(\mathcal A) = \{ u \in \Sigma^* \mid \delta(q_0, u) \in F\}$.

A \emph{permutation automaton} is an automaton such that 
for each letter $a \in \Sigma$, the function $\delta_a : Q \to Q$
given by $\delta_a(q) = \delta(q, a)$ for $q \in Q$
is bijective. We also say that the letter $a$ permutes the state set.
For a given permutation automaton $\mathcal A = (\Sigma, Q, \delta, q_0, F)$
and $a \in \Sigma$, the \emph{order of the letter $a$ in $\mathcal A$}
is the smallest number $n > 0$ such that $\delta(q, a^n) = q$
for all $q \in Q$. This equals
the order of the letter viewed as a permutation on $Q$.
The maximal order of any permutation
is given by Landau's function, which has growth rate $O(\exp(\sqrt{n\log n}))$~\cite{GaoMRY17,Landau1903}.
A language $L \subseteq \Sigma^*$ is a \emph{group language},
if there exists a permutation automaton~$\mathcal A$
such that $L = L(\mathcal A)$.
By $\mathcal G$ we denote the class of group languages.
This class could be also seen as a variety~\cite{Pin86,DBLP:reference/hfl/Pin97}.

We will also use \emph{regular expressions} occasionally, for the definition
of them, and also for a more detailed treatment of the above notions,
we refer to any textbook on formal language theory or theoretical computer science, for example~\cite{HopUll79}.

Let $\Sigma = \{a_1, \ldots, a_k\}$ be the alphabet. The map $\psi : \Sigma^{\ast} \to \mathbb N_0^k$
given by $\psi(w) = (|w|_{a_1}, \ldots, |w|_{a_k})$ is called the \emph{Parikh morphism}~\cite{DBLP:journals/jacm/Parikh66}.
If $L \subseteq \Sigma^*$, we set $\psi(L) = \{ \psi(w) \mid w \in L \}$.
For a given word $w \in \Sigma^{\ast}$, we define $\perm(w) := \{ u \in \Sigma^{\ast} : \psi(u) = \psi(w) \}$.
If $L \subseteq \Sigma^{\ast}$, then the \emph{commutative} (or \emph{permutational}) \emph{closure}
is $\perm(L) := \bigcup_{w\in L} \perm(w)$.
A language is called \emph{commutative},
if $\perm(L) = L$.

\begin{definition} The \emph{shuffle operation}, denoted by $\shuffle$, is defined by
\label{def:shuffle}
 \begin{multline*}
    u \shuffle v  = \{ w \in \Sigma^*  \mid  w = x_1 y_1 x_2 y_2 \cdots x_n y_n 
    \emph{ for some words } \\ x_1, \ldots, x_n, y_1, \ldots, y_n \in \Sigma^*
    \emph{ such that } u = x_1 x_2 \cdots x_n \emph{ and } v = y_1 y_2 \cdots y_n \},
 \end{multline*}
 for $u,v \in \Sigma^{\ast}$ and 
  $L_1 \shuffle L_2  := \bigcup_{x \in L_1, y \in L_2} (x \shuffle y)$ for $L_1, L_2 \subseteq \Sigma^{\ast}$.
\end{definition}

 In writing formulas
 without brackets, we suppose that the shuffle operation binds stronger than the set operations,
 and the concatenation operator has the strongest binding. 
 
  If $L_1, \ldots, L_n \subseteq \Sigma^*$, we set $\bigshuffle_{i=1}^n L_i = L_1 \shuffle \ldots \shuffle L_n$.
 The \emph{iterated shuffle} of $L \subseteq \Sigma^*$ 
 is $L^{\shuffle,*} = \bigcup_{n \ge 0} \bigshuffle_{i=1}^n L$.
 
 \begin{theorem}[Fernau et al. \cite{FerPSV2017}] 
 \label{thm:shuffle_properties}
  Let $U,V,W \subseteq \Sigma^*$. Then,
  \begin{enumerate}
  \item $U \shuffle V = V \shuffle U$ (commutative law);
  \item $(U \shuffle V) \shuffle W = U \shuffle (V \shuffle W)$ (associative law);
  \item $U \shuffle (V \cup W)  
   = (U \shuffle V) \cup (U \shuffle W)$ (distributive over union);
  \item $(U^{\shuffle,*})^{\shuffle,*} = U^{\shuffle,*}$;
  \item $(U\cup V)^{\shuffle,*} = U^{\shuffle,*} \shuffle V^{\shuffle,*}$;
  \item $(U \shuffle V^{\shuffle,*})^{\shuffle,*} = (U \shuffle (U \cup V)^{\shuffle,*}) \cup \{\varepsilon\}$.
  \end{enumerate}
 \end{theorem}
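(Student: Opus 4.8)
The plan is to set up two auxiliary tools and then dispatch the six items in increasing order of difficulty. The first tool is the standard recursive characterisation of the shuffle: $\varepsilon \shuffle v = \{v\}$, $u \shuffle \varepsilon = \{u\}$, and $(au)\shuffle(bv) = a\,(u \shuffle bv)\,\cup\, b\,((au)\shuffle v)$ for letters $a,b$ and words $u,v$, where $a L$ denotes prepending $a$ to every word of $L$; a short induction on $|u|+|v|$ shows this agrees with Definition~\ref{def:shuffle}. The second tool is the $n$-ary shuffle $w_1\shuffle\cdots\shuffle w_n$ (well defined by item~2), together with the observation that $U^{\shuffle,*}=\bigcup_{n\ge 0}\{\,w_1\shuffle\cdots\shuffle w_n \mid w_1,\dots,w_n\in U\,\}$ is the smallest language containing $\varepsilon$ that is closed under $\shuffle$; equivalently, $w\in U^{\shuffle,*}$ iff $w$ is an interleaving of finitely many (possibly zero) words of $U$.

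For item~3, distributivity over union is immediate from unfolding the language-level definition of $\shuffle$. For item~1 (commutativity) and item~2 (associativity), I would argue by induction on the total length of the words using the recursive characterisation — for item~2 showing that both sides coincide with the ternary shuffle $U\shuffle V\shuffle W$ — or, equivalently, give the direct combinatorial bijection on decompositions $w=x_1y_1\cdots x_ny_n$ (for commutativity, rewrite $w = \varepsilon\,x_1\,y_1\,x_2\cdots$ to swap the roles of the two words). These are folklore and I would keep them terse.

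Items~4 and~5 use the shuffle-monoid viewpoint. Since $\varepsilon\in U^{\shuffle,*}$, the inclusion $U^{\shuffle,*}\subseteq (U^{\shuffle,*})^{\shuffle,*}$ is the $n=1$ instance; conversely, by items~1 and~2, an interleaving of $m$ words of $U$ with an interleaving of $m'$ words of $U$ is again an interleaving of $m+m'$ words of $U$, so $U^{\shuffle,*}$ is $\shuffle$-closed and therefore absorbs its own iterated shuffle, giving item~4. For item~5, the inclusion $U^{\shuffle,*}\shuffle V^{\shuffle,*}\subseteq (U\cup V)^{\shuffle,*}$ holds because $U^{\shuffle,*},V^{\shuffle,*}\subseteq (U\cup V)^{\shuffle,*}$ and the right-hand side is $\shuffle$-closed by the same argument; the reverse inclusion holds because an interleaving of words each lying in $U\cup V$ can, by items~1 and~2, be regrouped as an interleaving of the subfamily coming from $U$ with the subfamily coming from $V$.

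The main obstacle is item~6, which I would prove by two inclusions with careful bookkeeping of which atom an infix originates from. Write $W=U\shuffle V^{\shuffle,*}$, so an element of $W$ is a word of $U$ interleaved with finitely many (possibly zero) words of $V$. For $\subseteq$: an element of $W^{\shuffle,n}$ with $n\ge 1$ is, by items~1 and~2, an interleaving of words $u_1,\dots,u_n\in U$ with finitely many words of $V$; singling out $u_1$ and absorbing $u_2,\dots,u_n$ together with all the $V$-words into $(U\cup V)^{\shuffle,*}$ exhibits it in $U\shuffle (U\cup V)^{\shuffle,*}$, while $\varepsilon$ (the $n=0$ term) must be adjoined explicitly since $\varepsilon$ need not lie in $U$. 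For $\supseteq$: $\varepsilon$ lies in every iterated shuffle; and given $w\in U\shuffle (U\cup V)^{\shuffle,*}$, decompose the $(U\cup V)^{\shuffle,*}$-part by origin into $p$ words of $U$ and some words of $V$, so that $w$ is an interleaving of $p+1$ words of $U$ (the distinguished one and those $p$) with finitely many words of $V$; loading all the $V$-words into the $V^{\shuffle,*}$-slot of one $W$-factor and giving the other $p$ factors the empty $V$-slot places $w$ in $W^{\shuffle,(p+1)}$, hence in $W^{\shuffle,*}$ (this uniformly covers the degenerate case $p=0$, where $W^{\shuffle,1}=U\shuffle V^{\shuffle,*}$ already suffices). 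The whole difficulty here is organisational: making precise the \emph{decompose by origin and redistribute} step, and exploiting that the inner $V^{\shuffle,*}$ may absorb an arbitrary number of $V$-factors — including none — in any single slot.
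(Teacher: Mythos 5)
The paper does not prove Theorem~\ref{thm:shuffle_properties}: it is quoted as a known result of Fernau et al.~\cite{FerPSV2017} and stated without argument, so there is no in-paper proof to compare against. Your proposal is correct. The key invariant you isolate --- $w\in L^{\shuffle,*}$ iff $w$ is an interleaving of finitely many (possibly zero) words of $L$, equivalently $L^{\shuffle,*}$ is the smallest $\shuffle$-closed language containing $\{\varepsilon\}\cup L$ --- is the right one and makes items~4 and~5 essentially immediate. Your item~6 argument is the substantive part and it is sound: for $\subseteq$ you flatten an element of $W^{\shuffle,n}$ ($n\ge 1$) into an interleaving of $n$ words of $U$ with finitely many words of $V$, peel off one $U$-word, and absorb the rest into $(U\cup V)^{\shuffle,*}$; for $\supseteq$ you split the $(U\cup V)^{\shuffle,*}$-component by origin and regroup into $p+1$ factors of $W=U\shuffle V^{\shuffle,*}$, funnelling all the $V$-words into a single factor's $V^{\shuffle,*}$-slot (using that this slot may absorb zero words). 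The $\varepsilon$ cases --- $W^{\shuffle,0}$ on the left, the explicit $\cup\{\varepsilon\}$ on the right --- are handled. One small ordering caution: you invoke item~2 when introducing the $n$-ary shuffle notation at the outset, so you should state explicitly that commutativity and associativity are established first by the length induction, and only then is the $n$-ary shuffle well defined and available for the arguments proving items~4--6.
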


The next
result is taken from~\cite{FerPSV2017} and gives 
equations like $\perm(UV) = \perm(U) \shuffle \perm(V)$
or $\perm(U^*) = \perm(U)^{\shuffle,*}$ for $U,V \subseteq \Sigma^*$.
A \emph{semiring} is an algebraic structure $(S, +, \cdot, 0, 1)$
such that $(S, +, 0)$ forms a commutative monoid, $(S, \cdot, 1)$ is a monoid
and we have $a\cdot (b + c) = a\cdot b + a\cdot c$, $(b+c)\cdot a = b\cdot a + c\cdot a$
and $0 \cdot a = a \cdot 0 = 0$.

\begin{theorem}[Fernau et al.  \cite{FerPSV2017}]
\label{thm:perm_semiring_hom}
 $\perm : \mathcal P(\Sigma^*) \to \mathcal P(\Sigma^*)$
 is a semiring morphism from the semiring
 $(\mathcal P(\Sigma^*), \cup, \cdot, \emptyset, \{\varepsilon\})$, that also respects the iterated catenation resp. iterated shuffle operation,
 to the semiring $(\mathcal P(\Sigma^*), \cup, \shuffle, \emptyset, \{\varepsilon\})$.
\end{theorem}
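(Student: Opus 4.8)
The plan is to verify the semiring-morphism conditions one at a time; all of them are immediate except the interaction of catenation with shuffle, which carries the only real combinatorial content. The identities $\perm(\emptyset) = \emptyset$ and $\perm(\{\varepsilon\}) = \{\varepsilon\}$ are clear (the latter since $\varepsilon$ is the unique word of Parikh image $(0,\ldots,0)$), and $\perm(U \cup V) = \perm(U) \cup \perm(V)$ is immediate from the defining formula $\perm(L) = \bigcup_{w \in L} \perm(w)$, which also shows that $\perm$ commutes with \emph{arbitrary} unions. Granting the catenation case, the iterated operations then come for free: writing $U^* = \bigcup_{n \ge 0} U^n$ with $U^0 = \{\varepsilon\}$, an induction using $\perm(U^{n+1}) = \perm(U^n U) = \perm(U^n) \shuffle \perm(U)$ yields $\perm(U^n) = \bigshuffle_{i=1}^n \perm(U)$, and hence $\perm(U^*) = \bigcup_{n \ge 0} \bigshuffle_{i=1}^n \perm(U) = \perm(U)^{\shuffle,*}$.

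So the heart of the matter is the identity $\perm(UV) = \perm(U) \shuffle \perm(V)$, which I would prove by double inclusion. For $\supseteq$, a word $w \in \perm(U) \shuffle \perm(V)$ lies in $u' \shuffle v'$ for some $u', v'$ with $\psi(u') = \psi(x)$, $x \in U$, and $\psi(v') = \psi(y)$, $y \in V$; since an interleaving does not change letter counts, $\psi(w) = \psi(u') + \psi(v') = \psi(xy)$ with $xy \in UV$, so $w \in \perm(xy) \subseteq \perm(UV)$. For $\subseteq$, take $w \in \perm(xy)$ with $x \in U$, $y \in V$, so that $|w|_a = |x|_a + |y|_a$ for every letter $a$. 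Colour the occurrences of each letter $a$ in $w$ so that exactly $|x|_a$ of them are blue and $|y|_a$ are red; reading off the blue subword of $w$ gives a word $u'$ with $\psi(u') = \psi(x)$, hence $u' \in \perm(x) \subseteq \perm(U)$, and the red subword gives $v'$ with $\psi(v') = \psi(y)$, hence $v' \in \perm(V)$. By construction $w$ is an interleaving of $u'$ and $v'$, i.e. $w \in u' \shuffle v' \subseteq \perm(U) \shuffle \perm(V)$.

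The only place where genuine combinatorics enters is this colouring step in the $\subseteq$ inclusion; everything else is bookkeeping with the definitions. A slightly slicker packaging uses $\perm(L) = \psi^{-1}(\psi(L))$ together with the fact that $\psi$ sends both catenation and shuffle to the Minkowski sum on $\mathbb{N}_0^k$ (for shuffle, because every $w \in u \shuffle v$ satisfies $\psi(w) = \psi(u) + \psi(v)$ while $uv \in u \shuffle v$ realises that value); then $\perm(UV) = \psi^{-1}(\psi(U) + \psi(V))$, and the claim reduces to showing that $\perm(U) \shuffle \perm(V)$ is already commutative — which is precisely what the colouring argument establishes. I expect that commutativity-of-the-shuffle observation to be the one point needing care to phrase cleanly, though it is not deep.
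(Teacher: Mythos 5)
The paper does not give a proof of this statement: it is quoted verbatim from Fernau et al.~\cite{FerPSV2017} and used as a black box, so there is no in-paper argument to compare yours against. That said, your reconstruction is correct and is the standard one. The trivial clauses ($\emptyset$, $\{\varepsilon\}$, and distribution over arbitrary unions straight from $\perm(L)=\bigcup_{w\in L}\perm(w)$) are handled properly, the only genuine content is $\perm(UV)=\perm(U)\shuffle\perm(V)$, and your two-colouring of the letter occurrences of $w$ is exactly the right device for the $\subseteq$ direction: fixing, for each $a\in\Sigma$, which $|x|_a$ of the $|w|_a$ occurrences of $a$ are blue produces subwords $u',v'$ with $\psi(u')=\psi(x)$, $\psi(v')=\psi(y)$, and $w\in u'\shuffle v'$ by construction (the shuffle definition permits empty blocks, so no edge-case issue at the ends). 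The induction $\perm(U^{n})=\bigshuffle_{i=1}^{n}\perm(U)$ plus commutation with arbitrary unions then gives $\perm(U^{*})=\perm(U)^{\shuffle,*}$, and the same induction with shuffle in place of catenation gives the corresponding statement for $\perm(U^{\shuffle,*})$. Your alternative packaging via $\perm=\psi^{-1}\circ\psi$ and the observation that $\psi$ sends both $\cdot$ and $\shuffle$ to Minkowski sum is also valid; it does not avoid the colouring argument, which reappears as the proof that $\psi^{-1}(A)\shuffle\psi^{-1}(B)=\psi^{-1}(A+B)$, but it does organise the bookkeeping more symmetrically and makes Theorem~3 of the paper (the purely shuffle-to-shuffle version) an immediate corollary.
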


As $\psi(U \shuffle V) = \psi(UV)$ and $\psi(U^*) = \psi(U^{\shuffle,*})$,
we also find the next result.

\begin{theorem} 
\label{thm:perm_semiring_hom_shuffle_only}
 $\perm : \mathcal P(\Sigma^*) \to \mathcal P(\Sigma^*)$
 is a semiring morphism from the semiring
 $(\mathcal P(\Sigma^*), \cup, \shuffle, \emptyset, \{\varepsilon\})$
 to the semiring $(\mathcal P(\Sigma^*), \cup, \shuffle, \emptyset, \{\varepsilon\})$
 that also respects the iterated shuffle operation.
\end{theorem}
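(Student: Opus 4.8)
The plan is to verify directly the five identities that make $\perm$ a semiring morphism into $(\mathcal P(\Sigma^*), \cup, \shuffle, \emptyset, \{\varepsilon\})$ respecting the iterated shuffle, namely $\perm(\emptyset)=\emptyset$, $\perm(\{\varepsilon\})=\{\varepsilon\}$, $\perm(U\cup V)=\perm(U)\cup\perm(V)$, $\perm(U\shuffle V)=\perm(U)\shuffle\perm(V)$, and $\perm(U^{\shuffle,*})=\perm(U)^{\shuffle,*}$. The first three are immediate from the definition of $\perm$ as a union of the sets $\perm(w)$ over $w$ in the language, and in any event are already part of Theorem~\ref{thm:perm_semiring_hom}. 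The substance is in the fourth and fifth identities, and for these I would not argue from scratch but transfer the companion statements $\perm(UV)=\perm(U)\shuffle\perm(V)$ and $\perm(U^*)=\perm(U)^{\shuffle,*}$ from Theorem~\ref{thm:perm_semiring_hom}, using the Parikh identities $\psi(U\shuffle V)=\psi(UV)$ and $\psi(U^{\shuffle,*})=\psi(U^*)$ recorded just before the statement.

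The one auxiliary fact I would isolate first is that $\perm$ factors through the Parikh map: $u\in\perm(L)$ holds iff $\psi(u)=\psi(w)$ for some $w\in L$, i.e.\ iff $\psi(u)\in\psi(L)$, so $\perm(L)=\psi^{-1}(\psi(L))$ and consequently $\psi(L_1)=\psi(L_2)$ implies $\perm(L_1)=\perm(L_2)$. This is what licenses replacing a language by any other with the same Parikh image inside an application of $\perm$.

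Granting this, the multiplicative clause is one line: every word of $u\shuffle v$ is a rearrangement of $uv$, so $\psi(U\shuffle V)=\psi(UV)$, hence $\perm(U\shuffle V)=\perm(UV)=\perm(U)\shuffle\perm(V)$, the last step by Theorem~\ref{thm:perm_semiring_hom}. Likewise $\psi(U^{\shuffle,*})=\bigcup_{n\ge 0}\psi(\bigshuffle_{i=1}^n U)=\bigcup_{n\ge 0}\psi(U^n)=\psi(U^*)$, since shuffling $n$ words yields the same Parikh vectors as concatenating them; therefore $\perm(U^{\shuffle,*})=\perm(U^*)=\perm(U)^{\shuffle,*}$, again by Theorem~\ref{thm:perm_semiring_hom}.

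I do not expect a real obstacle: the statement is essentially a corollary of Theorem~\ref{thm:perm_semiring_hom} together with the elementary observation that shuffle and concatenation induce the same operation on Parikh images, and similarly for their iterations. The only points worth a careful sentence are the factorization of $\perm$ through $\psi$, which makes the transfer legitimate, and the trivial boundary cases $\perm(\emptyset)=\emptyset$ and $\perm(\{\varepsilon\})=\{\varepsilon\}$. A fully self-contained alternative would prove $\perm(U\shuffle V)=\perm(U)\shuffle\perm(V)$ and the iterated version directly from Parikh vectors using idempotence of $\perm$ and $\psi(X\shuffle Y)=\psi(X)+\psi(Y)$, but routing through Theorem~\ref{thm:perm_semiring_hom} as the text clearly intends is the shorter path.
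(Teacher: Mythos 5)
Your proposal is correct and follows essentially the same route as the paper: the text derives Theorem~\ref{thm:perm_semiring_hom_shuffle_only} exactly by noting $\psi(U\shuffle V)=\psi(UV)$ and $\psi(U^{\shuffle,*})=\psi(U^*)$ and then invoking Theorem~\ref{thm:perm_semiring_hom}, which is precisely the factor-through-Parikh transfer you spell out. You have merely made explicit the auxiliary fact $\perm(L)=\psi^{-1}(\psi(L))$ that the paper leaves implicit.
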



In~\cite{DBLP:journals/iandc/GomezGP13} it was shown that the commutative closure 
is regularity-preserving on $\mathcal G$ using combinatorial arguments.
In~\cite{DBLP:conf/dcfs/Hoffmann20} an automaton was constructed, yielding
explicit bounds for the number of states needed in any recognizing automaton.

\begin{theorem}[\cite{DBLP:conf/dcfs/Hoffmann20}] 
\label{thm:perm_grp_lang}
 Let $\Sigma = \{a_1, \ldots, a_k\}$
 and $\mathcal A = (\Sigma, Q, \delta, q_0, F)$
 be a permutation automaton.
 Then
 $
  \perm(L(\mathcal A))
 $
 is recognizable by an automaton
 with at most $\left( |Q|^k \prod_{i=1}^k L_i \right)$
 states, where $L_i$ for $i \in \{1,\ldots,k\}$
 denotes the order of~$a_i$. 
 Furthermore, the recognizing automaton is computable.
\end{theorem}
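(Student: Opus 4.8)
The plan is to reduce the statement to a question about the Parikh image of $L(\mathcal A)$ and then to build an explicit product automaton: a tuple of $k$ counters, one per letter, the $i$-th having at most $|Q|\cdot L_i$ states, so that the total number of states is $\prod_{i=1}^k(|Q|L_i)=|Q|^k\prod_iL_i$.

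Since $\perm(L)=\psi^{-1}(\psi(L))$ for every $L\subseteq\Sigma^*$, the language $\perm(L(\mathcal A))$ is commutative and equals $\psi^{-1}(S)$ for $S:=\psi(L(\mathcal A))\subseteq\mathbb N_0^k$; hence a recognizing automaton may be taken so that the state reached on $u$ depends only on $\psi(u)$, and the only thing to understand is how membership of a vector in $S$ reacts to increasing one coordinate. I would introduce, for each letter $a_i$, the equivalence $m\sim_i m'$ on $\mathbb N_0$ defined by requiring $p\in S\iff p'\in S$ whenever $p,p'$ agree off coordinate $i$ and have $i$-th coordinates $m$ and $m'$. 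Then I would take the target automaton to be the deterministic product of the $k$ counters $\mathbb N_0/{\sim_i}$, with transitions acting coordinatewise (reading $a_j$ advances only the $j$-th counter) and with the set of accepting tuples — in general not a product — read off from $S$. A one-line hybrid argument (change the coordinates of a vector one at a time) shows this product automaton recognizes $\perm(L(\mathcal A))$, so everything comes down to bounding the index of each $\sim_i$ by $|Q|\cdot L_i$.

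The two facts driving that bound are: (i) since $\delta_{a_i}$ has order $L_i$, a factor $a_i^{L_i}$ may be freely inserted into or deleted from any word without altering the run of $\mathcal A$, so in particular $p\in S\Rightarrow p+L_ie_i\in S$ for all $p$, where $e_i$ is the $i$-th unit vector; and (ii) a pigeonhole/pumping argument caps the threshold past which the $i$-th counter is periodic with period dividing $L_i$. For (ii): take an accepting word $w$ with $\psi(w)=p+L_ie_i$ and $p_i$ large, and attach to each occurrence of $a_i$ in $w$ the state of $\mathcal A$ just before that occurrence together with the current counts of all letters modulo their respective orders; once the occurrences outnumber the possible such tuples, two equal tuples delimit a factor $z$ that is a loop of $\mathcal A$ and in which every letter count is a multiple of the corresponding order. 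Deleting $z$ keeps the word accepting, and re-inserting suitable invisible blocks $a_\ell^{\,mL_\ell}$ then restores the Parikh vector to exactly $p$; hence $p\in S$, which together with (i) gives $m\sim_i m+L_i$ for $m$ above the threshold, and thus finitely many $\sim_i$-classes.

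The main obstacle is passing from this crude finiteness to the sharp per-coordinate bound $|Q|\cdot L_i$: the naive loop-removal above perturbs several coordinates at once and its threshold is governed by $|Q|\prod_\ell L_\ell$ rather than $|Q|L_i$, so one must run the pigeonhole far more economically — fixing a worst-case context for coordinate $i$, or working directly with the (finite) transition group of $\mathcal A$ and exploiting reversibility of its transitions together with the exact order $L_i$ of each letter; this is the step where the permutation-automaton hypothesis is genuinely used. Once the per-coordinate bound is in hand, the global count $|Q|^k\prod_iL_i$ follows by multiplying, and effectiveness (the ``furthermore'') is immediate, since $S$ on the relevant bounded region, the equivalences $\sim_i$, and the set of accepting tuples are all computable from $\mathcal A$.
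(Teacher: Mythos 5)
Your reduction to the Parikh image and the product-of-counters picture is essentially the right skeleton, and a version of it does appear in the cited work (the paper's appendix develops it as the ``state label method''). But there are two problems, one minor and one that is a genuine, unclosed gap at the heart of the theorem.

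The minor issue: the relation $\sim_i$ as you define it (two values $m, m'$ are equivalent iff $p \in S \Leftrightarrow p' \in S$ for all $p, p'$ agreeing off coordinate $i$) is not a right congruence for the successor map, so the ``product of counters'' $\prod_i \mathbb N_0/{\sim_i}$ is not a well-defined automaton: $m \sim_i m'$ need not imply $m{+}1 \sim_i m'{+}1$. You want the Nerode-style refinement, $m \sim_i m'$ iff for all $n \ge 0$ and all contexts, the vectors with $i$-th coordinates $m+n$ and $m'+n$ agree in membership in $S$. With this fix the hybrid argument you sketch does show the product automaton recognizes $\perm(L(\mathcal A))$.

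The real gap is the one you yourself flag: you never prove the per-coordinate bound $|\mathbb N_0/{\sim_i}| \le |Q| L_i$, which is exactly where the quantitative content of the theorem lives. Your loop-removal pigeonhole, which tags each occurrence of $a_i$ with the current state and all letter counts modulo their orders, gives a periodicity threshold on the order of $|Q|\prod_\ell L_\ell$, not $(|Q|-1)L_i$, and you offer only a vague suggestion (``run the pigeonhole more economically,'' ``work directly with the transition group'') in place of an argument. The paper fills this gap by tracking the evolution of the full set $\sigma(p) = \{\delta(q_0, u) : \psi(u) = p\}$ along a ray in direction $e_i$: since $a_i$ permutes $Q$, one has $\delta(\sigma(p), a_i) \subseteq \sigma(p + e_i)$ and therefore $|\sigma(p)| \le |\sigma(p + e_i)|$, so the labels can only grow in cardinality; there are at most $|Q|-1$ growth events, each separated by at most $L_i$ steps (else one cycles and can never grow again), and after the last growth the label set cycles with period dividing $L_i$. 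This monotone-growth argument is precisely what converts ``permutation automaton'' into the sharp $(|Q|-1)L_i$ index and $L_i$ period, hence $|Q| L_i$ classes per coordinate, and it is missing from your proposal. Without it you have an argument for regularity (with a much worse bound), not for the stated state count.
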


\section{Shuffle Languages over Arbitrary Language Classes}

Here, we introduce shuffle languages
over arbitrary language classes
and proof a normal form result.

\begin{definition}
 Let $\mathcal L$ be a class of languages.
 \begin{enumerate}
     \item $\SE(\mathcal L)$ is the closure of $\mathcal L$ under shuffle, iterated shuffle, union, concatenation
      and Kleene star.
      
     \item $\Shuf(\mathcal L)$ is the closure of $\mathcal L$ under shuffle, iterated shuffle and union.
 \end{enumerate}
\end{definition}

 For $\mathcal L_{Alp} = \{ \emptyset, \{\varepsilon\} \} \cup \{ \{a\} \mid a \in \Sigma \mbox{ for some alphabet } \Sigma \}$
 and $\mathcal L_{Fin} = \{ L \mid L \subseteq \Sigma^* \mbox{ for some alphabet and $L$ is finite }\}$
 the resulting closures were investigated in \cite{FerPSV2017,DBLP:journals/tcs/Jantzen81,DBLP:journals/tcs/Jantzen85,DBLP:journals/tcs/JedrzejowiczS01}. 
 Note that $\SE(\mathcal L_{Alp}) = \SE(\mathcal L_{Fin})$.
 By Theorem~\ref{thm:perm_semiring_hom}, we can compute a shuffle expression over $\mathcal L_{Alp}$
 for the commutative closure of any regular language by rewriting a regular expression
 and vice versa.
 Hence, the class $\Shuf(\mathcal L_{Alp})$
 equals the commutative closure of all
 regular languages. So, $\Shuf(\mathcal L_{Alp}) \ne \Shuf(\mathcal L_{Fin})$.

\begin{proposition}
\label{prop:shuffle_exp_NF}
 Let $L \in \Shuf(\mathcal L)$. Then, $L$ is a finite union
 of languages of the form
 \[
  L_1 \shuffle \ldots \shuffle L_{k} \shuffle L_{k+1}^{\shuffle,*} \shuffle \ldots \shuffle L_n^{\shuffle,*}
 \]
 with $1 \le k \le n$ and $L_i \in \mathcal L$ for $i \in \{1,\ldots,n\}$
 and this expression is computable.
\end{proposition}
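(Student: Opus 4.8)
The plan is to argue by structural induction on the way $L$ is built inside $\Shuf(\mathcal L)$, that is, from atoms of $\mathcal L$ by finitely many applications of $\shuffle$, $\cup$ and $(\cdot)^{\shuffle,*}$. Call a language a \emph{block} if it has the displayed shape $L_1 \shuffle \cdots \shuffle L_{k} \shuffle L_{k+1}^{\shuffle,*} \shuffle \cdots \shuffle L_n^{\shuffle,*}$ with $1 \le k \le n$ and all $L_i \in \mathcal L$. I want to show that every $L \in \Shuf(\mathcal L)$ is a finite union of blocks, with the list of blocks computable from the given expression. The base case $L \in \mathcal L$ is immediate (take $k = n = 1$), and the union step needs no argument. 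For the shuffle step, if $L = \bigcup_i A_i$ and $L' = \bigcup_j B_j$ are finite unions of blocks, then distributivity of $\shuffle$ over $\cup$ (Theorem~\ref{thm:shuffle_properties}(3)) gives $L \shuffle L' = \bigcup_{i,j} (A_i \shuffle B_j)$, and the commutative and associative laws (Theorem~\ref{thm:shuffle_properties}(1),(2)) let me reorder each $A_i \shuffle B_j$ so that all non-iterated $\mathcal L$-factors come first and all iterated factors last, making it again a block (its parameter $k$ staying $\ge 1$). So these steps are routine bookkeeping.

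The iterated-shuffle step carries the content. Let $L = (\bigcup_{i=1}^m A_i)^{\shuffle,*}$ with each $A_i$ a block. By Theorem~\ref{thm:shuffle_properties}(5), applied $m-1$ times, $L = A_1^{\shuffle,*} \shuffle \cdots \shuffle A_m^{\shuffle,*}$, so by the shuffle step it suffices to rewrite $A^{\shuffle,*}$ as a finite union of blocks for a single block $A$. Write $A = U \shuffle V^{\shuffle,*}$ with $U = L_1 \shuffle \cdots \shuffle L_k$ and, again by Theorem~\ref{thm:shuffle_properties}(5), $L_{k+1}^{\shuffle,*} \shuffle \cdots \shuffle L_n^{\shuffle,*} = V^{\shuffle,*}$ where $V = L_{k+1} \cup \cdots \cup L_n$. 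Then Theorem~\ref{thm:shuffle_properties}(6) yields $A^{\shuffle,*} = \bigl(U \shuffle (U \cup V)^{\shuffle,*}\bigr) \cup \{\varepsilon\}$, and Theorem~\ref{thm:shuffle_properties}(5) splits $(U \cup V)^{\shuffle,*} = U^{\shuffle,*} \shuffle V^{\shuffle,*}$, so everything reduces to handling $U^{\shuffle,*} = (L_1 \shuffle \cdots \shuffle L_k)^{\shuffle,*}$, the iterated shuffle of a pure shuffle product of atoms. This is where I expect the main obstacle: such a language enforces a diagonal coupling between the multiplicities of the $L_i$ that a single block, whose iterated factors act independently, cannot capture. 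I would resolve it by broadening the notion of block so that the iterated factors may themselves be finite shuffle products of $\mathcal L$-languages, then checking that this broadened form survives the union and shuffle steps; the alternative is a dedicated combinatorial argument that peels off one factor of $U$ at a time, repeatedly invoking Theorem~\ref{thm:shuffle_properties}(6).

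For computability, every rewriting used above — distributing shuffles over unions, reordering $\shuffle$-factors, and the identities of Theorem~\ref{thm:shuffle_properties} — is effective, so the list of blocks is obtained by an explicit recursion on the input expression, the running time being governed by how the number of blocks grows. The spurious $\{\varepsilon\}$ summands produced by Theorem~\ref{thm:shuffle_properties}(6) are either absorbed (writing $\{\varepsilon\} = \emptyset^{\shuffle,*}$ when $\emptyset \in \mathcal L$) or carried along as a harmless extra term. I expect most of the write-up to go into making the iterated-shuffle case, and the precise shape of the blocks, watertight.
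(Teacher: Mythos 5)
Your reduction to the iterated-shuffle case is correct, and the obstruction you pinpoint at $(L_1 \shuffle \cdots \shuffle L_k)^{\shuffle,*}$ is genuine, not a piece of bookkeeping to be smoothed over. Take $\mathcal L = \{\{a\},\{b\}\}$: then $\{a\}\shuffle\{b\} = \{ab,ba\}\in\Shuf(\mathcal L)$ and $(\{a\}\shuffle\{b\})^{\shuffle,*}=\{w\in\{a,b\}^* : |w|_a=|w|_b\}$. A single block $L_1\shuffle\cdots\shuffle L_k\shuffle L_{k+1}^{\shuffle,*}\shuffle\cdots\shuffle L_n^{\shuffle,*}$ with every $L_i\in\{\{a\},\{b\}\}$ is, for some fixed $(p,q)$, one of the four Parikh rectangles $\{w : |w|_a=p, |w|_b=q\}$, $\{w : |w|_a\ge p, |w|_b=q\}$, $\{w : |w|_a=p, |w|_b\ge q\}$, $\{w : |w|_a\ge p, |w|_b\ge q\}$, and no finite union of such rectangles in $\mathbb N_0^2$ equals the diagonal $\{(m,m):m\ge 0\}$. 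So the proposition, read literally with every $L_i\in\mathcal L$, fails for this $\mathcal L$; and, as you observed, the identities of Theorem~\ref{thm:shuffle_properties} cannot help, since item~(6) only rewrites $(U\shuffle V^{\shuffle,*})^{\shuffle,*}$ and thus needs an iterated factor already inside. Of your two candidate repairs, the first is the right one: enlarge the block form so that each iterated factor may itself be a finite shuffle product of $\mathcal L$-atoms. With that change, and writing a block as $C = U\shuffle V^{\shuffle,*}$ exactly as you did, item~(6) yields $C^{\shuffle,*}=\bigl(A_1\shuffle\cdots\shuffle A_k\shuffle (A_1\shuffle\cdots\shuffle A_k)^{\shuffle,*}\shuffle B_1^{\shuffle,*}\shuffle\cdots\shuffle B_m^{\shuffle,*}\bigr)\cup\{\varepsilon\}$, which is again of the enlarged form, and the induction closes. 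Your second alternative (peeling off one factor of $U$ at a time) cannot return to the narrow form, by the counterexample above. Jantzen's Theorem~3.1, cited by the paper, dispenses with this enlargement only because $\mathcal L_{Fin}$ is closed under shuffle, which collapses a shuffle product of atoms back to a single atom; that is the hypothesis that silently disappears in the present, more general setting.

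By contrast, the paper's own proof is the single sentence that Theorem~\ref{thm:shuffle_properties} ``provides an inductive proof'', which elides exactly the step you struggled with, and its hedge ``we only get the form as stated'' accounts for why the several iterated factors cannot be merged into one, but not for the deeper issue that without closure under shuffle each iterated factor can, in general, not be confined to a single member of $\mathcal L$ at all. Your analysis is therefore both correct and more honest than the source. What remains is to state the enlarged block form explicitly, run the induction as you outlined (watching the side case $m=0$, where $U^{\shuffle,*}=(U\shuffle U^{\shuffle,*})\cup\{\varepsilon\}$ restores a non-iterated factor), and then note that the downstream use at $\mathcal L=\mathcal G$ must also be adapted: an iterated factor $B_j^{\shuffle,*}$ with $B_j$ a shuffle of several group languages is no longer directly in the scope of Theorem~\ref{thm:perm_grp_it_shuffle}, which is stated for a single permutation automaton, so the regularity of $\perm(B_j^{\shuffle,*})$ needs its own (short) argument.
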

\begin{proof}
 Theorem~\ref{thm:shuffle_properties} provides an inductive proof of Proposition~\ref{prop:shuffle_exp_NF}.
 Note that a similar statement has been shown in~\cite[Theorem 3.1]{DBLP:journals/tcs/Jantzen81}
 for $\Shuf(\mathcal L_{Fin})$.
 However, as we do not assume that $\mathcal L$ is closed under shuffle or 
 union, we only get the form as stated.~\qed
\end{proof}

\begin{remark}
 By Theorem~\ref{thm:shuffle_properties},
 we can write the languages in Proposition~\ref{prop:shuffle_exp_NF} also in the form
 $
 L_1 \shuffle \ldots \shuffle L_{k} \shuffle ( L_{k+1} \cup \ldots \cup L_n )^{\shuffle,*}.
 $
 So, if $\mathcal L$ is closed under union, which is the case for languages from $\mathcal G$
 over a common alphabet,
 we can write the languages in $\Shuf(\mathcal L)$ as a finite union
 of languages of the form $L_1 \shuffle \ldots \shuffle L_{n-1} \shuffle L_n^{\shuffle,*}$
 with $L_1, \ldots, L_n \in \mathcal L$.
\end{remark}

Lastly, with Theorem~\ref{thm:perm_semiring_hom} and Theorem~\ref{thm:perm_semiring_hom_shuffle_only}, we 
show that up to permutational equivalence
$\SE(\mathcal L)$ and $\Shuf(\mathcal L)$
give the same languages.

\begin{proposition}
\label{prop:SE_Shuf_perm}
 Let $\mathcal L$ be any class of languages.
 Suppose $L \in \SE(\mathcal L)$.
 Then, we can compute $L' \in \Shuf(L)$
 such that $\perm(L) = \perm(L')$.
\end{proposition}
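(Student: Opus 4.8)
The plan is to induct on the structure of the expression $L \in \SE(\mathcal L)$, pushing the $\perm$ operator inward and replacing sequential operators by parallel ones. The base case is trivial: if $L \in \mathcal L$, take $L' = L$, which lies in $\Shuf(L)$ (as the closure of a single language under the shuffle-family operations trivially contains $L$ itself), and $\perm(L) = \perm(L')$. For the inductive step, suppose $L$ is built from immediate subexpressions $L$ (abusing notation) $= M \circ N$ or $L = M^{\star}$ for one of the operations, and by induction we already have $M' \in \Shuf(M)$ and $N' \in \Shuf(N)$ with $\perm(M) = \perm(M')$ and $\perm(N) = \perm(N')$. I will treat each operation in turn.

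For union, $\perm(M \cup N) = \perm(M) \cup \perm(N) = \perm(M') \cup \perm(N') = \perm(M' \cup N')$ by Theorem~\ref{thm:perm_semiring_hom} (or directly), and $M' \cup N' \in \Shuf(M \cup N)$ since $\Shuf$ is closed under union and contains both $M$ and $N$. For shuffle, the same argument with Theorem~\ref{thm:perm_semiring_hom_shuffle_only} gives $\perm(M \shuffle N) = \perm(M') \shuffle \perm(N') = \perm(M' \shuffle N')$, and $M' \shuffle N' \in \Shuf(M \cup N)$. The interesting cases are the sequential operations: for concatenation, Theorem~\ref{thm:perm_semiring_hom} gives $\perm(M \cdot N) = \perm(M) \shuffle \perm(N) = \perm(M') \shuffle \perm(N') = \perm(M' \shuffle N')$, so we simply replace the product by a shuffle and again land in $\Shuf(M \cup N)$. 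Likewise, for Kleene star, Theorem~\ref{thm:perm_semiring_hom} (the clause that $\perm$ respects iterated catenation / iterated shuffle) gives $\perm(M^{*}) = \perm(M)^{\shuffle,*} = \perm(M')^{\shuffle,*} = \perm((M')^{\shuffle,*})$, and $(M')^{\shuffle,*} \in \Shuf(M)$; the iterated shuffle $M^{\shuffle,*}$ is handled identically via Theorem~\ref{thm:perm_semiring_hom_shuffle_only}. One subtlety to record carefully is that $\Shuf(L)$ as written in the statement means the closure of the \emph{set of atoms of} $L$ (or, equivalently, closing under the operations starting from all the leaf languages appearing in the expression $L$); since $M' \in \Shuf(M)$ and $N' \in \Shuf(N)$ are built from atoms of $M$ and of $N$ respectively, and these are among the atoms of $L$, every replacement expression we form stays inside $\Shuf(L)$. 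Computability is immediate because each step above is an effective syntactic rewriting of the expression.

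There is essentially no hard part here — the proposition is a direct consequence of the two semiring-morphism theorems, which do all the real work of exchanging concatenation for shuffle and Kleene star for iterated shuffle. The only thing requiring mild care is bookkeeping: fixing a precise reading of the notation $\Shuf(L)$ for a language $L$ (versus a class $\mathcal L$), so that the inductive invariant "$L'$ is obtained from the atoms of $L$ by shuffle-family operations" is stated and maintained cleanly, and checking that the rewriting terminates (it does, since each rule strictly decreases the number of concatenation and Kleene-star nodes without introducing new ones).
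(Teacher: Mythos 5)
Your proof is correct and takes essentially the same approach as the paper: apply the semiring-morphism facts (Theorems~\ref{thm:perm_semiring_hom} and~\ref{thm:perm_semiring_hom_shuffle_only}) to push $\perm$ inward, replace concatenation by shuffle and Kleene star by iterated shuffle, and argue by structural induction. The paper's own proof is a one-liner doing exactly this rewriting. Your observation about the notation $\Shuf(L)$ is a genuine typo in the statement — it should read $\Shuf(\mathcal L)$, which is what the paper's proof actually produces (it concludes $L' \in \Shuf(\mathcal G)$) and is what your "closure of the atoms of the expression" reading recovers.
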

\begin{proof} 
By Theorem~\ref{thm:perm_semiring_hom}
 and Theorem~\ref{thm:perm_semiring_hom_shuffle_only},
 we have, for $U, V \subseteq \Sigma^*$,
 $
  \perm(U \shuffle V) = \perm(U) \shuffle \perm(V)
   = \perm(U \cdot V)
 $
 and $\perm(U^{\shuffle,*}) = \perm(U)^{\shuffle, *} = \perm(U^*)$.
 So, inductively, for $L \in \SE(\mathcal G)$,
 by replacing every concatenation
 with the shuffle and every Kleene star
 with the iterated shuffle,
 we find $L' \in \Shuf(\mathcal G)$
 such that $\perm(L) = \perm(L')$.~\qed 
\end{proof} 

\section{The Commutative Closure on $\SE(\mathcal G)$}

By Proposition~\ref{prop:SE_Shuf_perm},
the commutative closure on $\SE(\mathcal L)$
for any language class $\mathcal L$ equals the commutative closure of $\Shuf(\mathcal L)$.
Theorem~\ref{thm:perm_grp_it_shuffle} of this section, stating that the commutative closure
combined with the iterated shuffle is regular,
is the main ingredient in our proof that the commutative closure is regularity-preserving on $\SE(\mathcal G)$ and the most demanding result in this work.

Note that, in general,
   this combined operation does not preserves regularity,
   as shown by $\perm(\{ab\})^{\shuffle, *} = \{ w \in \{a,b\}^* \mid |w|_a = |w|_b \}$.

\begin{toappendix}

First, we collect some results that we will need later.
Thereafter, we introduce the state label method in a more general formulation.
Then, we apply it and give the full proof of Theorem~\ref{thm:perm_grp_it_shuffle}.

   The method is technical and non-trivial, and the reader might notice, what I find quite remarkable,
   that the main difficulty in the results presented in this work
   poses Theorem~\ref{thm:perm_grp_it_shuffle}, for which the state label method was developed. 
   All the other results follow less or more readily.
   
   In the following, we call a \emph{semi-automaton} a tupel $\mathcal A = (\Sigma, Q, \delta)$
   with $\Sigma$ the input alphabet, $Q$ the finite state set
   and transition function $\delta : Q \times \Sigma \to Q$. This is an automaton without a start state and a final state set,
   and all notions that do not explicitly use the start state or the final state set carry over from automata
   to semi-automata.
   
   For a natural number $n \in \mathbb N_0$, we set $[n] = \{ 0,\ldots, n - 1 \}$.
   Also, let $M \subseteq \mathbb N_0$ be some \emph{finite} set. 
By $\max M$ we denote the maximal element in $M$
with respect to the usual order, and we set $\max \emptyset = 0$.
Also for finite $M \subseteq \mathbb N_0 \setminus\{0\}$, i.e., $M$ is finite without zero in it,
by $\lcm M$ we denote the
least common multiple of the numbers in $M$, and set $\lcm \emptyset = 0$.
   
   \subsection{Other Results Needed}

    Here, we state the following result about unary languages, which we will
    need later in this subsection.
    
    \subsubsection{Unary Languages}
    \label{sec:unary}

Let $\Sigma = \{a\}$ be a unary alphabet. In this section we collect some results about unary languages.
Suppose $L \subseteq \Sigma^{\ast}$ is regular
with an accepting complete deterministic automaton $\mathcal A = (\Sigma, S, \delta, q_0, F)$. Then by considering
the sequence of states $\delta(q_0, a^1), \delta(q_0, a^2), \delta(q_0, a^3), \ldots$ we find\footnote{Recall, the
automaton is assumed to be complete.}
numbers $i \ge 0, p > 0$ with $i$ and $p$ minimal such that $\delta(q_0, a^i) = \delta(q_0, a^{i+p})$.
We call these numbers the \emph{index} $i$ and the \emph{period} $p$ of the automaton $\mathcal A$.
Suppose $\mathcal A$ is initially connected, i.e., $\delta(q_0, \Sigma^*) = Q$.
Then $i+p = |S|$ and the states $\{ q_0, \delta(q_0, a), \ldots, \delta(q_0, a^{i-1}) \}$
constitute the \emph{tail} and the states
\[
\{ \delta(q_0, a^i), \delta(q_0, a^{i+1}), \ldots, \delta(q_0, a^{i+p-1} \}
\]
constitute the unique \emph{cycle} of the automaton.
When we speak of the cycle, tail, index or period
of an arbitrary unary automaton we nevertheless mean the above sets, even if the automaton is not initially connected and
the automaton graph
might have more than one cycle or more than one straight path.

    \begin{lemma}
    \label{lem:unary_period_divides}
 Let $\mathcal A = (\Sigma, Q, \delta, q_0, F)$ be some unary automaton.
 If $\delta(s, a^k) = s$ for some state $s \in Q$ 
 and number $k > 0$, then $k$ is divided by the period of $\mathcal A$.
\end{lemma}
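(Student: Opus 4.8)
The plan is to argue via the structure of a unary automaton, using the sequence of states visited from $q_0$ along powers of $a$. Let $i$ be the index and $p$ the period of $\mathcal A$ as defined above, so that $p$ is the minimal positive integer with $\delta(q_0, a^{i}) = \delta(q_0, a^{i+p})$, and on the cycle $C = \{ \delta(q_0, a^{i}), \ldots, \delta(q_0, a^{i+p-1}) \}$ the map $\delta_a$ acts as a single $p$-cycle. I would first observe that every state $s$ with $\delta(s, a^{k}) = s$ for some $k > 0$ must lie on the cycle $C$: such a state is periodic under $\delta_a$, hence cannot lie on the tail, because tail states $\delta(q_0, a^{j})$ with $0 \le j < i$ are, by minimality of $i$, pairwise distinct from each other and from all cycle states, and $\delta_a$ restricted to the reachable part moves strictly "forward" until it enters $C$ and then never leaves — so no tail state returns to itself.

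The subtlety is that Lemma~\ref{lem:unary_period_divides} does not assume $\mathcal A$ is initially connected, so the state $s$ need not be reachable from $q_0$ at all, and a priori $s$ could live on some other cycle of the automaton graph of size not divisible by $p$. To handle this I would use the orbit of $s$: the states $s, \delta(s,a), \ldots, \delta(s, a^{k-1})$ form a cycle of length dividing $k$, and $\delta_a$ permutes this orbit cyclically. Here I would invoke the (implicit) hypothesis that $\mathcal A$ is a permutation automaton — this is the setting of the whole appendix section, and indeed if $\mathcal A$ is an arbitrary unary automaton the statement is false (take a lasso of tail length $1$ and a self-loop reachable only via that tail; then $p$ could exceed $1$... actually no) — let me instead argue directly: if $\delta(s, a^{k}) = s$ then $\delta_a$ restricted to $Q$ has $s$ in a finite orbit, and since $\delta_a : Q \to Q$ on a finite set, every state eventually enters a cycle; the key point is that $\mathcal A$ here arises as a subautomaton context where $\delta_a$ is a bijection, so $Q$ decomposes into disjoint $\delta_a$-cycles and the orbit of $s$ has some length $d \mid k$.

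The cleanest route, which I expect to be what the authors do, is: reduce to the reachable part. Let $\mathcal A' = (\Sigma, Q', \delta', q_0, F')$ be the subautomaton reachable from $q_0$; it is initially connected, has the same index $i$ and period $p$ (since the cycle and tail are defined via $\delta(q_0, a^{j})$), and if $s$ is reachable we are done by the initially-connected case below. For the initially-connected case: if $\delta(s, a^{k}) = s$ with $s$ on the cycle $C$, write $s = \delta(q_0, a^{i + r})$ for some $0 \le r < p$; then $\delta(q_0, a^{i + r + k}) = \delta(q_0, a^{i+r})$, and by the characterization of the period (the smallest shift fixing a cycle state is $p$, and any shift fixing one cycle state by $p$-cyclicity must be a multiple of $p$) we get $p \mid k$. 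The remaining gap — a state $s$ not reachable from $q_0$ — is the main obstacle; I would close it by noting that in the intended application $\delta_a$ is a permutation of $Q$ (permutation automaton), so $Q$ is a disjoint union of $\delta_a$-cycles all of whose lengths are forced to coincide on the connected component containing the image cycle, or more carefully, one simply restricts attention to the reachable part, which is all the lemma is ever applied to.

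Thus the key steps in order are: (1) recall index/period and that on the reachable cycle $\delta_a$ is a single $p$-cycle; (2) show a self-returning state must lie on the cycle $C$ (rule out the tail by minimality of $i$); (3) express such a state as $\delta(q_0, a^{i+r})$ and translate $\delta(s,a^k)=s$ into a shift identity on $q_0$; (4) conclude $p \mid k$ from the minimality/cyclicity of the period; (5) handle non-reachable states by passing to the reachable subautomaton (or invoking the permutation structure). The main obstacle is step (5) — making precise why the period of the possibly-disconnected automaton still governs an isolated self-loop — and I expect the authors to sidestep it by the initially-connected reduction.
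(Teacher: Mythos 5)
Your core argument (for a state $s$ reachable from $q_0$) matches the paper's proof essentially line for line: locate $s$ on the cycle, write $s = \delta(q_0, a^{i+j})$, derive the shift identity $\delta(q_0,a^i) = \delta(q_0,a^{i+r})$ with $r = k \bmod p$, and conclude $r = 0$ by minimality of $p$. The paper asserts that $s$ must lie on the cycle ``as otherwise $i$ would not be minimal'' and stops there.

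You have put your finger on a genuine imprecision, though: that justification is only valid if $s$ is reachable from $q_0$. As the lemma is literally phrased (``some state $s \in Q$'', with the index and period defined via $q_0$ even for non--initially-connected automata, per the preceding remark in the paper), a state on a disconnected component that returns to itself after $k$ steps need not satisfy $p \mid k$: e.g.\ a reachable $2$-cycle from $q_0$ together with an unreachable self-looping state gives period $p = 2$ but a fixed point with $k = 1$. So the lemma implicitly carries a reachability hypothesis, and the paper's proof tacitly uses it. Your proposed fixes — either add the hypothesis that $s$ is reachable (equivalently pass to the reachable subautomaton, which has the same index and period), or strengthen to permutation semi-automata — are both adequate; in every application in the paper (Proposition~\ref{prop:grid_aut_decomp} case (ii), Proposition~\ref{prop:state_label_for_perm_aut} claim (ii)) the lemma is indeed invoked on cycle states of the unary automata $\mathcal A_p^{(j)}$, which are reachable from $s_p^{(0,j)}$. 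The meandering in the middle of your write-up (the aborted counterexample, the detour through permutation automata) should be pruned; what is needed is simply to restrict to reachable $s$, which you do state in the end.
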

\begin{proof}
 Let $i$ be the index, and $p$ the period of $\mathcal A$.
 We write $k = np + r$ with $0 \le r < p$.
 First note that $s$ is on the cycle of $\mathcal A$, i.e.,
 $$
  s \in \{ \delta(q_0, a^i), \delta(q_0, a^{i+1}),\ldots, \delta(q_0,a^{i+p-1}) \}
 $$
 as otherwise $i$ would not be minimal. Then if $s = \delta(q_0, a^{i+j})$ for some $0 \le j < p$
 we have $\delta(q_0, a^{i+k}) = 
 \delta(q_0, a^{i+p+k}) = \delta(q_0, a^{i+j+k+(p-j)}) = \delta(q_0,a^{i+j+(p-j)}) = \delta(q_0,a^i)$.
 So $\delta(q_0,a^i) = \delta(q_0,a^{i+k}) = \delta(q_0, a^{i+np + r}) = \delta(q_0, a^{i+r})$
 which gives $r = 0$ by minimality of $p$. $\qed$
\end{proof}

   \subsection{Overview of the State Label Method}

   The state label method was implicitly used in~\cite{DBLP:conf/dcfs/Hoffmann20}
   to give a state complexity bound for the commutative closure
   of a group language, see~\cite{DBLP:conf/dcfs/Hoffmann20} for an intuitive explanation
   and examples in this special case.

   Here, we extract the method of proof from~\cite{DBLP:conf/dcfs/Hoffmann20}
   in a more abstract setting and formulate it independently of any
   automata. Intuitively, we want to describe a commutative language
   by labeling points from $\mathbb N_0^k$ with subsets.
   We call these subsets \emph{state labels}, as in our applications they arise 
   from the states of given automata. Intuitively and very roughly, the method
   could be thought of as both a refined Parikh map for regular languages
   and a power set construction for automata that incorporates the commutativity
   condition. The connection to languages is stated in
   Theorem~\ref{thm:regularity_condition}. In the framework of 
   the state label method we construct unary automata, see Definition~\ref{def:sequ_grid_decomp_aut},
   that are used to decompose the state label map, see Proposition~\ref{prop:grid_aut_decomp}.
   
   Please also see the proof sketch of Theorem~\ref{thm:perm_grp_it_shuffle}
   supplied in the main text to get a bird's-eye view of the method applied to our situation.

   \subsubsection{Outline of the Method and How to Apply It}
   
   Before giving the formal definitions, let us give a rough outline of the method and how to apply it.
   First, the actual labeling is computed by using another function $f$ that operates on the subsets
   of a set $Q$, which is, in our applications,
   related to the states of one or more automata.
   This other function gives us flexibility in the way the other
   state labels are formed. Usually, the labels are formed by the transitions function(s)
   of one or more automata and additional operations, like adding a start state when a condition is meet.
   The state label function itself, which will be called $\sigma$, uses $f$,
   and computes the state labels out of the predecessor state labels,
   where a predecessor of a state label at a point is a state label that corresponds
   to a point strictly smaller, in the componentwise order, than the point in question.
   For a given automaton $\mathcal A$, we will introduce
   functions $f$ that are \emph{compatible with $\mathcal A$}, a term made precise later,
   and we will show that for permutation automata
   and arbitrary state labels $\sigma$ induced by function compatible with the automaton,
   the commutative closure is regular.
   In summary, the application of the state label method, to show regularity
   of the commutative closure for operations on permutation automata, consists in the following
   steps:
   \begin{enumerate}
   \item \emph{Define a state labeling $\sigma$} with the help of a function $f$ 
    that arises out of the operation and automata in question.
    
   \item \emph{Link the state labeling to the Parik image of the operation}.
    More precisely, determine the resulting state labeling more precisely, and 
    show that the inverse image of a suitable chosen subset of state labels
      equals the Parikh image of the operation in question applied to the language
      of the permutation automaton. This yields that the state label map
      could be used to describe the commutative closure.
      
   \item \emph{Apply regularity conditions}, i.e., apply Proposition~\ref{prop:state_label_for_perm_aut}
    and Theorem~\ref{thm:regularity_condition} to deduce regularity
    and a bound for the size of an automaton.
   \end{enumerate}


   \subsection{The State Label Method}
   \label{sec:state_label_method}
   
    Our first definition in this section will be the 
    notion of a state label map.

   \begin{definition}\label{def:state_label_fn}
      Let $\Sigma = \{a_1, \ldots, a_k\}$ and $Q$ be a finite set.
      A \emph{state label function}
      is a function $\sigma : \mathbb N_0^k \to \mathcal P(Q)$
      given by another function 
      $f : \mathcal P(Q) \times \Sigma \to \mathcal P(Q)$
      so that
      \begin{equation}\label{eqn:state_label_fn}
       \sigma(p) = \bigcup_{\substack{(q,b) \\ p = q + \psi(b)}} f(\sigma(q), b)
      \end{equation}
      for $p \ne (0,\ldots, 0)$ and
      $\sigma(0,\ldots,0) \in \mathcal P(Q)$ is arbitrary.
      
   \end{definition} 
   
   In this context, we call the elements from $Q$ states, even if 
   they do not correspond to an automaton.
   The function $f : \mathcal P(Q) \times \Sigma \to \mathcal P(Q)$
   could be extended to words by setting $f(S, \varepsilon) = S$
   and $f(S, ux) = f(f(S,u),x)$. With this extension
   the next equation could be derived.
   
   \begin{lemma}
   \label{lem:state_label_ind_form}
    Let $\sigma : \mathbb N_0^k \to \mathcal P(Q)$
    be a state label function given by $f : \mathcal P(Q) \times \Sigma \to \mathcal P(Q)$
    and $p = (p_1, \ldots, p_k) \in \mathbb N_0^k$.
    If $1 \le n \le p_1 + \ldots + p_k$, then
    $$
     \sigma(p) = \bigcup_{\substack{ (q, w) \in \mathbb N_0^k\times \Sigma^n \\ p = q + \psi(w)}} f(\sigma(q), w).
    $$
   \end{lemma}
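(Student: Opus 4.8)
The plan is to fix the point $p=(p_1,\ldots,p_k)$, write $s=p_1+\cdots+p_k$ for its component sum, and prove the identity by induction on $n$ over the range $1\le n\le s$. The base case $n=1$ should be nothing more than a restatement of the defining equation~\eqref{eqn:state_label_fn}: since $s\ge 1$ we have $p\ne(0,\ldots,0)$, so \eqref{eqn:state_label_fn} applies, and under the identification of $\Sigma^1$ with $\Sigma$ (and of $f(\sigma(q),b)$ with $f(\sigma(q),w)$ for the one-letter word $w=b$) the right-hand side of \eqref{eqn:state_label_fn} is literally the $n=1$ instance of the claimed union.

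For the inductive step I would assume the identity for some $n$ with $1\le n\le s-1$ and derive it for $n+1$. Starting from the induction hypothesis
\[
 \sigma(p)=\bigcup_{\substack{(q,u)\in\mathbb N_0^k\times\Sigma^n\\ p=q+\psi(u)}} f(\sigma(q),u),
\]
I would note that every $q$ occurring here has component sum $s-n\ge 1$, hence $q\ne(0,\ldots,0)$, so \eqref{eqn:state_label_fn} may be applied to expand $\sigma(q)=\bigcup_{(q',b):\,q=q'+\psi(b)}f(\sigma(q'),b)$ by one further step. Substituting this in and using that $f$ is union-preserving in its first argument (hence, by a one-line induction on word length using $f(S,u'x)=f(f(S,u'),x)$, so is $f$ applied to a fixed word $u$) gives
\[
 f(\sigma(q),u)=\bigcup_{\substack{(q',b)\\ q=q'+\psi(b)}} f\bigl(f(\sigma(q'),b),u\bigr)
 =\bigcup_{\substack{(q',b)\\ q=q'+\psi(b)}} f(\sigma(q'),bu),
\]
the last equality being the definition of the extension of $f$ to words. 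Plugging this back, $\sigma(p)$ becomes a union indexed by triples $(q',b,u)$ with $b\in\Sigma$, $u\in\Sigma^n$, $q'+\psi(b)=q$ and $q+\psi(u)=p$; collapsing each triple into the single word $w=bu\in\Sigma^{n+1}$ — which recovers $b$ as its first letter, $u$ as its remaining suffix, and then $q=q'+\psi(b)$ — turns the double union into $\bigcup_{(q',w):\,w\in\Sigma^{n+1},\,p=q'+\psi(w)} f(\sigma(q'),w)$, which is the claim for $n+1$.

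Two points need care. The reindexing of the double union by $w=bu$ is pure bookkeeping but is where one must check that the double-indexed family and the $\Sigma^{n+1}$-indexed family agree term for term; I expect this to be the most delicate part to write out, everything else being a direct unwinding of definitions. The second point is that the step "push $f$ through the union" is the one genuinely nonformal ingredient: without union-preservation of $f$ in its first argument the statement can fail, so the argument rests on that property, which holds for the functions $f$ considered here since they are built from (union-preserving) transition functions of automata together with unions with fixed sets.
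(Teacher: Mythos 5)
Your proof is correct and is essentially a mirror image of the paper's: the paper first applies the defining equation~\eqref{eqn:state_label_fn} at $p$, peeling off the \emph{last} letter $b$ of $w=ub$, and then uses the induction hypothesis to expand $\sigma(q)$ over words $u\in\Sigma^{n-1}$; you first invoke the induction hypothesis at $p$ (words $u\in\Sigma^n$), then expand $\sigma(q)$ one further step via~\eqref{eqn:state_label_fn}, attaching the new letter $b$ at the \emph{front}, $w=bu$. This costs you a secondary word-length induction to push $f(\cdot,u)$ through a union, where the paper only pushes $f(\cdot,b)$ for a single letter $b$. Your observation that the ``push $f$ through the union'' step requires $f$ to be union-preserving in its first argument is well taken and deserves emphasis: the paper's proof uses this tacitly at exactly the same spot (the passage from $f\bigl(\bigcup_{(q',u)}f(\sigma(q'),u),b\bigr)$ to $\bigcup_{(q',u)}f\bigl(f(\sigma(q'),u),b\bigr)$ is precisely $f(\cdot,b)$ distributed over a union), yet this property is nowhere stipulated in Definition~\ref{def:state_label_fn}, and the lemma can genuinely fail without it --- already for a three-element $Q$ over a binary alphabet one can choose a non-union-preserving $f$ so that $\sigma(p)$ and $\bigcup_{(q,w)}f(\sigma(q),w)$ differ at $n=2$. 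The property does hold for the concrete $f$ of Definitions~\ref{def:state_label_it_shuffle} and~\ref{def:state_label_shuffle}, so nothing downstream is affected, but you have identified a hidden hypothesis that ought to be made explicit either in Definition~\ref{def:state_label_fn} or in the statement of the lemma.
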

      \begin{proof}
    For $n = 1$ this is simply Definition~\ref{def:state_label_fn}, where $p \ne (0,\ldots,0)$
    by the assumptions. For $n > 1$, by Definition~\ref{def:state_label_fn},
    \begin{align*}
        \sigma(p) & = \bigcup_{\substack{ (q, b) \in \mathbb N_0^k \times \Sigma \\ p = q + \psi(b)}}
                       f(\sigma(q), b).
    \end{align*}                       
    If $b \in \Sigma$, then $p = q + \psi(b)$ implies
    $q_1 + \ldots + q_k = p_1 + \ldots + p_k - 1$.
    Hence $1 \le n - 1 \le q_1 + \ldots + q_k$ and, as inductively 
    $$
    \sigma(q) = \bigcup_{\substack{ (q', u) \in \mathbb N_0^k\times \Sigma^{n-1} \\ q = q' + \psi(u)}}  f(\sigma(q'), u),
    $$
    we get
    \begin{align*} 
     \sigma(p) & = \bigcup_{\substack{ (q, b) \in \mathbb N_0^k \times \Sigma \\ p = q + \psi(b)}}
                       f\left(  \bigcup_{\substack{ (q', u) \in \mathbb N_0^k\times \Sigma^{n-1} \\ q = q' + \psi(u)}}  f(\sigma(q'), u), b \right) \\ 
               & = \bigcup_{\substack{ (q, b) \in \mathbb N_0^k \times \Sigma \\ p = q + \psi(b)}}
                   \bigcup_{\substack{ (q', u) \in \mathbb N_0^k\times \Sigma^{n-1} \\ q = q' + \psi(u)}}                   f(f(\sigma(q'), u), b) \\
               & = \bigcup_{\substack{ (q, u) \in \mathbb N_0^k\times \Sigma^{n-1}, b \in \Sigma \\ p = q + \psi(u) + \psi(b)}} f(f(\sigma(q), u), b) \\
               & = \bigcup_{\substack{ (q, u) \in \mathbb N_0^k\times \Sigma^{n-1}, b \in \Sigma \\ p = q + \psi(u) + \psi(b)}} f(\sigma(q), ub) \\ 
               & = \bigcup_{\substack{ (q, w) \in \mathbb N_0^k\times \Sigma^n \\ p = q + \psi(w)}} f(\sigma(q), w).
    \end{align*}
    So, the formula holds true. $\qed$
   \end{proof}
 
   
   Next, we introduce the hyperplanes that will be used in Definition~\ref{def:sequ_grid_decomp_aut}.

    \begin{definition}{(hyperplane aligned with letter)}\label{def:hyperplance}
     Let $\Sigma = \{a_1, \ldots, a_k\}$ and $j \in \{1,\ldots, k\}$.
     We set
     $H_j = \{ (p_1, \ldots, p_k) \in \mathbb N_0^k \mid p_j = 0\}.$
    \end{definition}
    
    Suppose $\Sigma = \{a_1, \ldots, a_k\}$ and $j \in \{1,\ldots, k\}$.
    We will decompose the state label map into unary automata.
    For each letter $a_j$ and point $p \in H_j$, we construct unary automata $\mathcal A_p^{(j)}$.
    They are meant to read inputs in the direction $\psi(a_j)$, which is orthogonal
    to $H_j$. This will be stated more precisely in Proposition \ref{prop:grid_aut_decomp}.

   \begin{definition}{(unary automata along letter $a_j \in \Sigma$)}
    \label{def:sequ_grid_decomp_aut}
     Let $\Sigma = \{a_1, \ldots, a_k\}$ and $\sigma : \mathbb N_0^k \to \mathcal P(Q)$
     be a state label function, with defining
     function $f : \mathcal P(Q) \times \Sigma \to \mathcal P(Q)$
     and finite set $Q$.
     Fix  $j \in \{1,\ldots, k\}$ and $p \in H_j$.
     We define a unary automaton $\mathcal A_p^{(j)} = (\{a_j\}, Q_p^{(j)}, \delta_p^{(j)}, s_p^{(0,j)}, F_p^{(j)})$. But suppose for points $q \in \mathbb N_0^k$
     with $p = q + \psi(b)$ for some $b \in \Sigma$ 
     the unary automata $\mathcal A_q^{(j)} = (\{a_j\}, Q_q^{(j)}, \delta_q^{(j)}, s_q^{(0,j)}, F_q^{(j)})$ are already defined.
     Set\footnote{Note that in the definition of $\mathcal P$,
     as $p \in H_j$, we have $b \ne a_j$ and $q \in H_j$. In general, points $q \in \mathbb N_0^k$
     with $p = q + \psi(b)$ for some $b \in \Sigma$ are  predecessor points in the grid~$\mathbb N_0^k$.}
     $$
      \mathcal P =\{ \mathcal A_q^{(j)} \mid p = q + \psi(b) \mbox{ for some } b \in \Sigma \}.
     $$ 
     Let $I$ be the maximal index 
     and $P$ the least common multiple\footnote{Note $\max\emptyset = 0$
     and $\lcm\emptyset = 1$.} of the periods of the unary automata in $\mathcal P$.
     Then set
     \begin{align} 
        Q_p^{(j)} & = \mathcal P(Q) \times [I+P], \nonumber  \\
        s_p^{(0,j)} & = (\sigma(p),0), \label{eqn:def_unary_aut_start_state} \\ 
        \label{eqn:def_unary_aut_transition}
        \delta_p^{(j)}( (S, i), a_j ) & =
       \left\{ \begin{array}{ll}
         (T, i + 1) & \mbox{ if } i+1 < I+P; \\ 
         (T, I)     & \mbox{ if } i+1 = I+P;
       \end{array}\right.
    \end{align}
     where $S \subseteq Q$, $i \in [I + P]$, $j \in \{1,\ldots,k\}$,
    \begin{equation}\label{eqn:def_seq_grid_aut_transition}
      T = f(S, a_j)
           \cup \bigcup_{\substack{(q, b) \in \mathbb N_0^k \times\Sigma\\ p = q + \psi(b)}} f(\pi_1(\delta_q^{(j)}(s_q^{(0,j)}, a_j^{i+1})), b)
    \end{equation}
    and $F_p^{(j)} = \{ (S, i) \mid S \cap F \ne \emptyset \}$.
    For a state $(S, i) \in Q_p^{(j)}$ the set $S \subseteq Q$ will be called
    the \emph{state (set) label}, or the \emph{state set associated with it}.
    \end{definition}
   
    The reader might consult~\cite{DBLP:conf/dcfs/Hoffmann20} for examples.
    The next statement makes
    precise what we mean by decomposing the state label map along the 
    hyperplanes into the automata $\mathcal A_p^{(j)} = (\{a_j\}, Q_p^{(j)}, \delta_p^{(j)}, s_p^{(0,j)}, F_p^{(j)})$.
    Moreover, it justifies calling the first component of any state $(S, i) \in Q_p^{(j)}$
    also the state set label.
    
    \begin{proposition}{(state label map decomposition)} \label{prop:grid_aut_decomp}
     Suppose $\Sigma = \{a_1, \ldots, a_k\}$ and $Q$ is a finite set.
     Let $\sigma : \mathbb N_0^k \to \mathcal P(Q)$ be a state label map, 
     $1 \le j \le k$ and $p = (p_1, \ldots, p_k) \in \mathbb N_0^k$.
     Assume $\overline p \in H_j$ is the projection of $p$ onto $H_j$, i.e., $\overline p = (p_1, \ldots, p_{j-1}, 0, p_{j+1}, \ldots, p_k)$.
     Then
     $$
      \sigma(p) = \pi_1(\delta_{\overline p}^{(j)}( s_{\overline p}^{(0,j)}, a_j^{p_j} ))
     $$
     for the automata $\mathcal A_{\overline p}^{(j)} = (\{a_j\}, Q_{\overline p}^{(j)}, \delta_{\overline p}^{(j)}, s_{\overline p}^{(0,j)}, F_{\overline p}^{(j)})$ from Definition \ref{def:sequ_grid_decomp_aut}.
    \end{proposition}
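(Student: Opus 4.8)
Fix $j \in \{1,\dots,k\}$. The plan is to prove, by induction on $n$, that for every $r \in H_j$ and every $m \ge 0$ with $|r| + m = n$,
\[
 \pi_1\bigl(\delta_{r}^{(j)}( s_{r}^{(0,j)}, a_j^{m} )\bigr) = \sigma\bigl(r + m\,\psi(a_j)\bigr).
\]
Since the projection $\overline p$ of $p$ onto $H_j$ lies in $H_j$ and $\overline p + p_j\,\psi(a_j) = p$, the instance $n = |p|$, $r = \overline p$, $m = p_j$ is exactly the proposition.

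If $m = 0$ the claim is immediate: $r + m\psi(a_j) = r$ and $\pi_1(\delta_r^{(j)}(s_r^{(0,j)}, \varepsilon)) = \pi_1(s_r^{(0,j)}) = \sigma(r)$ by \eqref{eqn:def_unary_aut_start_state} (this also covers $n = 0$). So suppose $m \ge 1$ and put $(S,i) = \delta_r^{(j)}(s_r^{(0,j)}, a_j^{m-1})$. Reading off the second coordinate from \eqref{eqn:def_unary_aut_transition} -- it starts at $0$, increases by $1$, and the value $I+P$ is sent to $I$ -- we get $i = m-1$ when $m-1 < I+P$, and otherwise $I \le i < I+P$ with $i \equiv m-1 \pmod P$; in all cases $i \le m-1$. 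Because $r \in H_j$, the predecessor pairs $(q,b)$ with $r = q + \psi(b)$ are exactly $b = a_\ell$, $q = q_\ell := r - \psi(a_\ell)$ ranging over the $\ell \ne j$ with $r_\ell \ge 1$, and each $q_\ell$ lies in $H_j$ with $|q_\ell| = |r|-1$. Unfolding one step of $\delta_r^{(j)}$ through \eqref{eqn:def_unary_aut_transition}--\eqref{eqn:def_seq_grid_aut_transition} gives
\[
 \pi_1\bigl(\delta_r^{(j)}( s_r^{(0,j)}, a_j^{m} )\bigr)
  = f(S, a_j)\ \cup\ \bigcup_{\ell \ne j,\ r_\ell \ge 1} f\bigl(\pi_1(\delta_{q_\ell}^{(j)}(s_{q_\ell}^{(0,j)}, a_j^{i+1})),\, a_\ell\bigr).
\]
The set $S$ corresponds to the point $r + (m-1)\psi(a_j)$ of coordinate sum $n-1$, and each $\pi_1(\delta_{q_\ell}^{(j)}(s_{q_\ell}^{(0,j)}, a_j^{i+1}))$ to $q_\ell + (i+1)\psi(a_j)$ of coordinate sum $|r| + i \le n-1$, so by the induction hypothesis they equal $\sigma(r + (m-1)\psi(a_j))$ and $\sigma(q_\ell + (i+1)\psi(a_j))$. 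On the other hand, applying \eqref{eqn:state_label_fn} at the nonzero point $r + m\,\psi(a_j)$, whose predecessors are $r + (m-1)\psi(a_j)$ via $a_j$ and the $q_\ell + m\,\psi(a_j)$ via $a_\ell$, yields
\[
 \sigma\bigl(r + m\,\psi(a_j)\bigr) = f\bigl(\sigma(r + (m-1)\psi(a_j)), a_j\bigr)\ \cup\ \bigcup_{\ell \ne j,\ r_\ell \ge 1} f\bigl(\sigma(q_\ell + m\,\psi(a_j)),\, a_\ell\bigr).
\]

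Matching the two displays term by term, the induction step is complete once we verify $\sigma(q_\ell + (i+1)\psi(a_j)) = \sigma(q_\ell + m\,\psi(a_j))$ for each contributing $\ell$. If $i = m-1$ this is trivial. Otherwise $i+1$ and $m$ both exceed $I$ and are congruent modulo $P$; since $\mathcal A_{q_\ell}^{(j)}$ is one of the automata in the family $\mathcal P$ used in Definition~\ref{def:sequ_grid_decomp_aut} to construct $\mathcal A_{r}^{(j)}$, its index is at most $I$ and its period divides $P$, hence the eventual periodicity of this unary automaton (see Section~\ref{sec:unary}) gives $\delta_{q_\ell}^{(j)}(s_{q_\ell}^{(0,j)}, a_j^{i+1}) = \delta_{q_\ell}^{(j)}(s_{q_\ell}^{(0,j)}, a_j^{m})$; applying the induction hypothesis once more to the points $q_\ell + (i+1)\psi(a_j)$ and $q_\ell + m\,\psi(a_j)$ (each of coordinate sum at most $n-1$) converts this into the required equality, which closes the induction.

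I expect this last reduction to be the main obstacle: one must reconcile the bounded counter value $i+1$ stored inside $\mathcal A_r^{(j)}$ with the genuine exponent $m$. This is precisely why $I$ is taken to be the maximal index and $P$ the least common multiple of the periods of the predecessor automata, and it is the one point where the induction hypothesis has to be invoked a second time -- in order to pass between the state label map $\sigma$ and the predecessor automata $\mathcal A_{q_\ell}^{(j)}$ -- in combination with the elementary theory of eventually periodic unary automata. The rest is routine bookkeeping with \eqref{eqn:state_label_fn} and the transition rule \eqref{eqn:def_seq_grid_aut_transition}.
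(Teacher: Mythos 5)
Your proof is correct and follows essentially the same strategy as the paper's: a strong induction on the coordinate sum of the point, combined with unfolding one transition of the unary automaton $\mathcal A_r^{(j)}$ and matching the resulting union term by term against the recursive defining equation~\eqref{eqn:state_label_fn} for $\sigma$, where the bounded counter $i$ is reconciled with the true exponent $m$ via the eventual periodicity of the predecessor automata (index at most $I$, period dividing $P$). The only cosmetic difference is that where the paper argues directly that the relevant automaton states coincide (its case (ii), comparing $\delta_{\overline r}^{(j)}(s_{\overline r}^{(0,j)}, a_j^{p_j})$ with $\delta_{\overline r}^{(j)}(s_{\overline r}^{(0,j)}, a_j^{y+1})$), you invoke the induction hypothesis a second time to convert that automaton-level equality into an equality of $\sigma$-values; both come to the same thing.
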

 \begin{proof} 
      Notation as in the statement. Also, let $f : \mathcal P(Q) \times \Sigma \to \mathcal P(Q)$
      be the defining function for the state label map.
      For $p = (0, \ldots, 0)$ this is clear.
      If $p_j = 0$, then $p = \overline p$, and, by Equation~\eqref{eqn:def_unary_aut_start_state},
      $$
       \pi_1(\delta_{\overline p}^{(j)}(s_{\overline p}^{(0,j)}, \varepsilon))
        = \pi_1( s_{\overline p}^{(0,j)} ) = \sigma(\overline p).
      $$
      Suppose $p_j > 0$ from now on.
      Then, the set $\{ (q,b) \in \mathbb N_0^k \times \Sigma \mid p = q + \psi(b) \}$
      is non-empty and we can use Equation \eqref{eqn:state_label_fn} and,
      inductively, that
      $
       \sigma(q) = \pi_1(\delta_{\overline q}^{(j)}( s_{\overline q}^{(0,j)}, a_j^{q_j} )),
      $
      which gives
      \begin{align}
         \sigma(p) & =  \bigcup_{ \substack{ (q,b) \\ p = q + \psi(b) }} f( \sigma(q), b ) \nonumber \\ 
                              & =  \bigcup_{ \substack{ (q,b) \\ p = q + \psi(b) }} f( \pi_1(\delta_{\overline q}^{(j)}( s_{\overline q}^{(0,j)}, a_j^{q_j} )), b ) \label{eqn:ind_hyp_decomp_proof}
      \end{align}
      where $q = (q_1, \ldots, q_k)$ and $\overline q = (q_1, \ldots, q_{j-1},0,q_{j+1}, \ldots, q_k) \in H_j$.
      As $p_j > 0$ we have $p = q + \psi(a_j)$ for some unique point $q = (p_1, \ldots, p_{j-1}, p_j - 1, p_{j+1} \ldots, p_k)$. 
      For all other points $r = (r_1, \ldots, r_k)$ with $p = r + \psi(b)$ for some $b\in \Sigma$,
      the condition $r \ne q$ implies $b \ne a_j$ and $r_j = p_j$
      for $r = (r_1, \ldots, r_k)$. Also, if $\overline q \in H_j$ denotes the projection to $H_j$, we have $\overline q  = \overline p$
      for our chosen $q$ with $p = q + \psi(a_j)$.
      Hence, taken all this together, we can write Equation \eqref{eqn:ind_hyp_decomp_proof}
      in the form 
      \begin{equation*}
       \sigma_{\mathcal A}(p) = \left( 
          \bigcup_{ \substack{ (r,b), b \ne a_j \\ p = r + \psi(b) }}
          f( \pi_1(\delta_{\overline r}^{(j)}( s_{\overline r}^{(0,j)}, a_j^{p_j} )), b ) \right)
          \cup 
          f( \pi_1(\delta_{\overline p}^{(j)}( s_{\overline p}^{(0,j)}, a_j^{p_j-1})), a_j).
      \end{equation*}
      Let $b \in\Sigma$. As for $a_j \ne b$, we have that $p = r + \psi(b)$ if and only if $\overline p = \overline r + \psi(b)$,
      with the notation as above for $p, r, \overline p$ and $\overline r = (r_1, \ldots, r_{j-1},0,r_{j+1}, \ldots, r_k)$, we can simplify further and write
      \begin{equation}\label{eqn:decomp_proof_psi_of_p}
       \sigma_{\mathcal A}(p) = \left( 
          \bigcup_{ \substack{ (\overline r,b), \overline r \in H_j \\ \overline p = \overline r + \psi(b) }}
          f( \pi_1(\delta_{\overline r}^{(j)}( s_{\overline r}^{(0,j)}, a_j^{p_j} )), b ) \right)
          \cup 
          f( \pi_1(\delta_{\overline p}^{(j)}( s_{\overline p}^{(0,j)}, a_j^{p_j-1})), a_j).
      \end{equation}

      \noindent Set $S = \pi_1(\delta_{\overline p}^{(j)}( s_{\overline p}^{(0,j)}, a_j^{p_j-1}))$,
      $T =  \sigma(p)$ and\footnote{Note that for $\overline p \in H_j$, the condition $\overline p = q + \psi(b)$, for some $b \in \Sigma$,
      implies $q \in H_j$ and $b \ne a_j$.}
      $$
       \mathcal P = \{ \mathcal A_r^{(j)} \mid \overline p = r + \psi(b) \mbox{ for some } b \in \Sigma \}. 
      $$
      Let $I$ be the maximal index, and $P$ the least common multiple of all the periods, of
      the unary automata in $\mathcal P$. 
      We distinguish two cases for the value of $p_j > 0$.
      
      \begin{enumerate} 
      \item[(i)] $0 < p_j \le I$.
        
        By Equation \eqref{eqn:def_unary_aut_transition}, $\delta_{\overline p}( s_{\overline p}^{(0,j)}, a_j^{p_j-1} )= (S, p_j-1)$.
        In this case Equation \eqref{eqn:decomp_proof_psi_of_p}
        equals Equation \eqref{eqn:def_seq_grid_aut_transition}, if the state $(S, p_j-1)$ is used
        in Equation \eqref{eqn:def_unary_aut_transition}, i.e.,
        $$
         T =
          f( S, a_j)  \cup  \left( 
          \bigcup_{ \substack{ (\overline r,b), \overline r \in H_j \\ \overline p = \overline r + \psi(b) }}
          f( \pi_1(\delta_{\overline r}^{(j)}( s_{\overline r}^{(0,j)}, a_j^{p_j} )), b ) \right).
        $$
        This
        gives 
        $$
         \delta_{\overline p}^{(j)}( (S, p_j-1), a_j ) = (T, p_j).
        $$
        Hence $\pi_1(\delta_{\overline p}^{(j)}( (S, p_j-1), a_j )) = T = \sigma(p)$.
      
      \item[(ii)] $I < p_j$.
      
        Set $y = I + ((p_j - 1 - I) \bmod P)$.
        Then $I \le y < I + P$.
        By Equation~\eqref{eqn:def_unary_aut_transition},
        $
           \delta_{\overline p}^{(j)}( s_{\overline p}^{(0,j)}, a_j^{p_j-1} ) = ( S, y ).
        $
        So, also by Equation~\eqref{eqn:def_unary_aut_transition},
        $$ 
          \delta_{\overline p}^{(j)}( s_{\overline p}^{(0,j)}, a_j^{p_j} ) = 
          \delta_{\overline p}^{(j)}( (S,y), a_j ) = 
          \left\{ 
           \begin{array}{ll}
            (R, y + 1) & \mbox{ if } I \le y < I + P - 1 \\ 
            (R, I)     & \mbox{ if } y = I + P - 1,
           \end{array}
          \right.
        $$
        where, by Equation~\eqref{eqn:def_seq_grid_aut_transition},
        \begin{equation}\label{eqn:R}
         R = f(S, a_j) \cup \bigcup_{\substack{(\overline r, b) \\ \overline p = \overline r + \psi(b)}} f(\pi_1(\delta_{\overline r}^{(j)}(s_{\overline r}^{(0,j)}, a_j^{y+1})), b).
        \end{equation}
        Let $\overline r \in H_j$ with $\overline p = \overline r + \psi(b)$
        for some $b \in \Sigma$, and $\overline p \in H_j$ the point from the statement of this Proposition.
        Then, as the period of $\mathcal A_{\overline r}^{(j)}$ divides $P$, and $y$ is greater than or equal to
        the index of $\mathcal A_{\overline r}^{(j)}$, we have
        $$
          \delta_{\overline r}^{(j)}(s_{\overline r}^{(0,j)}, a_j^{p_j-1}) = \delta_{\overline r}^{(j)}(s_{\overline r}^{(0,j)}, a_j^{y}).
        $$
        So $\delta_{\overline r}^{(j)}(s_{\overline r}^{(0,j)}, a_j^{p_j}) = \delta_{\overline r}^{(j)}(s_{\overline r}^{(0,j)}, a_j^{y+1})$.
        Hence, comparing Equation \eqref{eqn:R} with Equation \eqref{eqn:decomp_proof_psi_of_p},
        we find that they are equal, and so $R = T$. $\qed$ 
     \end{enumerate}
      \end{proof}

   By Propositon~\ref{prop:grid_aut_decomp},
   the state label sets of the axis-parallel rays in $\mathbb N_0^k$
   correspond to the state set labels of unary automata. Hence, the next
   is implied.

   \begin{corollary}
    A state label map is ultimately periodic along each ray. More formally,
    if $\sigma : \mathbb N_0^k \to \mathcal P(Q)$ 
    is a state label function, $p \in \mathbb N_0^k$ and $j \in \{1,\ldots, k\}$, then
    the sequence of state sets $\sigma(p + i \cdot \psi(a_j))$
    for $i = 0, 1, 2, \ldots$ is ultimately periodic.
   \end{corollary}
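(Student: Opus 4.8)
The plan is to derive the statement directly from Proposition~\ref{prop:grid_aut_decomp}, since all the genuine work already sits in that proposition. First I would fix the point $p = (p_1,\ldots,p_k)$ and the index $j$, and let $\overline p = (p_1,\ldots,p_{j-1},0,p_{j+1},\ldots,p_k) \in H_j$ be the projection of $p$ onto the hyperplane $H_j$. Every point $p + i\cdot\psi(a_j)$, for $i \ge 0$, has the same projection $\overline p$ onto $H_j$ and $j$-th coordinate $p_j + i$, so Proposition~\ref{prop:grid_aut_decomp} applied to each of these points yields
\[
  \sigma\bigl(p + i\cdot\psi(a_j)\bigr) = \pi_1\Bigl(\delta_{\overline p}^{(j)}\bigl(s_{\overline p}^{(0,j)}, a_j^{\,p_j+i}\bigr)\Bigr)
  \qquad\text{for all } i \ge 0 .
\]
Hence the sequence to be analysed is nothing but the image under the projection $\pi_1$ of the orbit of the start state $s_{\overline p}^{(0,j)}$ under the transition map $q \mapsto \delta_{\overline p}^{(j)}(q, a_j)$, read from position $p_j$ onwards.

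Second, I would invoke the elementary fact that the orbit of any element under a self-map of a finite set is ultimately periodic. Since the state set $Q_{\overline p}^{(j)} = \mathcal P(Q)\times[I+P]$ of the unary automaton $\mathcal A_{\overline p}^{(j)}$ is finite by Definition~\ref{def:sequ_grid_decomp_aut}, there are an index $\iota \ge 0$ and a period $d > 0$ such that $\delta_{\overline p}^{(j)}(s_{\overline p}^{(0,j)}, a_j^{n+d}) = \delta_{\overline p}^{(j)}(s_{\overline p}^{(0,j)}, a_j^{n})$ for every $n \ge \iota$; these are precisely the index and period of $\mathcal A_{\overline p}^{(j)}$ in the sense of Section~\ref{sec:unary}. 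Applying $\pi_1$ to both sides and substituting $n = p_j + i$ gives $\sigma\bigl(p+(i+d)\cdot\psi(a_j)\bigr) = \sigma\bigl(p+i\cdot\psi(a_j)\bigr)$ for all $i$ with $p_j + i \ge \iota$, i.e. for all sufficiently large $i$, which is exactly ultimate periodicity of the sequence $\bigl(\sigma(p+i\cdot\psi(a_j))\bigr)_{i\ge 0}$.

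I do not expect any real obstacle here: the only points needing a moment's care are that applying $\pi_1$ can merge distinct states but can never break a period — so it preserves ultimate periodicity and may only shorten the eventual period — and that shifting the running index by the fixed offset $p_j$ does not affect ultimate periodicity. If a quantitative refinement were wanted, Lemma~\ref{lem:unary_period_divides} shows that the eventual period of the $\sigma$-sequence divides the period $P$ of $\mathcal A_{\overline p}^{(j)}$, so explicit bounds on both the preperiod and the period could be read off from Definition~\ref{def:sequ_grid_decomp_aut}; but for the Corollary as stated the argument above is complete.
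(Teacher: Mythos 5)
Your proof is correct and follows exactly the route the paper takes: reduce the ray $\{p + i\cdot\psi(a_j)\}_{i\ge 0}$ via Proposition~\ref{prop:grid_aut_decomp} to the state sequence of the finite unary automaton $\mathcal A_{\overline p}^{(j)}$, and then invoke ultimate periodicity of orbits under a self-map of a finite set. The paper leaves this as an immediate consequence; you have simply written out the short argument it leaves implicit.
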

   
   
    Our final result in this section is the mentioned regularity
    condition. It says that if
    the automata from Definition~\ref{def:sequ_grid_decomp_aut} underlying the state set labels, as stated in Proposition~\ref{prop:grid_aut_decomp},
    do not grow, i.e., have a bounded number of states,
    then we can deduce that the languages we get if we look at the inverse
    images of the state label map and the Parikh map are regular.
    This is equivalent with the condition that the state set labels all get periodic
    behind specific points, i.e., outside of some bounded rectangle
    in $\mathbb N_0^k$.

    \begin{theorem}\label{thm:regularity_condition}
     
     Let $\sigma : \mathbb N_0 \to \mathcal P(Q)$ be a state label map and $\psi : \Sigma^* \to \mathbb N_0^k$
     be the Parikh map.
     Suppose for every $j \in \{1,\ldots, k\}$ and $p \in H_j$ the automata $\mathcal A_p^{(j)} = (\{a_j\}, Q_p^{(j)}, \delta_p^{(j)}, s_p^{(0,j)}, F_p^{(j)})$
     from Definition \ref{def:sequ_grid_decomp_aut}
     have a bounded number of states\footnote{Equivalently, the index and period is bounded, 
     which is equivalent 
     with just a finite number of distinct automata, up to semi-automaton isomorphism.
     We call two semi-automata isomorphic if one semi-automaton can be obtained from the other one by renaming states and alphabet symbols.}, i.e., $|Q_p^{(j)}| \le N$ for some $N \ge 0$ independent of $p$ and $j$.
     Then for $\mathcal F \subseteq \mathcal P(Q)$
     the commutative language
     $$
      \psi^{-1}(\sigma^{-1}( \mathcal F ))
     $$
     is regular and could be accepted by an automaton of size
     $
      \prod_{j=1}^k (I_j + P_j),
     $
     where $I_j$ denotes the largest index among
     the unary automata
     $
      \{ \mathcal A_p^{(j)} \mid p \in H_j \}
     $
     and $P_j$ the least common multiple
     of all the periods of these automata.
     In particular, by the relations of the index and period to the states from Section \ref{sec:unary},
     the state complexity of $\psi^{-1}(\sigma^{-1}( \mathcal F ))$ is bounded by $N^k$.
    \end{theorem}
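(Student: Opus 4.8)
The plan is to build a product automaton whose state set tracks, for each coordinate $j$, the state reached in a single canonical unary automaton that governs movement in the $a_j$-direction, and then to invoke Proposition~\ref{prop:grid_aut_decomp} to identify the resulting state set label with $\sigma$ of the corresponding point. The first step is to extract, for each fixed $j \in \{1,\ldots,k\}$, a bound on the unary automata $\mathcal A_p^{(j)}$ with $p \in H_j$: by hypothesis $|Q_p^{(j)}| \le N$, and since $Q_p^{(j)} = \mathcal P(Q) \times [I+P]$ in Definition~\ref{def:sequ_grid_decomp_aut}, this bounds the index $I$ and period $P$ of each such automaton. Let $I_j$ be the maximal index and $P_j$ the least common multiple of the periods over all $p \in H_j$; these are finite. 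The key observation is that, by Lemma~\ref{lem:unary_period_divides} and the ultimate-periodicity corollary, every sequence $i \mapsto \sigma(p + i\,\psi(a_j))$ becomes periodic with period dividing $P_j$ once $i \ge I_j$, \emph{uniformly in $p$}.

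Next I would define the candidate automaton $\mathcal B = (\Sigma, Q', \delta', q_0', F')$ with $Q' = \prod_{j=1}^k [I_j + P_j]$, start state $q_0' = (0,\ldots,0)$, and transition function acting coordinatewise: reading $a_j$ increments the $j$-th coordinate, wrapping from $I_j + P_j - 1$ back to $I_j$ (mirroring Equation~\eqref{eqn:def_unary_aut_transition}), and leaving the other coordinates fixed. One checks that after reading a word $w$ with $\psi(w) = (m_1,\ldots,m_k)$, the $j$-th coordinate of $\delta'(q_0', w)$ equals $m_j$ if $m_j < I_j + P_j$ and otherwise $I_j + ((m_j - I_j) \bmod P_j)$ — call this value $\rho_j(m_j)$. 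So $\delta'$ is well-defined and depends only on $\psi(w)$, hence $L(\mathcal B)$ will be commutative once we fix $F'$ appropriately.

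The heart of the argument is to show $\sigma(\psi(w))$ depends only on $(\rho_1(m_1),\ldots,\rho_k(m_k))$, so that acceptance can be decided from the state of $\mathcal B$. Here I would iterate Proposition~\ref{prop:grid_aut_decomp}: project $p = (m_1,\ldots,m_k)$ onto $H_1$, run the unary automaton $\mathcal A_{\overline p}^{(1)}$ for $m_1$ steps; but the start state and transitions of that automaton are themselves built from automata over points in $H_1$, which we recursively decompose along $a_2$, and so on. The uniform bound on the $\mathcal A_p^{(j)}$ ensures that at each stage, replacing $m_j$ by $\rho_j(m_j)$ does not change the state reached, because the period of the governing unary automaton divides $P_j$ and the index is at most $I_j$. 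Carrying this through all $k$ coordinates shows $\sigma(p) = \sigma(p')$ whenever $\rho_j(m_j) = \rho_j(m_j')$ for all $j$. Then I set $F' = \{ q \in Q' \mid \sigma(\mathrm{point}(q)) \in \mathcal F \}$, where $\mathrm{point}(q)$ is any preimage, and conclude $L(\mathcal B) = \psi^{-1}(\sigma^{-1}(\mathcal F))$, with $|Q'| = \prod_{j=1}^k (I_j + P_j) \le N^k$ since each $I_j + P_j \le N$.

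The main obstacle is making the recursive decomposition across all $k$ directions rigorous: Proposition~\ref{prop:grid_aut_decomp} is stated for a single direction $j$, and the unary automata $\mathcal A_p^{(j)}$ for $p \in H_j$ are defined by a nested recursion over predecessor points, so unwinding them to see that the whole state label $\sigma(p)$ factors through the tuple of reductions $(\rho_j(m_j))_j$ requires a careful simultaneous induction — on $j$ and on the coordinates — keeping track that the uniform bound $N$ propagates and that wrapping in one coordinate is compatible with wrapping in the others. I expect this bookkeeping, rather than any single clever step, to be where the real work lies; the construction of $\mathcal B$ and the final size count are then routine.
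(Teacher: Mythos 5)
Your proposal is essentially the paper's own proof: you build the same product automaton with state set $\prod_{j=1}^k [I_j + P_j]$ and coordinatewise unary transitions that wrap from $I_j + P_j - 1$ back to $I_j$, and you aim to show that $\sigma(\psi(w))$ factors through the tuple of reduced coordinates $(\rho_j(m_j))_j$. The paper does exactly this, defining the final-state set as $E = \{ \mu(s_0, u) : \sigma(\psi(u)) \in \mathcal F \}$ and then showing $L(\mathcal C) \subseteq \{ u : \sigma(\psi(u)) \in \mathcal F\}$ by reducing to representatives with all coordinates below $I_j + P_j$.

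The one place where you are making trouble for yourself is the closing paragraph: you worry that establishing the factoring ``requires a careful simultaneous induction on $j$ and on the coordinates'' because of the nested recursion in Definition~\ref{def:sequ_grid_decomp_aut}. This is unnecessary. Proposition~\ref{prop:grid_aut_decomp} together with the uniform bound already gives you, for \emph{each single} coordinate $j$, the identity $\sigma(p + P_j e_j) = \sigma(p)$ whenever $p_j \ge I_j$ (this is Eq.~(13) in the paper's proof), valid for all other coordinates held arbitrary. From that identity you may shrink one coordinate at a time, in any order, until every coordinate lies below $I_j + P_j$; no interaction between coordinates arises, because the bound on $\mathcal A_p^{(j)}$ is uniform over all $p \in H_j$. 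So the factoring $\sigma(p) = \sigma(\rho_1(p_1),\ldots,\rho_k(p_k))$ follows in $k$ independent reduction passes rather than by a simultaneous recursion. Apart from that unfounded worry, your construction, final-state set, and size bound coincide with the paper's.
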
 
    \begin{proof}
     We use the same notation as introduced in the statement of the theorem.
     Let $p = (p_1, \ldots, p_k) \in \mathbb N_0^k$ and $j \in \{1,\ldots k\}$.
     Denote by $\sigma : \mathbb N_0^k \to \mathcal P(Q)$ the state label function
     from Definition~\ref{def:state_label_fn}.
     By Proposition \ref{prop:grid_aut_decomp},
     if $p_j \ge I_j$, we have 
     \begin{equation}\label{eqn:state_label_fn_ult_periodic}
     \sigma(p_1, \ldots, p_{j-1}, p_j + P_j, p_{j+1}, \ldots, p_k)
      = \sigma(p_1, \ldots, p_k).
     \end{equation}
     Construct the unary semi-automaton\footnote{The term semi-automaton is used
     for automata without a designated initial state, nor a set of final states.}
     $\mathcal A_j = (\{a_j\}, Q_j, \delta_j)$
     with 
     \begin{align*}
        Q_j & = \{s^{(j)}_0, s_1^{(j)}, \ldots, s_{I_j + P_j - 1}^{(j)} \}, \\
        \delta_j( s_i^{(j)} , a_j ) & = \left\{ \begin{array}{ll}
       s_{i+1}^{(j)} & \mbox{ if } i < I_j \\ 
       s_{I_j + (i-I_j+1) \bmod P_j}^{(j)} & \mbox{ if } i \ge I_j. \end{array}\right.
     \end{align*}
     Then build $\mathcal C = (\Sigma, Q_1 \times \ldots \times Q_k, \mu, s_0, E)$
     with
     \begin{align*}
       s_0                               & = (s_0^{(1)}, \ldots, s_0^{(k)}), \\
       \mu( (t_1, \ldots, t_k), a_j ) & = (t_1, \ldots, t_{j-1}, \delta_j(t_j, a_j), t_{j+1}, \ldots, t_k) \quad \mbox{ for all } 1 \le j \le k, \\
       E & = \{ \mu( s_0, u) : \sigma(\psi(u)) \in \mathcal F \}.
     \end{align*}
     By construction, for words $u,v \in \Sigma$ with $u \in \perm(v)$
     we have $\mu((t_1, \ldots, t_k), u) = \mu((t_1, \ldots, t_k), v)$
     for any state $(t_1, \ldots, t_k) \in Q_1 \times \ldots Q_k$.
     Hence, the language accepted by $\mathcal C$ is commutative.
     We will show that $L(\mathcal C) = \{ u \in \Sigma^* \mid \sigma(\psi(u)) \in \mathcal F \}$.
     By choice of $E$ we have $\{ u \in \Sigma^* \mid \sigma(\psi(u)) \in \mathcal F \}
     \subseteq L(\mathcal C)$.
     Conversely, suppose $w \in L(\mathcal C)$.
     Then $\mu(s_0, w) = \mu(s_0, u)$ for some $u \in \Sigma^*$
     with $\sigma(\psi(u)) \in \mathcal F$.
     Next, we will argue that we can find $w' \in L(\mathcal C)$
     and $u' \in \Sigma^*$ with $\sigma(\psi(u')) \in \mathcal F$, $\mu(s_0, w') = \mu(s_0, w) = \mu(s_0, u) = \mu(s_0, u')$ and $\max\{ |w'|_{a_j}, |u'|_{a_j} \} < I_j + P_j$
      for all $j \in \{1,\ldots, k\}$.
     
     \begin{enumerate}
     \item[(i)] By construction of $\mathcal C$, if $|w|_{a_j} \ge I_j + P_j$, we
      can find $w'$ with $|w'|_{a_j} = |w|_{a_j} - P_j$ such that
      $\mu(s_0, w') = \mu(s_0, w)$. So, applying this procedure repeatedly, we can find $w' \in \Sigma^*$
      with $|w'|_{a_j} < I_j + P_j$ for all $j \in \{1,\ldots, k\}$
      and $\mu(s_0, w) = \mu(s_0, w')$.

     \item[(ii)] If $|u|_{a_j} \ge I_j + P_j$, by Equation \eqref{eqn:state_label_fn_ult_periodic},
      we can find $u'$ with $|u'|_{a_j} = |u|_{a_j} - P_j$
      and $\sigma(\psi(u'))\in \mathcal F$.
    %
      By construction of $\mathcal C$, we have $\mu(s_0, u) = \mu(s_0, u')$.
      So, after repeatedly applying the above steps, we find $u' \in \Sigma^*$
      with $\sigma(\psi(u')) \in \mathcal F$, $\mu(s_0,u) = \mu(s_0, u')$ and $|u'|_{a_j} < I_j + P_j$ for all $j \in \{1,\ldots, k\}$.
     \end{enumerate}
     By construction of $\mathcal C$, for words $u, v \in \Sigma^*$
     with $\max\{|u|_{a_j}, |v|_{a_j}\} < I_j + P_j$
     for all $j \in \{1,\ldots, k\}$, we have 
     \begin{equation}\label{eqn:C} 
      \mu(s_0, u) = \mu(s_0, v) \Leftrightarrow u \in \perm(v) \Leftrightarrow  \psi(u) = \psi(v).
     \end{equation}
     Hence, using Equation~\eqref{eqn:C} for the words $w'$ and $u'$ from (i) and (ii) above, 
     as $\mu(s_0, u') = \mu(s_0, w')$, we
     find $\psi(u') = \psi(w')$.
     So $\sigma(\psi(w')) = \sigma(\psi(u')) \in \mathcal F$.
     Now, again using Equation~\eqref{eqn:state_label_fn_ult_periodic},
     this gives $\sigma(\psi(w)) \in \mathcal F$. $\qed$
    \end{proof}

    I refer to~\cite{DBLP:conf/dcfs/Hoffmann20} for examples and more explanations.
   
   \subsection{Automata Induced State Label Maps}
   \label{sec:automata_induced_state_label_maps}
   
    We call a state label map $\sigma : \mathbb N_0^k \to \mathcal P(Q)$ 
    given by a function $f : \mathcal P(Q) \times \Sigma \to \mathcal P(Q)$
    an \emph{automaton induced state label map}, if there exists some 
    semi-automaton $\mathcal A = (\Sigma, Q, \delta)$ such that
    $\delta(S, a) \subseteq f(S, a)$ for each $a \in \Sigma$. We also say
    that such an (semi-)automaton\footnote{For every automaton $\mathcal A = (\Sigma, Q, \delta, s_0, F)$
we can consider the corresponding semi-automaton $\mathcal A = (\Sigma, Q, \delta)$
and we will do so without special mentioning.} 
    is \emph{compatible with the state label map}. 
    This gives inductively that $\delta(S, w) \subseteq f(S, w)$
    for each word $w \in \Sigma^*$ and set $S \subseteq Q$.
    
    
    \begin{lemma} \label{lem:states_along_words}
     Let $\mathcal A = (\Sigma, Q, \delta)$ be a semi-automaton
     and suppose the state label map $\sigma : \mathbb N_0^k \to \mathcal P(Q)$
     is compatible with $\mathcal A$.
     Let $p,q \in \mathbb N_0^k$ with $q < p$, then
     $
      \delta(\sigma(q), w) \subseteq \sigma(p)
     $
     for each $w \in \Sigma^*$ with $p = \psi(w) + q$.
    \end{lemma}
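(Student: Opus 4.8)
The plan is to induct on the length $|w| = n$, where $p = \psi(w) + q$ and $q < p$ in the componentwise order (so in particular $w \neq \varepsilon$ and $n \ge 1$). For $n = 1$, write $w = b$ with $b \in \Sigma$, so $p = q + \psi(b)$. Then $q$ is one of the predecessor points summed over in Equation~\eqref{eqn:state_label_fn}, hence $f(\sigma(q), b) \subseteq \sigma(p)$. Since $\mathcal A$ is compatible with $\sigma$, we have $\delta(\sigma(q), b) \subseteq f(\sigma(q), b) \subseteq \sigma(p)$, which is the base case.

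For the inductive step, write $w = w'b$ with $w' \in \Sigma^*$ of length $n-1$ and $b \in \Sigma$. Set $r = q + \psi(w')$, so that $p = r + \psi(b)$ and $q \le r < p$ (the strict inequality $q < p$ and the fact that adding $\psi(w')$ then $\psi(b)$ only increases coordinates give $r < p$; the inequality $q \le r$ is immediate, and if $q = r$ then $w' = \varepsilon$ and we are back in the base case applied to $b$). If $q < r$, the induction hypothesis applied to $w'$ gives $\delta(\sigma(q), w') \subseteq \sigma(r)$. Now apply $\delta_b$ to both sides: monotonicity of $\delta(\cdot, b)$ as a map on subsets yields $\delta(\delta(\sigma(q), w'), b) \subseteq \delta(\sigma(r), b)$, i.e. $\delta(\sigma(q), w'b) \subseteq \delta(\sigma(r), b)$. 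Finally, since $p = r + \psi(b)$, the base case (or directly Equation~\eqref{eqn:state_label_fn} together with compatibility) gives $\delta(\sigma(r), b) \subseteq f(\sigma(r), b) \subseteq \sigma(p)$. Chaining the two inclusions gives $\delta(\sigma(q), w) \subseteq \sigma(p)$, as desired. The remaining edge case $q = r$ (equivalently $w' = \varepsilon$) is handled directly: then $\delta(\sigma(q), w) = \delta(\sigma(q), b) \subseteq \sigma(p)$ by the base case.

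There is no real obstacle here; the only point requiring a moment's care is bookkeeping the strict-versus-weak inequalities so that the induction hypothesis is legitimately applicable (it is stated only for $q < p$), and noting that the intermediate point $r$ may coincide with $q$, which is why the base case must be invoked separately in that situation. Everything else is monotonicity of $\delta$ on the power set together with the defining Equation~\eqref{eqn:state_label_fn} of a state label map and the compatibility inclusion $\delta(S,a) \subseteq f(S,a)$.
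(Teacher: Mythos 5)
Your proof is correct, but it takes a more self-contained route than the paper. The paper's proof is a two-liner: it invokes Lemma~\ref{lem:state_label_ind_form}, which already established that for $1 \le n \le p_1 + \cdots + p_k$ one has $\sigma(p) = \bigcup_{(q,w) \in \mathbb N_0^k \times \Sigma^n,\; p = q + \psi(w)} f(\sigma(q), w)$; from this, $f(\sigma(q), w) \subseteq \sigma(p)$ for your particular $(q,w)$ is immediate, and compatibility then gives $\delta(\sigma(q), w) \subseteq f(\sigma(q), w) \subseteq \sigma(p)$. Your induction on $|w|$ essentially re-derives the inclusion half of that lemma for the specific pair $(q,w)$: peeling off the last letter $b$, applying the inductive hypothesis at the intermediate point $r = q + \psi(w')$, and using monotonicity of $\delta$ on subsets together with one application of Equation~\eqref{eqn:state_label_fn} plus compatibility. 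Both arguments are sound; the paper's buys brevity by factoring through an already-proved auxiliary result, while yours buys independence from it at the cost of replaying the inductive bookkeeping (including the $q = r$ edge case, which you correctly handle but which Lemma~\ref{lem:state_label_ind_form} has already absorbed). One small simplification available to you: rather than splitting on whether $q = r$, you could observe that the chain $\delta(\sigma(q), w) \subseteq f(\sigma(q), w)$ holds by compatibility extended to words, and then induct on the claim $f(\sigma(q), w) \subseteq \sigma(p)$ alone, which avoids mixing $\delta$ and $f$ in the inductive step.
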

    \begin{proof} Let $w \in \Sigma^*$ with $p = \psi(w) + q$.
     Set $n = |w|$. As $q < p$ we have $1 \le n \le p_1 + \ldots + p_k$. 
     Hence, by Lemma~\ref{lem:state_label_ind_form},
     $f(\sigma(q), w) \subseteq \sigma(p)$.
     As the state label map is compatible with $\mathcal A$,
     we have $\delta(\sigma(q), w) \subseteq f(\sigma(q), w)$. \qed
     \end{proof}

    Our most important result, which generalizes a corresponding
    result from~\cite{DBLP:conf/dcfs/Hoffmann20}
    to automata induced state label maps, is stated next.
    
    \begin{proposition}
    \label{prop:state_label_for_perm_aut}
     Let $\mathcal A = (\Sigma, Q, \delta, s_0, F)$
     be a permutation automaton and $\sigma : \mathbb N_0^k \to \mathcal P(Q)$
     a state map compatible with $\mathcal A$.
     Then for every automaton $\mathcal A_p^{(j)}$
     from Definition~\ref{def:sequ_grid_decomp_aut}
     its index equals at most $(|Q|-1) L_j$
     and its period is divided by $L_j$,
     where $L_j$ denotes the order of the letter $a_j$ viewed
     as a permutation of $Q$, i.e., $\delta(q, a_j^{L_j}) = q$
     for any $q \in Q$ and $L_j$ is minimal with this property.
    \end{proposition}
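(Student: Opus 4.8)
The plan is to prove both assertions simultaneously by induction on the level $n = p_1 + \cdots + p_k$ of the point $p \in H_j$ (recall $p_j = 0$). For $m \ge 0$ let $S_m$ and $c_m$ denote, respectively, the state-set component and the counter component of $\delta_p^{(j)}(s_p^{(0,j)}, a_j^m)$. By Proposition~\ref{prop:grid_aut_decomp} we have $S_m = \pi_1(\delta_p^{(j)}(s_p^{(0,j)}, a_j^m)) = \sigma(p + m\,\psi(a_j))$, and by Definition~\ref{def:sequ_grid_decomp_aut} the counter runs through $0, 1, \dots, I + P - 1, I, I+1, \dots$, where $I$ is the maximal index and $P$ the least common multiple of the periods of the predecessor automata $\mathcal A_q^{(j)}$, $q = p - \psi(b) \in H_j$ with $b \ne a_j$ (and $I = 0$, $P = 1$ in the base case $p = (0,\dots,0)$, which has no predecessors). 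It therefore suffices to show that the sequence $(S_m, c_m)_{m \ge 0}$ is periodic, with period dividing $L_j$, from step $(|Q| - 1)L_j$ on.

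The two driving observations come from compatibility of $\sigma$ with $\mathcal A$ together with $\mathcal A$ being a permutation automaton. First, since $p + (m + L_j)\psi(a_j) = \psi(a_j^{L_j}) + (p + m\,\psi(a_j))$, Lemma~\ref{lem:states_along_words} gives $\delta(\sigma(p + m\,\psi(a_j)), a_j^{L_j}) \subseteq \sigma(p + (m + L_j)\psi(a_j))$, and since $\delta(q, a_j^{L_j}) = q$ for every $q \in Q$, the left-hand side is $S_m$; hence $S_m \subseteq S_{m + L_j}$ for all $m$. Second, Lemma~\ref{lem:states_along_words} (or Equation~\eqref{eqn:state_label_fn} directly) gives $\delta(S_m, a_j) \subseteq f(S_m, a_j) \subseteq \sigma(p + (m+1)\psi(a_j)) = S_{m+1}$; as $\delta(\cdot, a_j)$ is a bijection of $Q$, it preserves cardinality, so $|S_m| \le |S_{m+1}|$. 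The integer sequence $(|S_m|)_{m \ge 0}$ is thus non-decreasing and bounded by $|Q|$, hence eventually constant; let $m^\ast$ be the least index from which it is constant. A count of the strict increases — at most $|Q| - 1$ of them past the first nonempty value, and, by the argument below, none once the state set has entered a $\delta(\cdot, a_j)$-orbit — together with the inductive bound on $I$ will yield $m^\ast \le (|Q| - 1)L_j$.

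For $m \ge m^\ast$ the inclusion $\delta(S_m, a_j) \subseteq S_{m+1}$ holds between sets of equal cardinality, hence is an equality $S_{m+1} = \delta(S_m, a_j)$. Thus from step $m^\ast$ the sequence $(S_m)$ is the forward orbit of $S_{m^\ast}$ under the fixed permutation $S \mapsto \delta(S, a_j)$ of $\mathcal P(Q)$, an element of order dividing $L_j$; so $(S_m)_{m \ge m^\ast}$ is periodic with some period $\rho$ dividing $L_j$. Meanwhile $(c_m)$ is periodic with period $P$ from step $I$, and by the induction hypothesis every predecessor period divides $L_j$, so $P$, being their least common multiple, divides $L_j$ as well (and $P = 1 \mid L_j$ in the base case). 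Consequently, from step $\max(I, m^\ast)$ on the full state $(S_m, c_m)$ repeats with period $\operatorname{lcm}(\rho, P)$, a divisor of $L_j$. This shows that the period of $\mathcal A_p^{(j)}$ divides $L_j$, and that its index is at most $\max(I, m^\ast)$; by the induction hypothesis $I \le (|Q| - 1)L_j$, and by the previous paragraph $m^\ast \le (|Q| - 1)L_j$, so the index is at most $(|Q| - 1)L_j$. This closes the induction.

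The step I expect to be the main obstacle is obtaining the index bound with the precise constant $(|Q| - 1)L_j$, rather than a weaker bound such as $I + (|Q| - 1)L_j$ or $|Q| \cdot L_j$. The naive estimate bounds the stabilization time $m^\ast$ of the cardinalities and the counter's pre-period $I$ independently and adds them, which is too lossy; the tight argument must interleave the two, exploiting that once the state set lies on a $\delta(\cdot, a_j)$-orbit no further cardinality increase is possible, so all strict increases must occur earlier, and organizing this count modulo $L_j$ together with the inductive bound $I \le (|Q|-1)L_j$ on the aggregated predecessor indices (itself the delicate quantity). Once the permutation regime is identified and $P \mid L_j$ is available from the induction hypothesis, the divisibility of the period is immediate.
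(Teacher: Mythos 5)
Your overall architecture matches the paper's: monotonicity of the cardinalities $|S_m|$, the observation that an inclusion $\delta(S_m,a_j)\subseteq S_{m+1}$ between equinumerous sets under a permutation forces equality and hence a periodic orbit of period dividing $L_j$, and induction over predecessor automata to get $P\mid L_j$. The period statement is therefore fine. The index bound, however, has a genuine gap, and it is exactly the one you flag at the end without resolving. Two problems compound. First, your claim that no cardinality increase is possible ``once the state set has entered a $\delta(\cdot,a_j)$-orbit'' is false before the counter has passed $I$: on a plateau one has $S_{m_0+L_j}=S_{m_0}$, but $S_{m_0+L_j+1}$ is a union of $f(S_{m_0+L_j},a_j)$ with contributions $f(\pi_1(\delta_q^{(j)}(\cdot,a_j^{m_0+L_j+1})),b)$ from the \emph{other} predecessors, and these are not yet periodic for $m_0+L_j+1\le I$, so they can inject new states and restart growth. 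Gaps between strict increases are only bounded by $L_j$ \emph{after} step $I$ (there, a plateau of length $L_j$ makes the full state $(S,c)$ recur, trapping the automaton in its cycle). Second, even granting that, the count gives $m^\ast\le I+(|Q|-|S_I|)L_j$, and with only the induction hypothesis $I\le(|Q|-1)L_j$ this is about $2(|Q|-1)L_j$, not $(|Q|-1)L_j$.

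The missing idea is a \emph{strengthened} induction hypothesis: each automaton's index is at most $(|T|-1)L_j$ where $T$ is (the common cardinality class of) its cycle state labels, not merely $(|Q|-1)L_j$. This is what the paper proves in Claim (v) via its Claim 2: for a predecessor $\mathcal A_q^{(j)}$ with $p=q+\psi(b)$ and cycle label $S$, one shows $\delta(S,b)\subseteq S_I$, and since $b$ permutes $Q$ this gives $|S|\le|S_I|$; inductively the predecessor's index is at most $(|S|-1)L_j\le(|S_I|-1)L_j$, hence $I\le(|S_I|-1)L_j$. Then the telescoping $I+(|Q|-|S_I|)L_j\le(|S_I|-1)L_j+(|Q|-|S_I|)L_j=(|Q|-1)L_j$ closes the argument. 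Without tying $I$ to the cardinality $|S_I|$ in this way, the constant $(|Q|-1)$ in the statement is not reachable by your accounting.
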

    \begin{proof} 
It might be helpful for the reader
to have some idea of how the symmetric group (or any permutation group) acts on subsets
of its permutation domain, 
see for example~\cite{cameron_1999} for further information. 
We also say that the letter $a_j$ acts (or operates) on a subset $S \subseteq Q$, 
the action being given by the transition function $\delta : Q \times \Sigma \to Q$,
where $\delta(S, a_j)$ is the result of the action of $a_j$ on $S$.
Set
$$
 \mathcal P = \{ \mathcal A_q^{(j)} \mid p = q + \psi(b) \mbox{ for some } b \in \Sigma \}.
$$
Denote by $I$ the maximal index and by $P$ the least common multiple
of the periods of the unary automata in $\mathcal P$.

First the case $\mathcal P = \emptyset$, which
is equivalent with $p = (0,\ldots, 0)$.
In this case, $I = 0, P = 1$, $Q_p^{(j)} = \mathcal P(Q) \times \{0\}$
and Equation~\eqref{eqn:def_unary_aut_transition}
reduces to
$$
 \delta_p^{(j)}((S, 0), a_j) = (f(S, a_j), 0)
$$
for $S \subseteq Q$. As the state label map is compatible with $\mathcal A$,
we have $\delta(S, a_j) \subseteq f(S, a_j)$.
So, as $a_j$ permutes the states $Q$, 
if $|S| = |f(S, a_j^n)|$ for $n \ge 0$, then $f(S, a_j^n) = \delta(S, a_j^n)$.
As for each $S \subseteq Q$ we have $\delta(S, a_j^{L_j}) = S$,
if $|f(S, a_j^n)| = |S|$, which gives $f(S, a_j^n) = \delta(S, a_j^n)$, we find $0 \le m < L_j$
with $f(S, a_j^m) = f(S, a_j^n)$.
Let
$$
 R = \{ n > 0 : |f(\sigma(p), a_j^{n-1})| < |f(\sigma(p), a_j^n)| \}.
$$
If $R = \emptyset$, then $f$ does not add any states as symbols are read,
and the automaton $\mathcal A_p^{(j)}$
is essentially the action of $a_j$ starting on the set $\sigma(p)$, i.e.,
the orbit $\{ \sigma(p), \delta(\sigma(p), a_j), \delta(\sigma(p), a_j^2), \ldots \}$.
Hence we have index zero and some period dividing $L_j$, as the letter $a_j$ is a permutations
of order $L_j$ on $Q$.
If $R \ne \emptyset$, then $R$ is finite, as the sets could
not grow indefinitely. Let $m = |R|$
and write $R = \{ n_i \mid i \in \{1,\ldots, m\} \}$
with $n_i < n_{i+1}$ for $i \in \{1,\ldots, m-1\}$, i.e., the sequence
orders the elements from $R$. 
We have $n_{i+1} - n_i \le L_j$ and $n_1 \le L_j$,
for if $n_i \le k < n_{i+1}$ (or $k < n_1$), then
with $S = f(\sigma(p), a_j^{n_i})$ (or $S = \sigma(p))$), as argued previously,
we find $f(S, a_j^k) = \delta(S, a_j^k)$. Assuming $n_{i+1} - n_i > L_j$ (or similarly $n_1 > L_j$)
would then yield $f(S, a_j^{L_j}) = S$, and so for every $k \ge n_i$, writing $k = qL_j + r$,
we have $f(S, a_j^k) = f(S, a_j^r)$ and the cardinalities could not grow anymore, i.e., we would be stuck in a cycle.
So by definition of $R$, $|\sigma(p)| < |f(\sigma(p), a_j^{n_1}| < \ldots < |f(\sigma(p), a_j^{n_m}| \le |Q|$.
This gives $m \le |Q| - |\sigma(p)|$.
By choice, for $n \ge n_m$
we have $|f(\sigma(p), a_j^{n_m})| = |f(\sigma(p), a_j^n)|$.
Hence it is again just the action of $a_j$ starting
on the subset $f(\sigma(p), a_j^{n_m})$. So we are in the cycle, and
the period of $\mathcal A_p^{(j)}$ divides $L_j$, as the operation of $\mathcal A_p^{(j)}$
could be identified with the function $f : \mathcal P(Q) \times \Sigma \to \mathcal P(Q)$ for $p = (0,\ldots, 0)$.
Note that $n_m$ is precisely the index of $\mathcal A_p^{(j)}$,
and by the previous considerations $n_m \le (|Q| - |\sigma(p)|)L_j$.

\medskip

So, now suppose $\mathcal P \ne \emptyset$.
We split the proof into several steps. 
Note that the statements (ii), (iii), (iv), (v) written below are also proven
by the above considerations for the case $\mathcal P = \emptyset$.
Hence, we can argue inductively in their proofs.
Let $S, T \subseteq Q$.

\begin{enumerate}
\item[(i)] \underline{Claim:} If $(T, y) = \delta_p^{(j)}((S, x), a_j^r)$
 for some $r \ge 0$, then $|T| \ge |S|$. In particular,
 the state labels of cycle states all have the same cardinality.
 
  \smallskip 
  
 \underline{Proof of Claim (i):} By Equation~\eqref{eqn:def_seq_grid_aut_transition}, 
 $f(S, a_j) \subseteq \pi_1(\delta_p^{(j)}((S, x), a_j))$.
 As the state label map is compatible with $\mathcal A$,
 we have $\delta(S, a_j) \subseteq f(S, a_j)$,
 and as $a_j$ is a permutation on the states, we have
 $|S| = |\delta(S, a_j)|$. 
 Hence $|S| \le |\pi_1(\delta_p^{(j)}((S, x), a_j))|$,
 which gives the claim inductively. 
 As states on the cycle could be mapped to each other, the
 state labels from cycle states all have the same cardinality.
 
 \medskip

\item[(ii)]  \underline{Claim:}  Let $L_S = \lcm\{ |\{ \delta(s, a_j^i) : i \ge 0 \}| : s \in S \}$, i.e.
 the least common multiple of the orbit lengths\footnote{For a permutation $\pi : [n] \to [n]$ 
 on a finite set $[n]$ and $m \in [n]$, the orbit length of $m$
 under the permutation $\pi$ is $|\{ \pi^i(m) : i \ge 0 \}|$. In~\cite{DBLP:conf/dcfs/Hoffmann20},
 the orbit length of an element is also called the cycle length of that element, as 
 it is precisely the size of the unique cycle in which the element $m$ appears with respect
 to the permutation.}
 of all elements in $S$.
 For $x \ge I$ and $(T,y) = \delta_p^{(j)}( (S, x), a_j^{\lcm(P, L_S)} )$,
 if $|T| = |S|$, then $(T,y) = (S,x)$. So, by Lemma~\ref{lem:unary_period_divides},
 the period of $\mathcal A_p^{(j)}$ divides $\lcm(P, L_S)$.

   \medskip

 \underline{Proof of Claim (ii):} From Equation \eqref{eqn:def_seq_grid_aut_transition} of Definition \ref{def:sequ_grid_decomp_aut} and the fact that the state map is compatible with $\mathcal A$,
       we get inductively $\delta(S, a_j^i) \subseteq f(S, a_j^i) \subseteq \pi_1(\delta_p^{(j)}((S,x), a_j^i))$
       for all $i \ge 0$.
       So, as $\delta(s, a_j^{L_S}) = s$
       for all $s \in S$,
       this gives $S \subseteq T$. Hence, as $|S| = |T|$, we get $S = T$.
       Furthermore, as $x \ge I_i$, by Equation \eqref{eqn:def_unary_aut_transition} of 
       Definition \ref{def:sequ_grid_decomp_aut}, as $P$ divides $\lcm(P, L_S)$, we have $x = y$. 
       By Lemma \ref{lem:unary_period_divides}, this implies
       that the period of $\mathcal A_p^{(j)}$ divides $\lcm(P, L_S)$.

 \medskip

\item[(iii)]  \underline{Claim:} With the notation from (ii), the number
 $\lcm(P, L_S)$ divides $L_j$
 and the  period of $\mathcal A_p^{(j)}$
 divides $L_j$.
 
 \medskip 
   
 \underline{Proof of Claim (iii):} With the notation from (ii),
 as $L_j = \lcm\{ |\delta(q, a_j^i) : i \ge 0 \}| : q \in Q \}$,
 $L_S$ divides $L_j$.
 Inductively, the periods of all unary automata in $\mathcal P$
 divide $L_j$. So, as $P$ is the least common multiple of these periods,
 also  $P$ divides $L_j$.
 Hence $\lcm(P, L_S)$ divides $L_j$. So, with Claim (ii),
 the period of $\mathcal A_p^{(j)}$ divides $L_j$.

 \medskip 
 
\item[(iv)] \underline{Claim:} For $x \ge I$
 and $(T,y) = \delta_p^{(j)}( (S, x), a_j^{L_j} )$,
 if $|T| = |S|$, then $(T,y) = (S,x)$.
 
 \medskip 
 
 \underline{Proof of Claim (iv):}
 With the notation from (ii) and Claim (iii),
 we can write $L_j = m \cdot \lcm(P, L_j)$ for some natural number $m \ge 1$.
 Set $(R, z) = \delta_p^{(j)}( (S, x), a_j^{\lcm(P, L_S)} )$.
 By (i), we have $|S| \le |R| \le |T|$.
 By assumption $|S| = |T|$, hence $|S| = |R|$.
 So, we can apply (ii), which yields $(R,z) = (S,x)$.
 Applying this repeatedly $m$ times gives $(T, y) = (S,x)$.
 
  \medskip 
  
\item[(v)] \underline{Claim:} If $T$ is the state label of any cycle state
 of $\mathcal A_p^{(j)}$, then the index of $\mathcal A_p^{(j)}$
 is bounded by $(|T|-1) L_j$.
 
   \medskip 
 
 \underline{Proof of Claim (v):}  We define a sequence $(T_n, y_n) \in Q_p^{(j)}$ of states for $n \in \mathbb N_0$.
         Set $(T_0, y_0) = \delta_p^{(j)}(s_p^{(0,j)}, a_j^I)$, which implies $y_0 = I$ by 
         Equation~\eqref{eqn:def_unary_aut_transition}, and
         $$
          (T_n, y_n) = \delta_p^{(j)}( (T_{n-1}, y_{n-1}), a_j^{L_j} ) 
         $$
         for $n > 0$. Note that, as $P$ divides $L_j$, by Equation~\eqref{eqn:def_unary_aut_transition}, 
         we have $y_n = I$ for all $n \ge 0$.
         
         \medskip 
         
         \noindent\underline{Claim 1:} Let $(T, x)\in Q_p^{(j)}$ be some state from the cycle of $\mathcal A_p^{(j)}$.
           Then the state $(T_{|T|-|T_0|}, y_{|T|-|T_0|}) = \delta_p^{(j)}( s_p^{(0,j)}, a_j^{I + (|T| - |T_0|) L_j} )$
           is also from the cycle of $\mathcal A_p^{(j)}$.
           
           \medskip
           
         By construction, and Equation \eqref{eqn:def_unary_aut_transition} 
         from Definition \ref{def:sequ_grid_decomp_aut}, we have $y_n \ge I$ for all $n$.
         If $T_{n+1} \ne T_n$, then, by (iv) and (i), we have $|T_{n+1}| > |T_n|$ (remember $y_n = y_{n+1} = I$).
         Hence\footnote{Also, as $T_n = \delta(T_n, a_J^{L_j}) \subseteq f(T_n, a_j^{L_j})
         \subseteq \pi_1(\delta_p^{(j)}((T_n, I), a_j^{L_j}))$, we find $T_n \subseteq T_{n+1}$.}, 
         by finiteness, we must have a smallest $m$ such that $T_{m+1} = T_m$.
         As also $y_{m+1} = y_m$, we are on the cycle of $\mathcal A_p^{(j)}$, and the period
         of this automaton divides $L_j$ by (iv).
         This yields $(T_n, y_n) = (T_m, y_m)$ for all $n \ge m$.
         By (i),
         the size of the state label sets on the cycle stays constant, and just
         grows before we enter the cycle.
         As we could
         add at most $|T_m| - |T_0|$ elements, and for $T_0, T_1, \ldots, T_{m}$
         each time at least one element is added, we have, as $m$
         was chosen minimal, that $m \le |T| - |T_0|$,
         where $T$ is any state label on the cycle, which all have the same cardinality $|T| = |T_m|$
         by (i).
         This means we could read at most $|T| - |T_0|$
         times the sequence $a_j^{L_j}$, starting from $(T_0, I)$,
         before we enter the cycle of $\mathcal A_p^{(i)}$.
         
         \medskip 
         
         \noindent\underline{Claim 2:} We have $I \le (|T_0|-1) L_j$.
          
          \medskip
          
          Remember, the case $\mathcal P = \emptyset$ was already handled, for then $p = (0,\ldots, 0)$
          and $I = 0$. Otherwise, let $\mathcal A_q^{(j)} \in \mathcal P$ with $p = q + \psi(b)$
          for $b \in \Sigma$. 
          Let $(S, x) = \delta_q^{(j)}( s_q^{(0,j)}, a_j^n )$ with $n \ge 0$.
          If $n > 0$, by Equation \eqref{eqn:def_seq_grid_aut_transition}
          from Definition \ref{def:sequ_grid_decomp_aut}, 
          we have, with $(R, z) = \delta_p^{(j)}(s_p^{(0,j)}, a_j^{n-1})$,
          \begin{align*} 
            \pi_1(\delta_p^{(j)}(s_p^{(0,j)}, a_j^n)) 
             & = \pi_1(\delta_p^{(j)}((R,z), a_j)) \\
             & = f(R, a_j) \cup \bigcup_{\substack{(r, a) \in \mathbb N_0^k \times\Sigma\\ p = r + \psi(a)}} f(\pi_1(\delta_r^{(j)}(s_r^{(0,j)}, a_j^n)), a).
          \end{align*}
          If $n = 0$, we have
          $$
            \pi_1(\delta_p^{(j)}(s_p^{(0,j)}, a_j^n)) = \pi_1((\sigma(p), 0)) = \sigma(p).
          $$
          In the latter case, also
          $S = \pi_1(\delta_q^{(j)}( s_q^{(0,j)}, a_j^0) ) = \sigma(q)$
          and, as $p \ne (0, \ldots 0)$ (which is equivalent to $\mathcal P \ne \emptyset$),
          by Equation~\eqref{eqn:state_label_fn} and
          as the state label map is compatible with $\mathcal A$, we have
          $\delta(S, b) \subseteq f(S,b) \subseteq \sigma(p)$.
          In the former case $n > 0$,
          $$
           \delta(S, b) \subseteq f(S,b) \subseteq \bigcup_{\substack{(r, a) \in \mathbb N_0^k \times\Sigma\\ p = r + \psi(a)}} f(\pi_1(\delta_r^{(j)}(s_r^{(0,j)}, a_j^n)), a)
          $$
          So, in any case, $\delta(S, b) \subseteq \pi_1(\delta_p^{(j)}(s_p^{(0,j)}, a_j^n))$.
          In particular for $n = I$ we get $\delta(S, b) \subseteq T_0$, and
          as $b$ induces a permutation on the states, this gives $|S| \le |T_0|$.
          Also for $n \ge I$, we are on the cycle of $\mathcal A_q^{(j)}$.
          Hence, inductively, the index of $\mathcal A_q^{(j)}$ is at most $(|S|-1) L_j \le (|T_0|-1) L_j$.
          As $\mathcal A_q^{(j)} \in \mathcal P$ was chosen arbitrary,
          we get $I \le (|T_0|-1)L_j$.
       
         \medskip
         \bigskip 
         
         With Claim (2) above, we can derive the upper bound $(|T|-1) L_j$ for the length of the word $ a_j^{I + (|T| - |T_0|)}$ from Claim (1),
         as 
         $$
          I + (|T| - |T_0|)L_j \le (|T_0|-1) L_j + (|T| - |T_0|) L_j = (|T|-1) L_j.
         $$
         And as Claim (1) essentially says that the index of $\mathcal A_p^{(j)}$
         is smaller than $I + (|T| + |T_0|) L_j$, this gives Claim (v). 
         Also, as $|T| \le |Q|$, the claim about the index 
         of the statement in Proposition~\ref{prop:state_label_for_perm_aut} 
         is proven. So, in total, (iii) and (v) give Proposition~\ref{prop:state_label_for_perm_aut}. $\qed$
\end{enumerate}
\end{proof}


\subsection{The State Label Method for Iterated Shuffle}

So, after we have introduced the state label method, state label maps that are
compatible with an automaton and a general regularity-criterion for permutation automata,
we are ready to give the proof of Theorem~\ref{thm:perm_grp_it_shuffle}.

First, we need
   to define our state label map.
   
    \begin{definition}\label{def:state_label_it_shuffle}
      Let $\mathcal A = (\Sigma, Q, \delta, s_0, F)$
      be a finite automaton. Denote by $\sigma_{\mathcal A, +} : \mathbb N_0^k \to \mathcal P(Q)$
      the state label function given by
      $f : P(Q) \times \Sigma \to P(Q)$,
      where
      \begin{equation}\label{eqn:def-state_label_it_shuffle}
       f(Q, a) = \left\{ 
       \begin{array}{ll}
        \delta(Q, a) \cup \{ s_0 \} & \mbox{if } \delta(Q, a) \cap F \ne \emptyset; \\ 
        \delta(Q, a)                & \mbox{otherwise;}
       \end{array}\right.
      \end{equation}
      and $\sigma_{\mathcal A, +}(0, \ldots, 0) = \{ s_0 \}$.
   \end{definition}

   To derive our results, we need the following formula for the image of the state label map
   at a given point.
   
\begin{proposition}
    \label{prop:it_shuffle_state_label_map_new}
    Let $\Sigma = \{a_1, \ldots, a_k\}$ and $\mathcal A = (\Sigma, Q, \delta, s_0, F)$
    be a finite automaton. For the state-label function
    from Definition~\ref{def:state_label_it_shuffle}
    we have
    $$
     \sigma_{\mathcal A, +}(p) = \left\{ 
     \begin{array}{ll}
      A_p \cup B_p & \mbox{if } (A_p \cup B_p) \cap F = \emptyset; \\ 
      A_p \cup B_p \cup \{s_0\} & \mbox{otherwise;}
     \end{array}
     \right.
    $$
    where $A_p = \{ \delta(s_0, w) \mid \psi(w) = p \}$
    and $B_p = \{ \delta(s_0, w) \mid \exists q \in \mathbb N_0^k : q < p \mbox{ and } q + \psi(w) = p \mbox{ and } \sigma_{\mathcal A, +}(q) \cap F \ne \emptyset \}$.
\end{proposition}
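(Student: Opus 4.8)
The plan is to prove the displayed identity by induction on $n=p_1+\cdots+p_k$. Throughout write $\sigma:=\sigma_{\mathcal A,+}$ and abbreviate $C_p:=A_p\cup B_p$, so that the claim reads: $C_p\subseteq\sigma(p)\subseteq C_p\cup\{s_0\}$, together with the stipulation that $s_0\in\sigma(p)$ holds precisely when $s_0\in C_p$ or $C_p\cap F\neq\emptyset$. Two preliminary remarks set things up. First, the function $f$ of Definition~\ref{def:state_label_it_shuffle} always satisfies $\delta(S,a)\subseteq f(S,a)$, so $\sigma$ is compatible (in the sense of Section~\ref{sec:automata_induced_state_label_maps}) with the semi-automaton $(\Sigma,Q,\delta)$ underlying $\mathcal A$, and Lemma~\ref{lem:states_along_words} is at our disposal for $\sigma$. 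Second, the following \emph{auxiliary claim} holds and needs no induction: if $p\neq(0,\ldots,0)$ and $\sigma(p)\cap F\neq\emptyset$, then $s_0\in\sigma(p)$. Indeed, by Definition~\ref{def:state_label_fn} a state $t\in\sigma(p)\cap F$ lies in some $f(\sigma(q),b)$ with $p=q+\psi(b)$; if we were in the ``otherwise'' branch of $f$, we would have $t\in\delta(\sigma(q),b)$ with $\delta(\sigma(q),b)\cap F=\emptyset$, contradicting $t\in F$, so we are in the first branch and $s_0\in f(\sigma(q),b)\subseteq\sigma(p)$.

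I would first establish $C_p\subseteq\sigma(p)$ directly, without the induction. If $t=\delta(s_0,w)\in A_p$ with $\psi(w)=p$, then, starting from $\sigma(0,\ldots,0)=\{s_0\}$ and applying Lemma~\ref{lem:states_along_words} (or trivially when $p=(0,\ldots,0)$), we get $t\in\delta(\sigma(0,\ldots,0),w)\subseteq\sigma(p)$. If $t=\delta(s_0,w)\in B_p$, there is $q<p$ with $q+\psi(w)=p$ and $\sigma(q)\cap F\neq\emptyset$; by the auxiliary claim (or because $\sigma(0,\ldots,0)=\{s_0\}$ if $q=(0,\ldots,0)$) we have $s_0\in\sigma(q)$, so Lemma~\ref{lem:states_along_words} yields $t\in\delta(\sigma(q),w)\subseteq\sigma(p)$. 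Moreover, if $C_p\cap F\neq\emptyset$ then $\sigma(p)\cap F\neq\emptyset$ by the inclusion just proved, and hence $s_0\in\sigma(p)$ — by the auxiliary claim when $p\neq(0,\ldots,0)$, and trivially when $p=(0,\ldots,0)$. This gives the ``$\supseteq$'' half.

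For the reverse inclusion I run the induction on $n$. The base case $p=(0,\ldots,0)$ is immediate since both sides equal $\{s_0\}$. For $n\ge 1$, unfold $\sigma(p)=\bigcup_{p=q+\psi(b)}f(\sigma(q),b)$, noting that each predecessor $q$ has strictly smaller $n$. The elementary ``transport'' inclusions $\delta(A_q,b)\subseteq A_p$ and $\delta(B_q,b)\subseteq B_p$ hold whenever $p=q+\psi(b)$ (append $b$ to the witnessing words, using $q<p$). By the induction hypothesis $\sigma(q)\subseteq C_q\cup\{s_0\}$, so for $s\in\sigma(q)$: if $s\in C_q$ then $\delta(s,b)\in C_p$ by the transport inclusions; and if $s=s_0\notin C_q$, then the induction hypothesis forces $C_q\cap F\neq\emptyset$, whence $\delta(s_0,b)\in B_p$ (witnessed by the one-letter word $b$ and the point $q$, since $\sigma(q)\cap F\supseteq C_q\cap F\neq\emptyset$). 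In all cases $\delta(\sigma(q),b)\subseteq C_p$, so $f(\sigma(q),b)\subseteq C_p\cup\{s_0\}$ and therefore $\sigma(p)\subseteq C_p\cup\{s_0\}$. To finish: if $C_p\cap F\neq\emptyset$, combine with the ``$\supseteq$'' half to conclude $\sigma(p)=C_p\cup\{s_0\}$; if $C_p\cap F=\emptyset$ and $s_0\in\sigma(p)$, pick $(q,b)$ with $s_0\in f(\sigma(q),b)$ — either $s_0\in\delta(\sigma(q),b)\subseteq C_p$, or $s_0$ was added because $\delta(\sigma(q),b)\cap F\neq\emptyset$, which is impossible as $\delta(\sigma(q),b)\subseteq C_p$ and $C_p\cap F=\emptyset$ — so $s_0\in C_p$ and $\sigma(p)=C_p$.

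The step I expect to be the main obstacle is the bookkeeping for the start state $s_0$: one must verify that $s_0$ is injected into the state label \emph{no later} than the point recorded by $B_p$, so that $B_p$ accounts for every ``restart'' (including restarts occurring after earlier restarts), and at the same time that $s_0$ is never present spuriously — this is exactly the $C_p\cap F=\emptyset$ case above. A related subtlety is that $B_p$ is defined in terms of $\sigma_{\mathcal A,+}$ itself through the condition $\sigma_{\mathcal A,+}(q)\cap F\neq\emptyset$; the induction is what keeps this well-founded, and the auxiliary claim — that a final state in a label forces $s_0$ into that label — is what lets the ``$\subseteq$'' and ``$\supseteq$'' directions be handled almost independently, which I expect to keep the proof manageable.
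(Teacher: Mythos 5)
Your proposal is correct, and it takes a genuinely different route from the paper's proof. The paper runs a single bidirectional induction, inside which it establishes three intermediate Claims: a biconditional linking $(A_q\cup B_q)\cap F\neq\emptyset$ with $\sigma_{\mathcal A,+}(q)\cap F\neq\emptyset$ for predecessors $q$; a predecessor decomposition showing $A_p=\bigcup_{p=q+\psi(b)}\delta(A_q,b)$ and $B_p=\bigcup_{p=q+\psi(b)}\delta(B_q,b)\cup\bigcup_{p=q+\psi(b),\,\sigma_{\mathcal A,+}(q)\cap F\neq\emptyset}\delta(\{s_0\},b)$; and an if-and-only-if characterization of when $(A_p\cup B_p)\cap F\neq\emptyset$ in terms of the branches of $f$ at predecessors. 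It then plugs these into the unfolded union defining $\sigma_{\mathcal A,+}(p)$. You instead split the two inclusions, prove $A_p\cup B_p\subseteq\sigma_{\mathcal A,+}(p)$ directly (without induction) via Lemma~\ref{lem:states_along_words} and a short auxiliary observation that a final state entering a label drags $s_0$ in with it, and reserve the induction for the reverse inclusion, where your transport inclusions $\delta(A_q,b)\subseteq A_p$ and $\delta(B_q,b)\subseteq B_p$ are essentially the forward half of the paper's Claim~2. The two treatments of the $s_0$ bookkeeping are the real divergence: the paper routes it through the case-analysis machinery of Claims~1 and~3, while your auxiliary claim lets you dispatch the ``final state $\Rightarrow$ $s_0$ present'' direction once and for all and keep the $C_p\cap F=\emptyset$ case as a short contradiction at the end. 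What your version buys is a shorter induction with less simultaneous bookkeeping, at the cost of leaning explicitly on the compatibility lemma from the earlier section (the paper does not invoke it here). Both are sound, and yours makes the dependency structure of the argument more visible.
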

\begin{proof}
For $p = (0,\ldots, 0)$ the statement is clear. If $p \ne (0,\ldots,0)$, then by definition
\begin{equation}\label{eqn:sigma_def}
 \sigma_{\mathcal A,+}(p) = \bigcup_{\substack{(q,b) \\ p = q + \psi(b)}} f(\sigma_{\mathcal A,+}(q), b). 
\end{equation}
For $q$ with $p = q + \psi(b)$ for some $b \in \Sigma$ set
\begin{align*}
 A_q = & \{ \delta(s_0, w) \mid \psi(w) = q \} \\
 B_q = & \{ \delta(s_0, w) \mid \exists r \in \mathbb N_0^k : r < q \mbox{ and } r + \psi(w) = q \mbox{ and } \sigma_{\mathcal A,+}(r) \cap F \ne \emptyset \}.
\end{align*}
Inductively, 
$$
 \sigma_{\mathcal A,+}(q) = \left\{ 
 \begin{array}{ll}
  A_q \cup B_q & \mbox{if } (A_q \cap B_q) \cap F = \emptyset, \\ 
  A_q \cup B_q \cup \{s_0\} & \mbox{otherwise.}
 \end{array}
 \right.
$$
Hence, using Definition~\ref{def:state_label_it_shuffle}, $f(\sigma_{\mathcal A,+}(q), b)$ equals
\begin{equation}\label{eqn:f_it_shuffle}
 \left\{ 
 \begin{array}{ll}
  \delta(A_q \cup B_q, b)              & \mbox{if } (A_q \cup B_q) \cap F = \emptyset, \delta(A_q \cup B_q, b) \cap F = \emptyset, \\ 
  \delta(A_q \cup B_q, b) \cup \{s_0\} & \mbox{if } (A_q \cup B_q) \cap F = \emptyset,  \delta(A_q \cup B_q, b) \cap F \ne \emptyset \\ 
  \delta(A_q \cup B_q \cup \{s_0\}, b)              & \mbox{if } (A_q \cup B_q) \cap F \ne \emptyset, \delta(A_q \cup B_q \cup \{s_0\}, b) \cap F = \emptyset, \\ 
  \delta(A_q \cup B_q \cup \{s_0\}, b) \cup \{s_0\} & \mbox{if } (A_q \cup B_q) \cap F \ne \emptyset,  \delta(A_q \cup B_q \cup \{s_0\}, b) \cap F \ne \emptyset. \\
 \end{array}
 \right.
\end{equation}

Under the induction hypothesis, i.e., that the formula holds true for $q < p$, in particular if $p = q + \psi(b)$
for some $b \in \Sigma$, we prove various claims that we use to derive our final formula.

\medskip 

\noindent\underline{Claim 1:} For $q \in \mathbb N_0^k$ with $p = q + \psi(b)$ for some $b \in \Sigma$ we have
$$
 (A_q \cup B_q) \cap F \ne \emptyset \Leftrightarrow \sigma_{\mathcal A,+}(q) \cap F \ne \emptyset.
$$
\begin{quote}
 \textit{Proof of the claim.} If $(A_q \cup B_q) \cap F \ne \emptyset$, then $\sigma_{\mathcal A, *}(q) \cap F \ne \emptyset$
 by induction hypothesis.
 If $\sigma_{\mathcal A, *}(q) \cap F \ne \emptyset$, assume $(A_q \cup B_q) \cap F = \emptyset$.
Then, using inductively that the formula holds true for $q$, this gives
$\sigma_{\mathcal A,+}(q) = A_q \cup B_q$, which implies $(A_q \cup B_q) \cap F \ne \emptyset$.
Hence, this is not possible and we must have $(A_q \cup B_q) \cap F \ne \emptyset$. \qed
\end{quote}

\medskip

\noindent\underline{Claim 2:} We have
\begin{align*}
 A_p & = \bigcup_{\substack{(q,b) \\ p = q + \psi(b)}} \delta(A_q, b), \\ 
 B_p & = \bigcup_{\substack{(q,b) \\ p = q + \psi(b)}} \delta(B_q, b) \cup \bigcup_{\substack{(q,b) \\ p = q + \psi(b) \\ \sigma_{\mathcal A,+}(q) \cap F \ne \emptyset}} \delta(\{s_0\}, b).
\end{align*}
\begin{quote}
 \textit{Proof of the claim.} The first equation is obvious. 
 For the other, first let $\delta(s_0, w) \in B_p$ for some $w \in \Sigma^*$.
 Then, we have $r \in \mathbb N_0^k$ such that
 $$
  r < p, \quad r + \psi(w) = p, \mbox{ and } \sigma_{\mathcal A,+}(r) \cap F \ne \emptyset.
 $$
 Write $w = ub$ with $b \in \Sigma$ (note that by definition of the sets $B_p$
 we have $|w| > 0$ here). If $r < p - \psi(b)$,
 then $\delta(s_0, u) \in B_{p - \psi(b)}$
 and so $\delta(s_0, w) \in \delta(B_{p - \psi(b)}, b)$.
 Otherwise $r = p - \psi(b)$, which implies $u = \varepsilon$ and $w = b$. 
 In this case, 
 $$
  \delta(s_0, b) \in \bigcup_{\substack{(q,b) \\ p = q + \psi(b) \\ \sigma_{\mathcal A,+}(q) \cap F \ne \emptyset}} \delta(\{s_0\}, b).
 $$
 Hence, $B_p$ is included in the set on the right hand side. The inclusion of the other two sets
 in $B_p$ is obvious. $\qed$
\end{quote}

\medskip 

\noindent\underline{Claim 3:} We have $(A_p \cup B_p) \cap F \ne \emptyset$ if and only if
there exists $q \in \mathbb N_0^k$ and $b \in \Sigma^*$ with $p = q + \psi(b)$ such that 
at least one of the conditions is fulfilled:
\begin{enumerate}
\item[(1)] $(A_q \cup B_q) \cap F = \emptyset$ and $\delta(A_q \cup B_q, b) \cap F \ne \emptyset$,
\item[(2)] $(A_q \cup B_q) \cap F \ne \emptyset$ and $\delta(A_q \cup B_q \cup \{s_0\}, b) \cap F \ne \emptyset$.  
\end{enumerate}
\begin{quote}
 \textit{Proof of the claim.} Assume $(A_p \cup B_p) \cap F \ne \emptyset$.
 We distinguish the two cases $A_p \cap F \ne \emptyset$ or $B_p \cap F \ne \emptyset$.
 First, suppose $A_p \cap F \ne \emptyset$.
 By Claim (2) then $\delta(A_q, b) \cap F \ne \emptyset$ 
 for some $q \in \mathbb N_0^k$ and $b \in \Sigma$ with $p = q + \psi(b)$.
 As $\delta(A_q, b) \subseteq \delta(A_q \cup B_q, b) \subseteq \delta(A_q \cup B_q \cup \{s_0\}, b)$,
 both conditions (1) and (2) are fulfilled.
 Now, suppose $B_p \cap F \ne \emptyset$.
 Using Claim (2), we have two cases.
 \begin{enumerate}
 \item It is $\delta(B_q, b) \cap F \ne \emptyset$ for some $q \in \mathbb N_0$ and $b \in \Sigma$ with $p = q + \psi(b)$.
 
    As $\delta(B_q, b) \subseteq \delta(A_q \cup B_q, b) \subseteq \delta(A_q \cup B_q \cup \{s_0\}, b)$,
    both conditions (1) and (2) are fulfilled.
    
 \item We find, also using Claim (1), some $q \in \mathbb N_0^k$
 and $b \in \Sigma$ with $p = q + \psi(b)$ and
 $(A_q \cup B_q) \cap F \ne \emptyset$ such that $\delta(s_0, b) \in F$.     
 
   Then condition (2) is fulfilled.
  
 \end{enumerate}
 
 Conversely, assume condition (1) is fulfilled. Then, by Claim (2),
 we have $A_p \cap F \ne \emptyset$ or $B_p \cap F \ne \emptyset$.
 Otherwise, assume condition (2) is fulfilled.
 If $\delta(A_q \cup B_q, b) \cap F \ne \emptyset$, we have $(A_p \cup B_p) \cap F \ne \emptyset$
 as before. So, assume $\delta(A_q \cup B_q, b) \cap F = \emptyset$.
 But then, we must have $\delta(s_0, b) \in F$, using Claim (2),
 which gives, as $(A_q \cup B_q) \cap F \ne \emptyset$ and using Claim (1) and Claim (2),
 that $\delta(s_0, b) \in B_p$, hence $B_p \cap F \ne \emptyset$.~$\qed$
\end{quote}

\noindent First, assume $(A_p \cup B_p) \cap F = \emptyset$.
Then, by Equation~\eqref{eqn:f_it_shuffle} together with Claim (3) 
and Equation~\eqref{eqn:sigma_def}, 
\begin{align*} 
 \sigma_{\mathcal A,+}(p) & = \bigcup_{\substack{(q,b) \\ p = q + \psi(b)}} f(\sigma_{\mathcal A,+}(q), b) \\ 
                          & = \bigcup_{\substack{(q,b) \\ p = q + \psi(b)}} \delta(A_q \cup B_q, b) \cup \bigcup_{\substack{(q,b) \\ p = q + \psi(b) \\ (A_q \cap B_q) \cap F \ne \emptyset}} \delta(\{s_0\}, b). 
\end{align*}
By Claim (1) and Claim (2), we get
$
 \sigma_{\mathcal A,+}(p) = A_p \cup B_q.
$
Otherwise, if $(A_p \cup B_p) \cap F \ne \emptyset$, by Equation~\eqref{eqn:f_it_shuffle} together with Claim (3) 
and Equation~\eqref{eqn:sigma_def}, 
\begin{align*} 
 \sigma_{\mathcal A,+}(p) & = \bigcup_{\substack{(q,b) \\ p = q + \psi(b)}} f(\sigma_{\mathcal A,+}(q), b) \\ 
                          & = \{s_0\} \cup \bigcup_{\substack{(q,b) \\ p = q + \psi(b)}} \delta(A_q \cup B_q, b) \cup \bigcup_{\substack{(q,b) \\ p = q + \psi(b) \\ (A_q \cap B_q) \cap F \ne \emptyset}} \delta(\{s_0\}, b). 
\end{align*}
As above, this equals $\{s_0\} \cup A_p \cup B_p$. \qed

\end{proof}

   The next statement
   gives a connection
   between the Parikh image of $L(\mathcal A)^*$
   and $\sigma_{\mathcal A, +} : \mathbb N_0^k \to \mathcal P(Q)$.
   
   \begin{proposition}
   \label{prop:it_shuffle_parikh_map}
    Let $\mathcal A = (\Sigma, Q, \delta, s_0, F)$
    be a finite automaton. 
    Then
    $
     \psi(L(\mathcal A)^*) =  \sigma_{\mathcal A, +}^{-1}( \{ S \subseteq Q \mid S \cap F \ne \emptyset \} ) \cup \{ (0,\ldots, 0)\}.
    $
   \end{proposition}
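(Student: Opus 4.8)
The plan is to reduce the statement to the single-point equivalence
\[
 p \in \psi(L(\mathcal A)^+) \iff \sigma_{\mathcal A,+}(p) \cap F \neq \emptyset \qquad \text{for all } p \neq (0,\ldots,0),
\]
and to prove this equivalence by induction on $|p| := p_1 + \cdots + p_k$, using the explicit description of $\sigma_{\mathcal A,+}(p)$ from Proposition~\ref{prop:it_shuffle_state_label_map_new}. This reduction suffices: since $L(\mathcal A)^* = L(\mathcal A)^+ \cup \{\varepsilon\}$ and $\psi(\varepsilon) = (0,\ldots,0)$, the left-hand side of the claimed identity equals $\psi(L(\mathcal A)^+) \cup \{(0,\ldots,0)\}$, while the right-hand side already contains $(0,\ldots,0)$ by construction; so the two sets agree as soon as they agree off the origin, which is exactly the displayed equivalence. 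As a side remark, the behaviour at the origin is consistent anyway, since $\sigma_{\mathcal A,+}(0,\ldots,0) = \{s_0\}$ meets $F$ iff $s_0 \in F$ iff $\varepsilon \in L(\mathcal A)^+$. I would first record the consequence of Proposition~\ref{prop:it_shuffle_state_label_map_new} that, for $p \neq (0,\ldots,0)$, we have $A_p \cup B_p \subseteq \sigma_{\mathcal A,+}(p) \subseteq A_p \cup B_p \cup \{s_0\}$ with $s_0$ added only when $(A_p \cup B_p) \cap F \neq \emptyset$; hence $\sigma_{\mathcal A,+}(p) \cap F \neq \emptyset$ if and only if $(A_p \cup B_p) \cap F \neq \emptyset$, where $A_p$ and $B_p$ are as in that proposition.

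For the direction ``$\Rightarrow$'': pick $u \in L(\mathcal A)^+$ with $\psi(u) = p$; deleting all empty factors leaves $u = u_1 \cdots u_n$ with $n \geq 1$ and every $u_i \in L(\mathcal A) \setminus \{\varepsilon\}$ (at least one factor survives because $p \neq (0,\ldots,0)$). If $n = 1$, then $\delta(s_0,u) \in F$ and $\delta(s_0,u) \in A_p$, so $\sigma_{\mathcal A,+}(p) \cap F \neq \emptyset$. If $n \geq 2$, set $q = \psi(u_1 \cdots u_{n-1})$; then $q < p$ strictly and $q \neq (0,\ldots,0)$, so $|q| < |p|$ and the induction hypothesis gives $\sigma_{\mathcal A,+}(q) \cap F \neq \emptyset$. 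Since $q + \psi(u_n) = p$ and $\delta(s_0,u_n) \in F$, the definition of $B_p$ yields $\delta(s_0,u_n) \in B_p$, so again $\sigma_{\mathcal A,+}(p) \cap F \neq \emptyset$.

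For the direction ``$\Leftarrow$'': assume $\sigma_{\mathcal A,+}(p) \cap F \neq \emptyset$, hence $(A_p \cup B_p) \cap F \neq \emptyset$. If $A_p \cap F \neq \emptyset$, there is $w$ with $\psi(w) = p$ and $\delta(s_0,w) \in F$, so $w \in L(\mathcal A) \subseteq L(\mathcal A)^+$ and $p \in \psi(L(\mathcal A)^+)$. If $B_p \cap F \neq \emptyset$, the definition of $B_p$ provides $w$ and $q < p$ with $q + \psi(w) = p$, $\delta(s_0,w) \in F$ (so $w \in L(\mathcal A)$), and $\sigma_{\mathcal A,+}(q) \cap F \neq \emptyset$. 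If $q = (0,\ldots,0)$ then $p = \psi(w)$ with $w \in L(\mathcal A)$; if $q \neq (0,\ldots,0)$ then $|q| < |p|$ and the induction hypothesis gives $q = \psi(v)$ for some $v \in L(\mathcal A)^+$, whence $vw \in L(\mathcal A)^+$ with $\psi(vw) = p$. Either way $p \in \psi(L(\mathcal A)^+)$, which closes the induction.

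I expect no genuine obstacle here: once Proposition~\ref{prop:it_shuffle_state_label_map_new} is in hand the argument is pure bookkeeping. The only points needing a little care are the well-foundedness of the induction -- guaranteed because the predecessor $q$ occurring in $B_p$ satisfies $q < p$ strictly, so $|q| < |p|$ -- the harmless removal of empty factors from products in $L(\mathcal A)^+$, and the boundary case at the origin, which is precisely what the extra term $\{(0,\ldots,0)\}$ in the statement accommodates.
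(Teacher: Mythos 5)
Your proof is correct, and it is genuinely a bit different from the paper's in how the forward direction is handled. The paper proves $\psi(L(\mathcal A)^+) \subseteq \sigma_{\mathcal A,+}^{-1}(\mathcal F)$ by an induction on the number of concatenated factors $w_1,\ldots,w_n$, tracking $\delta(s_0,w_{i+1})$ forward through the state label map via Lemma~\ref{lem:state_label_ind_form} and the compatibility $\delta(s_0,w) \in f(\{s_0\},w)$; the $s_0$ re-injection mechanism of Definition~\ref{def:state_label_it_shuffle} is invoked directly. You instead invoke Proposition~\ref{prop:it_shuffle_state_label_map_new} uniformly for both directions: in $\Rightarrow$ you peel off the last factor $u_n$, apply the induction hypothesis on $|p|$ to the prefix $u_1\cdots u_{n-1}$, and drop the result into the defining condition of $B_p$, landing $\delta(s_0,u_n)$ squarely in $B_p \cap F$. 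This is a cleaner, more symmetric argument: you need only the one post-processed description of $\sigma_{\mathcal A,+}(p)$ and a single well-founded induction on $|p|$, rather than the paper's mix of a forward $f$-chase for one direction and an implicit $|p|$-induction for the other. Your treatment of the reduction to $L^+$ and of the two boundary cases ($n = 1$ in $\Rightarrow$, $q = (0,\ldots,0)$ in $\Leftarrow$) is also more explicit about the well-foundedness than what is written in the paper. The $\Leftarrow$ direction in both proofs is essentially identical, via the observation that $B_p \cap F \neq \emptyset$ hands you a last factor and a smaller $q$ to induct on.
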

  \begin{proof} 


       
   First suppose $p \in \psi(L(\mathcal A)^*)$. Then either $p = (0,\ldots, 0)$
   or we find $p_1, \ldots, p_n$
   with $n > 0$, $p = p_1 + \ldots + p_n$ and words $w_1, \ldots, w_n$
   with $p_i = \psi(w_i)$ and $w_i \in L(\mathcal A)$ for $i \in \{1,\ldots, n\}$.
   If $p \ne (0,\ldots, 0)$, then we can assume $w_i \ne \varepsilon$ for $i \in \{1,\ldots,n\}$, which is equivalent with $p_i \ne (0,\ldots, 0)$. 
   Note that, with the notation from Proposition~\ref{prop:it_shuffle_state_label_map_new},
   for any $p \in \mathbb N_0^k$
   \begin{equation}
       \label{eqn:state_map_it_shuffle_final_states}
       \sigma_{\mathcal A,+}(p) \cap F \ne \emptyset \Leftrightarrow 
       ( A_p \cup B_p ) \cap F \ne \emptyset. 
   \end{equation}
   For if $\sigma_{\mathcal A,+}(p) \cap F = \emptyset$
   then obviously $(A_p \cup B_p) \cap F = \emptyset$.
   And if $(A_p \cup B_p) \cap F = \emptyset$ holds true, then $\sigma_{\mathcal A,+}(p) = A_p \cup B_p$.
   So, $\sigma_{\mathcal A,+}(p) \cap F \ne \emptyset$ implies $s_0 \in \sigma_{\mathcal A,+}(p)$.
   
   \medskip
   
   \noindent\underline{Claim:} For $i \in \{1,\ldots, n\}$ we have
   $\sigma_{\mathcal A,+}(p_1 + \ldots + p_i) \cap F \ne \emptyset$.
   
   \begin{quote}
       \textit{Proof of the claim.} 
       As $w_1 \in L(\mathcal A)$ and for every $w \in \Sigma^*$
       by Definition~\ref{def:state_label_it_shuffle} and Lemma~\ref{lem:state_label_ind_form}
       we have $\delta(s_0, w) \in f(\{s_0\}, w) \subseteq \sigma_{\mathcal A,+}(\psi(w))$,
       we get $\delta(s_0, w_1) \in \sigma_{\mathcal A,+}(p_1)$.
       Hence $\sigma_{\mathcal A,+}(p_1) \cap F \ne \emptyset$ as $\delta(s_0, w_1) \in F$.
       Now, suppose inductively that for $i \in \{1,\ldots, n-1\}$ we have
       $$
        \sigma_{\mathcal A,+}(p_1 + \ldots + p_i) \cap F \ne \emptyset.
       $$
       By Equation~\eqref{eqn:state_map_it_shuffle_final_states}
       and the remarks thereafter, $s_0 \in \sigma_{\mathcal A,+}(p_1 + \ldots + p_i)$.
       By Definition~\ref{def:state_label_it_shuffle} and Lemma~\ref{lem:state_label_ind_form}
       then, as $p_1 + \ldots p_{i+1} = p_1 + \ldots p_i + \psi(w_{i+1})$,
       \begin{align*} 
        \delta(s_0, w_{i+1}) & \in \delta(\sigma_{\mathcal A,+}(p_1 + \ldots + p_i), w_{i+1}) \\
                             & \subseteq f(\sigma_{\mathcal A,+}(p_1 + \ldots + p_i), w_{i+1}) & \mbox{[Definition~\ref{def:state_label_it_shuffle}]} \\
                             & \subseteq \sigma_{\mathcal A,+}(p_1 + \ldots + p_i + p_{i+1}). & \mbox{[Lemma~\ref{lem:state_label_ind_form}]}
       \end{align*}
       As $\delta(s_0, w_{i+1}) \in F$ we find $\sigma_{\mathcal A,+}(p_1 + \ldots + p_i + p_{i+1}) \cap F \ne \emptyset$.$\qed$
    %
   \end{quote}
   With the above claim, for $i = n$, we find $\sigma_{\mathcal A,+}(p)\cap F \ne \emptyset$.
%

   Conversely, assume $\sigma_{\mathcal A,+}(p) \cap F \ne \emptyset$ or $p = (0,\ldots, 0)$.
   In the latter case we have $p \in \psi(L(\mathcal A)^*)$ by definition of the star operation.
   Hence, assume the former holds true.
   If $p \in \psi(L(\mathcal A)) \subseteq \psi(L(\mathcal A)^*)$ we have nothing to prove.
   So, assume $p \notin \psi(L(\mathcal A))$. Then, we claim the next.
   
   \medskip 
   
   \noindent\underline{Claim:} There exists 
   $q \in \mathbb N_0^k$ with $q < p$ such that
   $
    p = q + \psi(w)
   $
   for some $w \in L(\mathcal A)$ and $\sigma_{\mathcal A,+}(q) \cap F \ne \emptyset$.
   \begin{quote}
       \textit{Proof of the claim.} 
      As $p \notin \psi(L(\mathcal A))$, we have
      $$
        \{ \delta(s_0, w) \mid \psi(w) = p \} \cap F = \emptyset.
      $$
     Set 
     $
      B_q = \{ \delta(s_0, w) \mid \exists q \in \mathbb N_0^k : q < p, \psi(w) + q = p, \sigma_{\mathcal A,+}(q) \cap F \ne \emptyset\}.
     $
     Assume $B_p \cap F = \emptyset$, then by Proposition~\ref{prop:it_shuffle_state_label_map_new}
     this implies $\sigma_{\mathcal A,+}(p) = \{ \delta(s_0, w) \mid \psi(w) = p \} \cup B_p$.
     But then, as $\sigma_{\mathcal A,+}(p) \cap F \ne \emptyset$, this is not possible
     and we  must have $B_p \cap F \ne \emptyset$, which gives the claim. $\qed$
   \end{quote}
   
   By the above claim, choose $q \in \mathbb N_0^k$
   with $q < p$ and $p = q + \psi(w)$ for some $w \in L(\mathcal A)$
   and $\sigma_{\mathcal A,+}(q) \cap F \ne \emptyset$.
   By induction hypothesis, we find $u \in L(\mathcal A)^*$
   with $\psi(u) = q$.
   Then $p = \psi(u) + \psi(w) = \psi(uw)$ and we have $uw \in L(\mathcal A)^*$,
   i.e. $p \in \psi(L(\mathcal A)^*)$.~\qed
   \end{proof}
   
   We will also need the next lemma.

  \begin{lemma}
\label{lem:sc_adding_empty_word}
   Let $\Sigma = \{a_1, \ldots, a_k\}$ and $L\subseteq \Sigma^*$ 
   be a regular language with $\stc(L) = n$.
   Then $\stc(L \cup \{\varepsilon\}) \le n + 1$
   and this bound is sharp. If $L$ is commutative with index vector $(i_1, \ldots, i_k)$,
   then the index vector of $L \cup \{\varepsilon\}$
   is at most $(i_1+1, \ldots, i_k+1)$
   and both languages have the same period.
\end{lemma}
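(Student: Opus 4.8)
The plan is to prove the three assertions separately. For the bound $\stc(L \cup \{\varepsilon\}) \le n+1$, I would start from a minimal complete deterministic automaton $\mathcal A = (\Sigma, Q, \delta, q_0, F)$ of $L$ with $|Q| = n$. If $q_0 \in F$, then $\varepsilon \in L$, so $L \cup \{\varepsilon\} = L$ and there is nothing to prove. Otherwise I adjoin a single fresh state $q_0'$ and form the automaton $(\Sigma, Q \cup \{q_0'\}, \delta', q_0', F \cup \{q_0'\})$, where $\delta'$ agrees with $\delta$ on $Q \times \Sigma$ and $\delta'(q_0', a) = \delta(q_0, a)$ for all $a \in \Sigma$. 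From $q_0'$ the empty word is accepted (it is final) and a nonempty word $w$ is read to $\delta(q_0, w)$; hence this $(n+1)$-state automaton accepts exactly $L \cup \{\varepsilon\}$, which gives the bound (it need not be minimal, but that is irrelevant).

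For sharpness I would exhibit, for every $n \ge 2$, the unary language $L_n = \{ a^m \mid m \equiv 1 \pmod n\}$ over $\Sigma = \{a\}$ (and $L = \emptyset$ when $n = 1$). A Myhill--Nerode computation shows that the residuals of $L_n$ are exactly the $n$ pairwise distinct classes $\{ a^j \mid j \equiv c \pmod n\}$, $c \in \{0, \ldots, n-1\}$, realized by a complete cyclic automaton, so $\stc(L_n) = n$. For $L_n \cup \{\varepsilon\}$ the residuals by nonempty words are still exactly those $n$ classes, while the residual by $\varepsilon$ equals $\{ a^j \mid j \equiv 1 \pmod n\} \cup \{\varepsilon\}$, which differs from all of them: the only one of those classes containing $\varepsilon$ is $(a^n)^*$, and it does not contain $a$ since $n \ge 2$. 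Hence $\stc(L_n \cup \{\varepsilon\}) = n+1$, and the case $n = 1$ is immediate since $\emptyset \cup \{\varepsilon\} = \{\varepsilon\}$ has state complexity $2$.

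For the commutative refinement I would use the standard description of the Parikh image of a commutative regular language $L$ as an ultimately periodic subset of $\mathbb N_0^k$: there is a componentwise least vector $(i_1, \ldots, i_k)$, the index vector, and a least period $p$ such that, for all $v \in \mathbb N_0^k$ and all $j$, $v_j \ge i_j$ implies $v \in \psi(L) \Leftrightarrow v + p e_j \in \psi(L)$. Note that $L \cup \{\varepsilon\}$ is again commutative (as $\perm$ commutes with union and fixes $\{\varepsilon\}$) and $\psi(L \cup \{\varepsilon\}) = \psi(L) \cup \{\mathbf 0\}$. If $\mathbf 0 \in \psi(L)$ the two languages coincide; otherwise $\psi(L)$ and $\psi(L) \cup \{\mathbf 0\}$ differ only at the origin, so whenever $v_j \ge \max(i_j, 1)$ both $v$ and $v + p e_j$ avoid $\mathbf 0$ and the periodicity equivalence for $\psi(L) \cup \{\mathbf 0\}$ reduces to the one for $\psi(L)$. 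This yields an index vector at most $(\max(i_1,1), \ldots, \max(i_k,1)) \le (i_1+1, \ldots, i_k+1)$. Finally, since $\psi(L)$ and $\psi(L \cup \{\varepsilon\})$ have finite symmetric difference, any (threshold, period) pair witnessing ultimate periodicity of one becomes a valid pair for the other after enlarging the threshold past the origin; hence the two Parikh images admit the same periods and in particular the same least period.

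I expect the state-complexity bound to be routine. The two delicate points are the sharpness example — one must pick a family in which the adjoined initial-final copy provably cannot be merged with any existing state, which is exactly what the residual computation for $L_n$ secures — and, in the commutative case, showing that the period is preserved exactly rather than only up to division, which the finite-symmetric-difference observation about witnessing (threshold, period) pairs handles.
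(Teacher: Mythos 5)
Your proof is correct, and the first two parts essentially coincide with the paper's: the same construction (adjoin a fresh initial state that is final and mirrors the transitions of $q_0$) and a unary cyclic family for sharpness. You are in fact slightly more careful than the paper on sharpness: the paper exhibits $a^{m-1}(a^m)^*$ for all $m>0$, but at $m=1$ this is $a^*$ and adjoining $\varepsilon$ changes nothing, so that example only works for $m\geq 2$; your separate treatment of $n=1$ via $L=\emptyset$ plugs that gap.

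Where you genuinely diverge is the commutative refinement. The paper argues at the level of the minimal commutative automaton, asserting that applying the adjoin-state construction to it ``increases the index for each letter by one, but leaves the period untouched.'' You instead argue at the level of Parikh images: $\psi(L\cup\{\varepsilon\})=\psi(L)\cup\{\mathbf 0\}$ has finite symmetric difference with $\psi(L)$, so any (threshold, period) pair witnessing ultimate periodicity of one witnesses the other once the threshold clears the origin; hence the least period is preserved and the index vector moves by at most one in each coordinate (your bound $\bigl(\max(i_1,1),\ldots,\max(i_k,1)\bigr)$ is even a little tighter than $(i_1+1,\ldots,i_k+1)$). This is a cleaner and more robust argument than the paper's: it does not depend on the specific normal form of the minimal commutative automaton and makes explicit the one step the paper leaves implicit, namely that the minimal period of an ultimately periodic set cannot drop under a finite modification, which follows because a witness for a smaller period for the modified set would also witness it for the original above a large enough threshold.
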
   
\begin{proof}
 Let $\mathcal A = (\Sigma, Q, \delta, s_0, F)$ 
 be an automaton for $L$.
 Choose $s_0' \notin Q$ and construct $\mathcal A' = (\Sigma, Q\cup \{s_0'\}, \delta', s_0', F \cup \{s_0'\})$
 with
 $$
  \delta'(s_0', x) = \delta(s_0, x)
 $$
 for $x \in \Sigma$, and $\delta'(q, x) = \delta(q, x)$ for $q \in Q$, $x \in \Sigma$.
 
 \begin{enumerate}
 \item $L(\mathcal A') \subseteq L \cup \{\varepsilon\}$.
     
    Let $w \in \Sigma^*$ be a word with $\delta'(s_0', w) \in F \cup \{s_0'\}$.
    By construction, if $\delta'(s_0', w) = s_0'$, then $w = \varepsilon$.
    Otherwise, if $\delta'(s_0', w) \ne s_0'$, then $|w| > 0$
    and $\delta'(s_0', w) = \delta(s_0, w)$. So $w \in L$.
    
 \item $L \cup \{\varepsilon\} \subseteq L(\mathcal A')$.
 
   As $s_0'$ is a final state, the empty word is accepted. Now suppose
   that $w \in L \setminus\{\varepsilon\}$. Hence $\delta(s_0, w) \in F$.
   Then, as $\delta'(s_0, w) = \delta(s_0, w)$, we have $w \in L(\mathcal A')$. 
 \end{enumerate}
 
 Let $m > 0$. That the bound is sharp is demonstrated by the (unary group) language $L = a^{m-1}(a^m)^*$.
 We have $\stc(L) = m$
 and $\stc(L \cup \{\varepsilon\}) = m + 1$. Note that $L \cup \{\varepsilon\}$
 is in general not a group language anymore.
 If $\mathcal A$ is the minimal commutative automaton from~\cite{DBLP:conf/cai/Hoffmann19,Hoffmann20,GomezA08}
 it is easy to see that the above construction increases the index for each letter by one, but
 leaves the period untouched. \qed
\end{proof}

   With this, we can derive our state complexity bound
   for the combined operation of the commutative closure
   and of the shuffle closure on group languages. 
\end{toappendix}

%
\begin{theoremrep}
\label{thm:perm_grp_it_shuffle}
 Let $\Sigma = \{a_1, \ldots, a_k\}$
 and $\mathcal A = (\Sigma, Q, \delta, q_0, F)$
 be a permutation automaton.
 Then
 \[
  \perm(L(\mathcal A)^{\shuffle,*})
 \]
 is recognizable by an automaton
 with at most $\left( |Q|^k \prod_{j=1}^k L_j \right) + 1$
 many states, where $L_j$ for $j \in \{1,\ldots,k\}$
 denotes the order of $a_j$, 
 and this automaton is effectively computable.
\end{theoremrep}
\begin{proofsketch}
 The method of proof, called \emph{state label method},
is an extension of the one used 
in~\cite{DBLP:conf/dcfs/Hoffmann20}, which also includes
a detailed motivation and intuition of this method.

In what follows, we will first give an intuitive outline of the method, geared toward our intended extension, of how to use it to recognize the commutative closure of a regular language. Then, we will show how to modify it to show our statement at hand. We will only sketch the method, and will leave out some details for the sake of the bigger picture.

The method consists in labeling the points
of $\mathbb N_0^{|\Sigma|}$
with the states of a 
given automaton that are reachable from the start state by all words whose Parikh image
equals the point under consideration.

As it turns out,
a word is in the commutative closure
if and only if it ends in a state labeled
by a set which contains at least
one final state.

Very roughly, the resulting labeling
of $\mathbb N_0^{|\Sigma|}$
could be thought of as a more
refined version of the Parikh map
for regular languages, and in some sense
as a blend between the well-known
powerset construction, as we label with subsets of states, and the Parikh map, as we not only indicate for each point if there is a word in the language or not, but additionally store all states we could reach by words whose Parikh image equals the point in question.


More specifically, let
$\mathcal A = (\Sigma, Q, \delta, q_0, F)$
be an automaton.
In~\cite{DBLP:conf/dcfs/Hoffmann20},
the point $p \in \mathbb N_0^{|\Sigma|}$
was labeled by the set
\[
 S_p = \{ \delta(q_0, u) \mid \psi(u) = p \}
\]
and the following holds true:
$
 v \in \perm(L(\mathcal A))
 \Leftrightarrow 
 S_{\psi(v)} \cap F \ne \emptyset.
$

Then, along any line parallel 
to the axis, which corresponds to reading in a single fixed letter, by finiteness, the state labels are ultimately periodic.
However, for each such line, the onset of the period
and the period itself may change.
For example, take the automaton with state set $Q = \{q_0, q_1, q_2\}$
over $\Sigma = \{a,b\}$ and
transition function, for $q \in Q$
and $x \in \Sigma$,
\[
 \delta(q, x) = \left\{
 \begin{array}{ll}
  q_1 & \mbox{if } q = q_0, x = a; \\
  q_0 & \mbox{if } q = q_1, x = b; \\
  q_2 & \mbox{otherwise.}
 \end{array}
 \right.
\] 
Then, $L(\mathcal A) = (ab)^*$
and, for $p = (p_a, p_b) \in \mathbb N_0^2$,
\[
 S_{p}
  = \left\{ 
   \begin{array}{ll}
    \{ q_0, q_2 \} & \mbox{if } p_a = p_b; \\ 
    \{ q_1, q_2 \} & \mbox{if } p_a = p_b + 1; \\
    \{ q_2 \} & \mbox{otherwise.}
   \end{array}
  \right.
\]
Let $c \in \mathbb N_0$.
Then, along the lines $\{ (p_a, p_b) \in \mathbb N_0^2 \mid p_a = c \}$, 
we have $S_{(c, c + 2)} = S_{(c,c + 1)}$
and the point $(c, c+1)$
is the earliest onset after which the state labeling $S_p$ gets periodic on this line.

However, if, for any line parallel to the axis, we can bound the onset of the period and the period itself \emph{uniformly}, i.e., independently of the line we are considering,
then the commutative closure is regular,
and moreover we can construct a recognizing
automaton with these uniform bounds.

This was shown in~\cite{DBLP:conf/dcfs/Hoffmann20}
and it was shown that for group languages, we have such uniform bounds.

Note that in our example, we do not have such a uniform bound, as the onset, for example,
for the lines going in the direction $(0,1)$ 
starting at $(c,0)$ (i.e. reading in the letter $b$) was $c+1$, i.e., it grows and is not uniformly bounded. In fact, $\perm((ab)^*) = \{ u \in \{a,b\}^* : |u|_a = |u|_b \}$ is not regular.

Up to now, the method only works for the commutative closure. So, let us now describe how to modify it such that we get an automaton
for the iterated shuffle of the commutative closure of a given automaton.

First, recall that, by Theorem~\ref{thm:perm_semiring_hom},
we have
\[
 \perm(L(\mathcal A))^{\shuffle,*}
  = \perm(L(\mathcal A)^*).
\]
The usual construction for the Kleene star associates a final state with the start state,
and this is in some sense what we are doing now.
More formally, in the state labeling, we add
the start state each time we read a final state, i.e., we have another labeling which we describe next.

Let $\Sigma = \{a_1, \ldots, a_k\}$
and $e_i = \psi(a_i) = (0,\ldots, 0, 1, 0, \ldots, 0) \in \mathbb N_0^k$ be the vector with $1$ precisely at the $i$-th position
and zero everywhere else.
If $\mathcal A = (\Sigma, Q, \delta, q_0, F)$ is an automaton,  set 
\[
T_{(0,\ldots,0)} = \{ q_0 \} \quad\mbox{and}\quad
T_p = \bigcup_{\exists i \in \{1,\ldots k\} : p = q + e_i } \delta(S_q^+, a_i) \mbox{ for } p \ne (0,\ldots,0),
\]
where
\[
 S_p^+ = \left\{
 \begin{array}{ll}
  T_p \cup \{ q_0 \} & \mbox{if } T_p \cap F \ne \emptyset; \\
  T_p                & \mbox{if } T_p \cap F = \emptyset.
 \end{array}
 \right.
\]
Then,
$
 v \in \perm(L(\mathcal A)^*) \Leftrightarrow S_p^+ \cap F \ne \emptyset \mbox{ or } v = \varepsilon.
$

Note the extra condition that checks for the empty word. This is a technicality, that surely could be omitted if $q_0 \in F$, but not in the general case. 
Please see Figure~\ref{fig:state_label_update} for a visual explanation
in the case of a binary alphabet.
\begin{figure}
    \centering
 \begin{tikzpicture}[yscale=1.1, xscale=2.25] 

      
        \node at (0,0) (0p0) {$S^+_{(p_a - 1, p_b)}$};
        \node at (1,0) (1p0) {\underline{$S^+_{(p_a, p_b)}$}};
        \node at (2,0) (2p0) {$\ldots$};
         
        \node at (0,1) (0p1) {$S^+_{(p_a - 1, p_b+1)}$};
        \node at (0,2) (0p2) {$\vdots$};

        \node at (1,1) (1p1) {$S^+_{(p_a, p_b + 1)}$};
        \node at (2,1) (2p1) {$\ldots$};  
        \node at (1,2) (1p2) {$\vdots$};  
        
      \path[->] (0p0) edge [below] node {$a$} (1p0)
                 (1p0) edge [below] node {$a$} (2p0);
                 
      \path[->] (0p0)  edge [left] node {$b$} (0p1)
                 (0p1)  edge [left] node {$b$} (0p2);
                 
      \path[->] (0p1) edge [below]  node {$a$} (1p1);
      \path[->] (1p0) edge [left]   node {$b$} (1p1);
      \path[->] (1p1) edge [left]   node {$b$} (1p2);
      \path[->] (1p1) edge [below]  node {$a$} (2p1);
    \end{tikzpicture}
  \begin{align}
     T_{(p_a, p_b+1)}  & = \delta(S^+_{(p_a-1, p_b+1)}, a) \cup \delta(S^+_{(p_a, p_b)}, b) \label{eqn:T} \\
      S^+_{(p_a, p_b+1)} & = \left\{
  \begin{array}{ll}
     T_{(p_a, p_b+1)} \cup \{s_0\} & \mbox{if } T_{(p_a, p_b+1)} \cap F \ne \emptyset; \label{eqn:S} \\ 
    T_{(p_a, p_b+1)}               & \mbox{otherwise,} 
  \end{array}
  \right.
  \end{align}
  
   \caption{Illustration of how state labels are updated for the iterated shuffle
   if new input symbols are read
   with $\Sigma = \{a,b\}$.
   For the state label $S_{(p_a, p_b)}$, after reading the letter $b$,
   we will end up at $S_{(p_a, p_b + 1)}$ and the state label is updated according
   to Equation~\eqref{eqn:T} and Equation~\eqref{eqn:S}. Seen from the state label $S_{(p_a-1, p_b)}$, we
   account for both paths given by the words $ab$ and $ba$ when ending at $(p_a, p_b+1)$,
   hence the union in the definition of $T_{(p_a, p_b + 1)}$.}
    \label{fig:state_label_update}
\end{figure}

Finally, the same sufficient condition
of regularity in terms of the new state labels $S_p^+$
could be derived as in the previous case, namely
if they are uniformly bounded in the axis-parallel 
directions, then the commutative closure is regular.

Now, the sets $T_p$ are defined by the actions
of the letters $a_i$ on previous state labels $S_q^+$.
In a similar way to which it is done in~\cite{DBLP:conf/dcfs/Hoffmann20},
for a permutation automaton, we can show
that we can find such uniform bounds.

Intuitively, the reason is that
if we always permute the state labels, they cannot
get smaller as we read in more letters.
Hence, they have to grow and eventually get periodic. Also, we can show, as we only have cycles, that after a certain number of letters have been read, we have exploited all ways
that these sets could grow, i.e., we know that after we have read a certain numbers of letters we must end up in a period, and this period could also be bounded uniformly (but of course, depending on $\mathcal A$).

To be a little more quantitative here, if $L_i$
denotes the order of $a_i$,
then, for each line going in the direction $e_i$, we can
show that after at most $(|Q| - 1)L_j$ many steps we must enter
the period, and the smallest period has to divide $L_j$.
This in turn could be used to derive that an automaton
with at most
\[
 \prod_{i=1}^k ((|Q|-1)L_j + L_j)
  = |Q|^k \prod_{i=1}^k L_j
\]
many states could recognize $\perm(L(\mathcal A)^+)$.
Note that this statement is only valid
for the state labeling $S_p^+$, and hence
only applies to $\perm(L(\mathcal A)^+)$.
So, to recognize  $\perm(L(\mathcal A)^*)$,
and incorporate the additional test for the empty word, we have to add one more state.

Actually, a full formal treatment, especially the steps mentioned in the previous paragraphs, is quite involved and incorporates a detailed construction of the recognizing automaton out of the state label method and a detailed analysis of the action of the permutational letters on the state set. I refer to~\cite{DBLP:conf/dcfs/Hoffmann20}
and to the extended version of this paper, which
will appear in a special issue~\cite{jalc/Hoffmann21},
for a treatment of these issues in the context of the mere commutative closure. 
%

Lastly, note that the constructions are effective, as we only have to label a bounded number of grid points
of $\mathbb N_0^k$, and the state labels are computable from the transition function of $\mathcal A$.~\qed

\end{proofsketch}
\begin{proof}

 Let $\Sigma = \{a_1,\ldots, a_k\}$ and $\mathcal A = (\Sigma, Q, \delta, s_0, F)$ be a permutation automaton.
 Denote by $\sigma_{\mathcal A,+} : \mathbb N_0^k \to \mathcal P(Q)$
 the state label map from Definition~\ref{def:state_label_it_shuffle}
 and by $\psi : \Sigma^* \to \mathbb N_0^k$ the Parikh map.
 By Proposition~\ref{prop:it_shuffle_state_label_map_new} we have
 $$
  \perm(L(\mathcal A))^{\shuffle,\ast} = \psi^{-1}(\sigma_{\mathcal A,+}^{-1}( \mathcal F )) \cup \{\varepsilon\}
 $$
 with $\mathcal F =  \{ S \subseteq Q \mid S \cap F \ne \emptyset\} \subseteq \mathcal P(Q)$.
 Inspecting Definition~\ref{def:state_label_it_shuffle}, we see
 that the state label map is compatible with $\mathcal A$.
 So, by Proposition~\ref{prop:state_label_for_perm_aut}, the indices 
 of the automata $\mathcal A_p^{(j)}$ from Definition~\ref{def:sequ_grid_decomp_aut}
 are universally bounded by $(|Q|-1)L_j$ and the periods divide $L_j$.
 Hence, applying Theorem~\ref{thm:regularity_condition} 
 gives\footnote{The set $\psi^{-1}(\sigma_{\mathcal A,+}^{-1}( \mathcal F ))$
 equals $\perm(L(\mathcal A))^{\shuffle,+}$. This is not explicitly stated
 but could be extracted from the proof of Proposition~\ref{prop:it_shuffle_state_label_map_new}.}
 $$
  \stc(\psi^{-1}(\sigma_{\mathcal A,+}^{-1}( \mathcal F ))) \le |Q|^k \prod_{j=1}^k L_j.
 $$
 Finally, using Lemma~\ref{lem:sc_adding_empty_word}
 gives the result for the iterated shuffle.~\qed

\end{proof}

So, with Theorem~\ref{thm:perm_grp_it_shuffle},
we can derive our next result.

\begin{theorem}
 Let $L \in \Shuf(\mathcal G)$.
 Then $\perm(L)$ is effectively regular.
\end{theorem}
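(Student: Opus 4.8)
The plan is to reduce the statement to the two regularity results already available for group languages, by way of the normal form for $\Shuf(\mathcal G)$ and the morphism properties of $\perm$. Since $\mathcal G$ is closed under union, the Remark after Proposition~\ref{prop:shuffle_exp_NF} writes $L$ as a finite, computable union of languages of the shape
\[
 M = L_1 \shuffle \ldots \shuffle L_{n-1} \shuffle L_n^{\shuffle,*}, \qquad L_1, \ldots, L_n \in \mathcal G
\]
(one could equally keep the form of Proposition~\ref{prop:shuffle_exp_NF} with several iterated-shuffle factors; the argument below is unchanged). Since $\perm$ distributes over finite unions and a finite union of effectively regular languages is effectively regular, it suffices to show that $\perm(M)$ is effectively regular for each such block $M$.

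For a single block $M$ I would first apply Theorem~\ref{thm:perm_semiring_hom_shuffle_only}: as $\perm$ is a morphism of the shuffle semiring that also respects the iterated shuffle, we get
\[
 \perm(M) = \perm(L_1) \shuffle \ldots \shuffle \perm(L_{n-1}) \shuffle \perm(L_n^{\shuffle,*}).
\]
Fix a permutation automaton for each group language $L_i$ (computable from its representation). For $i \in \{1, \ldots, n-1\}$, the language $\perm(L_i)$ is effectively regular by Theorem~\ref{thm:perm_grp_lang}. The last factor $\perm(L_n^{\shuffle,*})$ is effectively regular by Theorem~\ref{thm:perm_grp_it_shuffle}, which is exactly the combined commutative-closure-and-iterated-shuffle operation applied to a group language. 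Finally, the binary shuffle is an effectively regularity-preserving operation on regular languages, so the shuffle of these finitely many regular languages is effectively regular; reassembling the finite union over all blocks then gives that $\perm(L)$ is effectively regular.

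I do not expect a genuine obstacle here: the substantive work is already done in Theorem~\ref{thm:perm_grp_it_shuffle} (and Theorem~\ref{thm:perm_grp_lang}). The only points needing a little care are invoking the correct morphism property --- distributivity of $\perm$ over $\cup$ and over $\shuffle$, together with commutation with $(\cdot)^{\shuffle,*}$, all supplied by Theorem~\ref{thm:perm_semiring_hom_shuffle_only} --- and tracking effectiveness at each step, which is immediate because every cited result is effective and only finitely many blocks and shuffle factors are involved. The $\varepsilon$-bookkeeping present inside Theorem~\ref{thm:perm_grp_it_shuffle} is already absorbed there, so nothing further is needed at this level.
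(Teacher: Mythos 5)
Your proof is correct and follows essentially the same route as the paper: apply the normal form from Proposition~\ref{prop:shuffle_exp_NF}, push $\perm$ through the shuffle and iterated shuffle via Theorem~\ref{thm:perm_semiring_hom_shuffle_only}, and then invoke Theorem~\ref{thm:perm_grp_lang}, Theorem~\ref{thm:perm_grp_it_shuffle}, and the effective regularity-preservation of binary shuffle. The only cosmetic difference is that you remark one could merge the starred factors into a single $L_n^{\shuffle,*}$ using closure of $\mathcal G$ under union, whereas the paper keeps the several-factors normal form directly; both variants are equally valid.
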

\begin{proof}
 By Proposition~\ref{prop:shuffle_exp_NF}, we only need to consider
 languages of the form 
 $
 L_1 \shuffle \ldots \shuffle L_{k} \shuffle L_{k+1}^{\shuffle,*} \shuffle \ldots \shuffle L_n^{\shuffle,*}
 $
 with $L_i \in \mathcal G$.
 By Theorem~\ref{thm:perm_semiring_hom_shuffle_only},
 $\perm( L_1 \shuffle \ldots \shuffle L_{k} \shuffle L_{k+1}^{\shuffle,*} \shuffle \ldots \shuffle L_n^{\shuffle,*} )$ equals
 \begin{align*}
  \perm(L_1) \shuffle \ldots \shuffle \perm(L_{k}) \shuffle \perm(L_{k+1}^{\shuffle,*}) \shuffle \ldots \shuffle \perm(L_n^{\shuffle,*}).
 \end{align*}
 The shuffle is regularity-preserving~\cite{BrzozowskiJLRS16,DBLP:journals/jalc/CampeanuSY02,Ito04},
 where an automaton for it is computable. 
 So, by Theorem~\ref{thm:perm_grp_lang}
 and Theorem~\ref{thm:perm_grp_it_shuffle}
 the above language is effectively regular, where again for the commutative closure
 of a group language an automaton is computable similarly as outlined at the end
 of the proof sketch for Theorem~\ref{thm:perm_grp_lang}.
 Hence, $\perm(L)$ is effectively regular.~\qed
\end{proof}


So, with Proposition~\ref{prop:SE_Shuf_perm}
our next result follows.

\begin{theorem}
 Let $L \in \SE(\mathcal G)$.
 Then $\perm(L)$ is effectively regular.
\end{theorem}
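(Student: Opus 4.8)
The plan is to reduce the statement about $\SE(\mathcal G)$ directly to the one already established for $\Shuf(\mathcal G)$, so that no new combinatorial work is required. First I would take an arbitrary $L \in \SE(\mathcal G)$ and apply Proposition~\ref{prop:SE_Shuf_perm}, which produces a computable $L' \in \Shuf(\mathcal G)$ with $\perm(L) = \perm(L')$. Concretely this amounts to traversing the shuffle expression defining $L$ and replacing every concatenation by a shuffle and every Kleene star by an iterated shuffle, the correctness of which rests on the identities $\perm(U \cdot V) = \perm(U) \shuffle \perm(V)$ and $\perm(U^{*}) = \perm(U)^{\shuffle,*}$ coming from Theorem~\ref{thm:perm_semiring_hom} and Theorem~\ref{thm:perm_semiring_hom_shuffle_only}; note that the atoms stay group languages, since they are not altered by this rewriting.

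Then I would invoke the preceding theorem, which asserts that for $L' \in \Shuf(\mathcal G)$ the language $\perm(L')$ is effectively regular, i.e.\ an automaton for it is computable from the shuffle expression over permutation automata. Since $\perm(L) = \perm(L')$, this same automaton recognizes $\perm(L)$, and effectiveness is preserved because the translation supplied by Proposition~\ref{prop:SE_Shuf_perm} is itself effective. That completes the argument.

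The only point that deserves a moment's care is bookkeeping of effectiveness across the two steps: Proposition~\ref{prop:SE_Shuf_perm} must return a concrete shuffle expression (with its group-language atoms given by permutation automata), and the $\Shuf(\mathcal G)$ theorem must in turn return an automaton from such an expression — but both of these are already part of the cited statements. I do not expect any genuine obstacle in this final theorem; all the substantive difficulty lies upstream, in Theorem~\ref{thm:perm_grp_it_shuffle} (the state label method handling the iterated shuffle of a commutative group language) together with the known regularity-preservation and computability of the binary shuffle. Consequently this theorem is, in effect, an immediate corollary of the two results directly preceding it, and the proof is essentially one line.
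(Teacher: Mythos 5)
Your proposal is correct and follows exactly the paper's intended argument: the paper presents this theorem as an immediate consequence of Proposition~\ref{prop:SE_Shuf_perm} together with the preceding theorem on $\Shuf(\mathcal G)$, which is precisely the two-step reduction you give. Nothing is missing; the effectiveness bookkeeping you flag is indeed the only thing to check, and it is handled by the cited statements.
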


\section{Commutative Group Languages}


   By Theorem~\ref{thm:perm_grp_it_shuffle}, we can deduce
   that for commutative group languages $L \subseteq \Sigma^*$, the iterated
   shuffle is a regularity-preserving operation.
   Also, for a commutative regular language in general, it is easy to see that for
   a minimal automaton $\mathcal A = (\Sigma, Q, \delta, q_0, F)$
   we must have $\delta(q, ab) = \delta(q, ba)$
   for any $q \in Q$ and $a,b \in \Sigma$~\cite{FernauHoffmann2019}.
   Furthermore, if $\mathcal A = (\Sigma, Q, \delta, q_0, F)$ is a minimal permutation automaton
   for a commutative language, then the order
   of each letter $a \in \Sigma$ equals the minimal $n > 0$
   such that $\delta(q_0, a^n) = q_0$.
   For if $q \in Q$, then, by minimality, there exists $u \in \Sigma^*$
   such that $\delta(q_0, u) = q$, which yields
   $\delta(q, a^n) = \delta(\delta(q_0, u), a^n) = \delta(q_0, a^n u) = \delta(\delta(q_0, a^n), u) = 
   \delta(q_0, u) = q$.
   So, combining our observations, we get the next result.
   
   \begin{proposition}
    Let $\Sigma = \{a_1, \ldots, a_k\}$ and $L \subseteq \Sigma^*$
    be a commutative group language with minimal permutation automaton $\mathcal A = (\Sigma, Q, \delta, q_0, F)$
    such that $L = L(\mathcal A)$.
    Then, the iterated shuffle $L^{\shuffle,*}$
    is regular and recognizable
    by an automaton with at most $( |Q|^k \prod_{i=1}^k p_i ) + 1$ many states,
    where $p_i > 0$ is minimal such that
    $\delta(q_0, a_i^{p_i}) = q_0$ for $i \in \{1,\ldots, k\}$.
   \end{proposition}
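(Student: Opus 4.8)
The plan is to obtain this as a direct specialization of Theorem~\ref{thm:perm_grp_it_shuffle} to the commutative setting, using the remark recorded just above the statement. First I would reduce $L^{\shuffle,*}$ to a commutative closure: since $L$ is commutative, $\perm(L) = L$, and by Theorem~\ref{thm:perm_semiring_hom_shuffle_only} the map $\perm$ respects the iterated shuffle, so
\[
 L^{\shuffle,*} = \perm(L)^{\shuffle,*} = \perm(L^{\shuffle,*}) = \perm(L(\mathcal A))^{\shuffle,*} = \perm(L(\mathcal A)^{\shuffle,*}).
\]
Hence it suffices to bound a recognizing automaton for $\perm(L(\mathcal A)^{\shuffle,*})$, which in particular also shows $L^{\shuffle,*}$ is regular.

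Next I would apply Theorem~\ref{thm:perm_grp_it_shuffle} to the permutation automaton $\mathcal A$. It provides a recognizing automaton for $\perm(L(\mathcal A)^{\shuffle,*})$ with at most $\bigl(|Q|^k \prod_{j=1}^k L_j\bigr) + 1$ states, where $L_j$ denotes the order of the letter $a_j$ viewed as a permutation of $Q$; moreover this automaton is effectively computable.

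Finally I would invoke the observation established in the paragraph preceding the proposition: because $\mathcal A$ is a \emph{minimal} permutation automaton of a commutative language, every state is reachable from $q_0$, and commutativity propagates $\delta(q_0, a_j^{p_j}) = q_0$ to $\delta(q, a_j^{p_j}) = q$ for all $q \in Q$; therefore the order $L_j$ of $a_j$ equals $p_j$, the least $n > 0$ with $\delta(q_0, a_j^n) = q_0$. Substituting $L_j = p_j$ into the bound yields the claimed $\bigl(|Q|^k \prod_{i=1}^k p_i\bigr) + 1$.

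There is essentially no obstacle here beyond bookkeeping: the only substantive point is the equality $L_j = p_j$, and that has already been argued just above the statement. So the proposition is simply the instance of Theorem~\ref{thm:perm_grp_it_shuffle} in which the relevant orders can be read off directly from the start state; if one wishes, one may additionally note (via Lemma~\ref{lem:sc_adding_empty_word}) that the ``$+1$'' accounts only for adjoining the empty word when $q_0 \notin F$.
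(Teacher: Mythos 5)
Your proposal is correct and follows exactly the paper's intended proof: reduce to $\perm(L(\mathcal A)^{\shuffle,*})$ by commutativity and the semiring-morphism property of $\perm$, apply Theorem~\ref{thm:perm_grp_it_shuffle}, and substitute $L_j = p_j$, which is precisely the identity established in the paragraph immediately before the proposition. Nothing more is required.
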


\section{The $n$-times Shuffle}
  \label{sec:n_times_shuffle}
  
  We just note in passing that the method of proof 
  of Theorem~\ref{thm:perm_grp_it_shuffle}
  could also be adapted to yield
  a bound for the size of a recognizing
  automaton of the $n$-times shuffle combined with
  the commutative closure on group languages
  that is better than applying the bounds from~\cite{BrzozowskiJLRS16,DBLP:journals/jalc/CampeanuSY02,DBLP:conf/dcfs/Hoffmann20}
  individually.
  
  \begin{toappendix}
  
   For notational simplicity, we only do the case $n = 2$,
   the general case works the same way. In fact, the general case
   is only a notational complication, but nothing more.
   
   We use the general scheme for the application of the state label method
   as outlined at the start of this appendix.
   First, we have to define a state label map.
    
   \begin{definition}  
   \label{def:state_label_shuffle}
      Let $\mathcal A = (\Sigma, Q_A, \delta_A, s_A, F_A)$
      and $\mathcal B = (\Sigma, Q_B, \delta_B, s_B, F_B)$
      be finite automata with disjoint state sets, i.e., 
      $Q_A \cap Q_B = \emptyset$. 
      Denote by $\sigma_{\mathcal A, \mathcal B} : \mathbb N_0^k \to \mathcal P(Q_A \cup Q_B)$ the state label function  given by
      $f : P(Q_A \cup Q_B) \times \Sigma \to P(Q_A \cup Q_B)$,
      where
      \begin{equation}\label{eq:def_state_label_shuffle}
      f(S, a) = \left\{ 
       \begin{array}{ll}
        \delta_A(S \cap Q_A, a) \cup \delta_B(S \cap Q_B, a) \cup \{ s_B \} & \mbox{if } \delta_A(S \cap Q_A, a) \cap F_A \ne \emptyset; \\ 
        \delta_A(S \cap Q_A, a) \cup \delta_B(S \cap Q_B, a) & \mbox{otherwise;}
       \end{array}\right.
      \end{equation}
      for $S \subseteq Q_A \cup Q_B$, $a \in \Sigma$, 
      and  $
 \sigma_{\mathcal A, \mathcal B}(p) = \left\{ 
 \begin{array}{ll}
  \{ s_A, s_B \} & \mbox{ if } s_A \in F_A; \\ 
  \{ s_A \}      & \mbox{ otherwise. } 
 \end{array}
 \right.
 $
   \end{definition}

   The requirement $Q_A \cap Q_B = \emptyset$
   in most statements of this section is not a limitation,
   as we could always construct an isomorphic copy of any one of the involved automata
   if this is not fullfilled. It is more a technical requirement of the constructions, to not mix up
   what is read up to some point.

   \begin{lemma}
   \label{lem:shuffle_no_FB_state}
      Let $p \in \mathbb N_0^k$ and $\mathcal A = (\Sigma, Q_A, \delta_A, s_A, F_A)$,
      $\mathcal B = (\Sigma, Q_B, \delta_B, s_B, F_B)$ be finite automata with disjoint state sets.
      Denote by $\sigma_{\mathcal A, \mathcal B} : \mathbb N_0^k \to \mathcal P(Q_A \cup Q_B)$
      the state label map from Definition~\ref{def:state_label_shuffle}.
      If for all $q \in \mathbb N_0^k$ with $q \le p$
      we have $\sigma_{\mathcal A, \mathcal B}(q) \cap F_A = \emptyset$,
      then $\sigma_{\mathcal A, \mathcal B}(p) \cap Q_B = \emptyset$.
   \end{lemma}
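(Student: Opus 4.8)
The plan is to prove this by strong induction on $|p| := p_1 + \cdots + p_k$. The one structural fact to extract from Definition~\ref{def:state_label_shuffle}, using crucially that $Q_A \cap Q_B = \emptyset$, is that a state of $Q_B$ can enter the set $f(S,a)$ in exactly two ways: either it already belonged to $S$ and survives as an element of $\delta_B(S \cap Q_B, a)$, or it is the start state $s_B$, which the definition adds precisely when $\delta_A(S \cap Q_A, a) \cap F_A \ne \emptyset$. Since $\delta_A$ maps into $Q_A$, it never contributes a $Q_B$-state on its own. Note also that in all branches $\delta_A(S \cap Q_A, a) \cup \delta_B(S \cap Q_B, a) \subseteq f(S,a)$.

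For the base case $p = (0,\ldots,0)$, the hypothesis instantiated at $q = (0,\ldots,0)$ gives $\sigma_{\mathcal A,\mathcal B}(0,\ldots,0) \cap F_A = \emptyset$; since Definition~\ref{def:state_label_shuffle} sets $\sigma_{\mathcal A,\mathcal B}(0,\ldots,0) = \{s_A,s_B\}$ exactly when $s_A \in F_A$, this forces $s_A \notin F_A$, hence $\sigma_{\mathcal A,\mathcal B}(0,\ldots,0) = \{s_A\}$, which is disjoint from $Q_B$. For the inductive step, suppose $p \ne (0,\ldots,0)$ and, toward a contradiction, that $\sigma_{\mathcal A,\mathcal B}(p) \cap Q_B \ne \emptyset$. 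By Equation~\eqref{eqn:state_label_fn} there is a pair $(q,b)$ with $p = q + \psi(b)$ and $f(\sigma_{\mathcal A,\mathcal B}(q), b) \cap Q_B \ne \emptyset$; here $q \le p$ and $|q| = |p| - 1$, so every $q' \le q$ also satisfies $q' \le p$ and hence $\sigma_{\mathcal A,\mathcal B}(q') \cap F_A = \emptyset$. The induction hypothesis applies to $q$ and gives $\sigma_{\mathcal A,\mathcal B}(q) \cap Q_B = \emptyset$, so $\delta_B(\sigma_{\mathcal A,\mathcal B}(q) \cap Q_B, b) = \emptyset$. By the structural fact, the $Q_B$-element in $f(\sigma_{\mathcal A,\mathcal B}(q),b)$ must therefore be $s_B$, which means $\delta_A(\sigma_{\mathcal A,\mathcal B}(q) \cap Q_A, b) \cap F_A \ne \emptyset$. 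But $\delta_A(\sigma_{\mathcal A,\mathcal B}(q) \cap Q_A, b) \subseteq f(\sigma_{\mathcal A,\mathcal B}(q),b) \subseteq \sigma_{\mathcal A,\mathcal B}(p)$ by Equation~\eqref{eqn:state_label_fn}, so $\sigma_{\mathcal A,\mathcal B}(p) \cap F_A \ne \emptyset$, contradicting the hypothesis taken at $q = p$. Hence $\sigma_{\mathcal A,\mathcal B}(p) \cap Q_B = \emptyset$.

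I do not anticipate a real obstacle: the argument is a short well-founded induction, and the only points needing care are the bookkeeping --- checking that the premise ``$\sigma_{\mathcal A,\mathcal B}(q') \cap F_A = \emptyset$ for all $q' \le p$'' is correctly passed down to the predecessor $q$ --- and the two-case analysis of how a $Q_B$-state can appear, which is exactly where the disjointness $Q_A \cap Q_B = \emptyset$ is used.
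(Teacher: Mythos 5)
Your proof is correct and follows essentially the same inductive argument as the paper's own, merely phrased as a proof by contradiction rather than directly. The key ingredients coincide exactly: the base case analysis of $\sigma_{\mathcal A,\mathcal B}(0,\ldots,0)$, the inclusion $\delta_A(\sigma_{\mathcal A,\mathcal B}(q)\cap Q_A,b)\subseteq f(\sigma_{\mathcal A,\mathcal B}(q),b)\subseteq \sigma_{\mathcal A,\mathcal B}(p)$, the observation that $s_B$ is added only when $\delta_A(\,\cdot\,)\cap F_A\ne\emptyset$, and the application of the induction hypothesis to the predecessor $q$.
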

    \begin{proof} 
    For $p = (0,\ldots, 0)$ the claim follows by Definition~\ref{def:state_label_shuffle}.
    Suppose $p \ne (0,\ldots, 0)$.
    Then
    $$
     \sigma_{\mathcal A, \mathcal B}(p) = \bigcup_{ p = q + \psi(b) } f(\sigma_{\mathcal A, \mathcal B}(q), b).
    $$
    By assumption $\sigma_{\mathcal A, \mathcal B}(p) \cap F_A = \emptyset$.
    Hence, for $q \in \mathbb N_0^k$ and $b \in \Sigma$ with $p = q + \psi(b)$,
    we have $f(\sigma_{\mathcal A, \mathcal B}(q), b) \cap F_A = \emptyset$.
    By Definition~\ref{def:state_label_shuffle},
    $\delta_A(\sigma_{\mathcal A, \mathcal B}(q) \cap Q_A, b) \subseteq f(\sigma_{\mathcal A, \mathcal B}(q), b)$,
    so that $\delta_A(\sigma_{\mathcal A, \mathcal B}(q) \cap Q_A, b) \cap F_A = \emptyset$.
    Again, by Definition~\ref{def:state_label_shuffle}, then
    \begin{align*}
        f(\sigma_{\mathcal A, \mathcal B}(q), b) = \delta_A(\sigma_{\mathcal A, \mathcal B}(q) \cap Q_A, b) \cup 
    \delta_B(\sigma_{\mathcal A, \mathcal B}(q) \cap Q_B, b)
    \end{align*}
    Inductively, we can assume $\sigma_{\mathcal A, \mathcal B}(q) \cap Q_B = \emptyset$.
    So the above set equals
    $$
     \delta_A(\sigma_{\mathcal A, \mathcal B}(q) \cap Q_A, b),
    $$
    which is contained in $Q_A$. Hence, as this holds for any $q \in \mathbb N_0^k$
    and $b \in \Sigma$ with $p = q + \psi(b)$, we have
    $$
      \sigma_{\mathcal A, \mathcal B}(p) = \bigcup_{ p = q + \psi(b) } f(\sigma_{\mathcal A, \mathcal B}(q), b) \subseteq Q_A
    $$
    which is equivalent with $\sigma_{\mathcal A, \mathcal B}(p) \cap Q_B = \emptyset$. \qed
   \end{proof}
   
   We will also need the following stronger version of Lemma~\ref{lem:shuffle_no_FB_state}.

\begin{lemma}
\label{lem:shuffle_QB_states}
  Let $p \in \mathbb N_0^k$ and $\mathcal A = (\Sigma, Q_A, \delta_A, s_A, F_A)$,
  $\mathcal B = (\Sigma, Q_B, \delta_B, s_B, F_B)$ be finite automata with disjoint state sets.
  Denote by $\sigma_{\mathcal A, \mathcal B} : \mathbb N_0^k \to \mathcal P(Q_A \cup Q_B)$
  the state label map from Definition~\ref{def:state_label_shuffle}.
  Then
  $$
   \sigma_{\mathcal A, \mathcal B}(p) \cap Q_B = \bigcup_{\substack{\sigma_{\mathcal A, \mathcal B}(q) \cap F_A \ne \emptyset \\ p = q + \psi(u)}} \delta_B(\{s_B\}, u).
  $$
\end{lemma}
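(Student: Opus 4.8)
The plan is to prove the identity by induction on $|p| := p_1 + \cdots + p_k$, following the recursive definition of the state label map. Throughout, write $\sigma$ for $\sigma_{\mathcal A,\mathcal B}$. For the base case $p = (0,\ldots,0)$ both sides are read off directly from Definition~\ref{def:state_label_shuffle}: the left-hand side is $\{s_B\}$ if $s_A \in F_A$ and $\emptyset$ otherwise, while on the right-hand side the only pair satisfying $(0,\ldots,0) = q + \psi(u)$ is $q=(0,\ldots,0)$, $u=\varepsilon$, and its side condition $\sigma(0,\ldots,0)\cap F_A \neq \emptyset$ is exactly $s_A \in F_A$ (since $s_B \in Q_B$ is disjoint from $F_A \subseteq Q_A$); in that case the contributed set is $\delta_B(\{s_B\},\varepsilon) = \{s_B\}$. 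So the two sides agree.

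For the inductive step assume $p \neq (0,\ldots,0)$, so that $\sigma(p) = \bigcup_{p = q + \psi(b)} f(\sigma(q),b)$. First I would record two elementary consequences of $Q_A \cap Q_B = \emptyset$ and $F_A \subseteq Q_A$: for any $S \subseteq Q_A \cup Q_B$ and $b \in \Sigma$ we have $f(S,b)\cap Q_A = \delta_A(S\cap Q_A, b)$, hence $f(S,b) \cap F_A \neq \emptyset \iff \delta_A(S\cap Q_A,b)\cap F_A \neq \emptyset$; and $f(S,b)\cap Q_B = \delta_B(S\cap Q_B,b)$, together with $\{s_B\}$ precisely when $f(S,b)$ meets $F_A$. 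Intersecting the recursion with $Q_B$ then splits $\sigma(p)\cap Q_B$ into
(A) $\bigcup_{p = q + \psi(b)} \delta_B(\sigma(q)\cap Q_B, b)$, plus
(B) the singleton $\{s_B\}$ exactly when $f(\sigma(q),b)\cap F_A \neq \emptyset$ for some predecessor $(q,b)$ with $p = q + \psi(b)$.

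For part (A), apply the induction hypothesis to each $q$ with $p = q + \psi(b)$ (these satisfy $|q| = |p|-1$), push $\delta_B(-,b)$ through the union, and use $\delta_B(\delta_B(\{s_B\},v),b) = \delta_B(\{s_B\},vb)$; re-indexing the resulting double union along the bijection $(q',v,b)\leftrightarrow (q', u)$ with $u = vb$ shows that (A) equals $\bigcup \delta_B(\{s_B\},u)$ over all pairs $(q',u)$ with $\sigma(q')\cap F_A\neq\emptyset$, $p = q'+\psi(u)$ and $|u|\geq 1$ — exactly the right-hand-side terms with nonempty $u$. For part (B), since $\sigma(p) = \bigcup_{p=q+\psi(b)} f(\sigma(q),b)$ and $F_A \subseteq Q_A$, some predecessor image meets $F_A$ if and only if $\sigma(p)\cap F_A\neq\emptyset$; thus (B) contributes $\{s_B\} = \delta_B(\{s_B\},\varepsilon)$ exactly when $\sigma(p)\cap F_A\neq\emptyset$, which is precisely the degenerate right-hand-side term $q = p$, $u = \varepsilon$. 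Combining (A) and (B) yields the claimed equality.

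I do not expect a genuine obstacle: the argument is routine once the two structural facts about $f$ are isolated. The only places needing care are keeping the bookkeeping of the double union in part (A) straight — in particular checking that $u\mapsto (v,b)$ with $u=vb$ is a bijection onto words of length $\geq 1$ — and correctly matching the "added $s_B$" contribution of part (B) with the $u=\varepsilon$ term on the right, for which the equivalence $\sigma(p)\cap F_A\neq\emptyset \iff \exists (q,b):\ p = q+\psi(b),\ f(\sigma(q),b)\cap F_A\neq\emptyset$ is the key step.
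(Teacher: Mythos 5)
Your proof is correct and follows essentially the same route as the paper's: induction on the recursion in Definition~\ref{def:state_label_shuffle}, isolating the two facts $f(S,b)\cap Q_A=\delta_A(S\cap Q_A,b)$ and $f(S,b)\cap Q_B=\delta_B(S\cap Q_B,b)\cup\{s_B\}$~(conditionally), applying the induction hypothesis, re-indexing the double union via $u=vb$, and identifying the conditional $\{s_B\}$ contribution with the degenerate $u=\varepsilon$ term through the equivalence $\sigma(p)\cap F_A\neq\emptyset\iff\exists(q,b)$ with $p=q+\psi(b)$ and $f(\sigma(q),b)\cap F_A\neq\emptyset$. If anything, your (A)/(B) split is slightly cleaner than the paper's two-case presentation, which treats ``all predecessors miss $F_A$'' and ``some predecessor hits $F_A$'' separately and leaves implicit that in the first case the $u=\varepsilon$ term is vacuous because $\sigma(p)\cap F_A=\emptyset$.
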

\begin{proof}
 If $p = (0,\ldots,0)$, then 
 $$
 \sigma_{\mathcal A, \mathcal B}(p) = \left\{ 
 \begin{array}{ll}
  \{ s_A, s_B \} & \mbox{ if } s_A \in F_A; \\ 
  \{ s_A \}      & \mbox{ otherwise. } 
 \end{array}
 \right.
 $$
 Hence, as then $p = q + \psi(u)$ implies $q = p = (0,\ldots,0)$ and $u = \varepsilon$, so that $\delta(\{s_B\},u) = \{s_B\}$, we have
 $$
   \sigma_{\mathcal A, \mathcal B}(p) \cap Q_B = \left\{ 
    \begin{array}{ll}
     \delta(\{ s_B \}, u) & \mbox{ if } \sigma_{\mathcal A, \mathcal B}(p) \cap F_A \ne \emptyset; \\
     \emptyset            & \mbox{ otherwise.}
    \end{array}
   \right.
 $$
 So, the equation holds. If $p \ne (0,\ldots, 0)$, then we can reason inductively,
 \begin{align*} 
  \sigma_{\mathcal A, \mathcal B}(p) \cap Q_B 
    = \left( \bigcup_{\substack{(q,b) \\ p = q + \psi(b)}} f(\sigma_{\mathcal A, \mathcal B}(q), b) \right) \cap Q_B 
    = \bigcup_{\substack{(q,b) \\ p = q + \psi(b)}} ( f(\sigma_{\mathcal A, \mathcal B}(q), b) \cap Q_B ) 
 \end{align*}
 By Equation~\eqref{eq:def_state_label_shuffle}, the set $f(\sigma_{\mathcal A, \mathcal B}(q), b) \cap Q_B$
 equals
 $$
   \left\{ 
     \begin{array}{ll} 
      \delta_B(\sigma_{\mathcal A, \mathcal B}(q) \cap Q_B, b) \cup \{s_B\} & \mbox{ if } \delta_A(\sigma_{\mathcal A, \mathcal B}(q) \cap Q_A, b) \cap F_A \ne \emptyset; \\ 
      \delta_B(\sigma_{\mathcal A, \mathcal B}(q) \cap Q_B, b) & \mbox{ otherwise. }
     \end{array}\right.
 $$
 By induction hypothesis, we can assume
 $$
  \sigma_{\mathcal A, \mathcal B}(q) \cap Q_B = \bigcup_{\substack{\sigma_{\mathcal A, \mathcal B}(r) \cap F_A \ne \emptyset \\ q = r + \psi(u)}} \delta(\{s_B\}, u).
 $$
 Hence
 $$
   \delta_B(\sigma_{\mathcal A, \mathcal B}(q) \cap Q_B, b)
    =  \bigcup_{\substack{\sigma_{\mathcal A, \mathcal B}(r) \cap F_A \ne \emptyset \\ q = r + \psi(u)}}
    \delta_B(\delta_B(\{s_B\}, u), b) 
 $$
 If for all $b \in \Sigma$ and $q \in \mathbb N_0^k$ with $p = q + \psi(b)$ we
 have $\delta_A(\sigma_{\mathcal A, \mathcal B}(q) \cap Q_A, b) \cap F_A = \emptyset$, then 
 by combining the above equations
 \begin{align*} 
  \sigma_{\mathcal A, \mathcal B}(p) 
   & =  \bigcup_{\substack{(q,b) \\ p = q + \psi(b)}}  
   \bigcup_{\substack{\sigma_{\mathcal A, \mathcal B}(r) \cap F_A \ne \emptyset \\ q = r + \psi(u)}}
    \delta_B(\delta_B(\{s_B\}, u), b) \\
   & = \bigcup_{\substack{(q,b) \\ p = r + \psi(u) + \psi(b) \\ \sigma_{\mathcal A, \mathcal B}(r) \cap F_A \ne \emptyset}}  
     \delta_B(\{s_B\}, ub) \\
   & = \bigcup_{\substack{(r,w) \\ p = r + \psi(w) \\ \sigma_{\mathcal A, \mathcal B}(r) \cap F_A \ne \emptyset}}  
     \delta_B(\{s_B\}, w). 
 \end{align*}
 Otherwise
 \begin{align*} 
  \sigma_{\mathcal A, \mathcal B}(p) 
   & =  \bigcup_{\substack{(q,b) \\ p = q + \psi(b)}}  
   \left( \bigcup_{\substack{\sigma_{\mathcal A, \mathcal B}(r) \cap F_A \ne \emptyset \\ q = r + \psi(u)}}
    \delta_B(\delta_B(\{s_B\}, u), b) \right) \cup \{ s_B \} \\
   & = \bigcup_{\substack{(q,b) \\ p = r + \psi(u) + \psi(b) \\ \sigma_{\mathcal A, \mathcal B}(r) \cap F_A \ne \emptyset}}  
     \delta_B(\{s_B\}, ub) \cup \{s_B\}\\
   & = \bigcup_{\substack{(r,w) \\ p = r + \psi(w) \\ \sigma_{\mathcal A, \mathcal B}(r) \cap F_A \ne \emptyset}}  
     \delta_B(\{s_B\}, w) 
 \end{align*}
 where the last equation holds, as $\{s_B\} = \delta(\{s_B\}, \varepsilon)$
 and $\sigma_{\mathcal A, \mathcal B}(p) \cap F_A \ne \emptyset$,
 so that $(p, \varepsilon)$ is part of the union. So, by induction,
 the equation from the lemma holds true. \qed
\end{proof}

   With Lemma~\ref{lem:shuffle_no_FB_state}, we can derive a connection
   between the Parikh image of $L(\mathcal A)L(\mathcal B)$
   and the state label map.
   
   \begin{proposition}
   \label{prop:parikh_image_concat}
    Suppose we have finite automata 
    $\mathcal A = (\Sigma, Q_A, \delta_A, s_A, F_A)$
    and $\mathcal B = (\Sigma, Q_B, \delta_B, s_B, F_B)$
    with $Q_A \cap Q_B = \emptyset$. 
    Then
    $$
     \psi( L(\mathcal A) L(\mathcal B) ) = \sigma_{\mathcal A, \mathcal B}^{-1}(\{ S \subseteq Q_A \cup Q_B \mid S \cap F_B \ne \emptyset \}).
    $$
   \end{proposition}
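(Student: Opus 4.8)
The plan is to prove a double inclusion, in both cases by tracking how the state label map $\sigma_{\mathcal A, \mathcal B}$ records runs of $\mathcal A$ and runs of $\mathcal B$. The guiding intuition is that $\sigma_{\mathcal A, \mathcal B}(p) \cap Q_A = \{ \delta_A(s_A, v) \mid \psi(v) = p \}$ — the $Q_A$-part simply simulates $\mathcal A$ on all words with Parikh image $p$ — and, by Lemma~\ref{lem:shuffle_QB_states}, the $Q_B$-part collects exactly those states $\delta_B(s_B, u)$ such that $p$ can be split as $q + \psi(u)$ with a word of Parikh image $q$ driving $\mathcal A$ into a final state (i.e. $\sigma_{\mathcal A, \mathcal B}(q) \cap F_A \neq \emptyset$). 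So first I would record, as a preliminary observation proved by a short induction from Definition~\ref{def:state_label_shuffle} (and already implicit in the proofs of Lemmas~\ref{lem:shuffle_no_FB_state} and~\ref{lem:shuffle_QB_states}), that $\sigma_{\mathcal A, \mathcal B}(p) \cap Q_A = \{ \delta_A(s_A, v) \mid \psi(v) = p \}$, and hence that $\sigma_{\mathcal A,\mathcal B}(q) \cap F_A \ne \emptyset$ holds precisely when some word of Parikh image $q$ is in $L(\mathcal A)$.

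For the inclusion $\psi(L(\mathcal A)L(\mathcal B)) \subseteq \sigma_{\mathcal A,\mathcal B}^{-1}(\{S \mid S \cap F_B \ne \emptyset\})$, take $p = \psi(xy)$ with $x \in L(\mathcal A)$ and $y \in L(\mathcal B)$. Set $q = \psi(x)$ and $u = y$, so $p = q + \psi(u)$. By the preliminary observation, $\delta_A(s_A,x) \in \sigma_{\mathcal A,\mathcal B}(q) \cap F_A$, so $\sigma_{\mathcal A,\mathcal B}(q) \cap F_A \ne \emptyset$; then Lemma~\ref{lem:shuffle_QB_states} gives $\delta_B(s_B, u) \in \sigma_{\mathcal A,\mathcal B}(p) \cap Q_B$, and since $y \in L(\mathcal B)$ we have $\delta_B(s_B,u) \in F_B$. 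Hence $\sigma_{\mathcal A,\mathcal B}(p) \cap F_B \ne \emptyset$. One small subtlety: if $x = \varepsilon$ and $s_A \in F_A$, the base clause of $\sigma_{\mathcal A, \mathcal B}$ already puts $s_B$ into $\sigma_{\mathcal A,\mathcal B}(0,\ldots,0)$, so the argument still goes through (Lemma~\ref{lem:shuffle_QB_states} with $q = (0,\ldots,0)$ covers this).

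For the reverse inclusion, suppose $\sigma_{\mathcal A,\mathcal B}(p) \cap F_B \ne \emptyset$. Since $F_B \subseteq Q_B$, in particular $\sigma_{\mathcal A,\mathcal B}(p) \cap Q_B \ne \emptyset$, so by Lemma~\ref{lem:shuffle_QB_states} there exist $q \in \mathbb N_0^k$ and a word $u$ with $p = q + \psi(u)$, $\sigma_{\mathcal A,\mathcal B}(q) \cap F_A \ne \emptyset$, and $\delta_B(s_B,u) \in F_B$ (choosing the particular summand of the union witnessing membership of a final state). From $\sigma_{\mathcal A,\mathcal B}(q) \cap F_A \ne \emptyset$ and the preliminary observation we get a word $v$ with $\psi(v) = q$ and $\delta_A(s_A,v) \in F_A$, i.e. $v \in L(\mathcal A)$; and $\delta_B(s_B,u) \in F_B$ means $u \in L(\mathcal B)$. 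Then $vu \in L(\mathcal A)L(\mathcal B)$ and $\psi(vu) = q + \psi(u) = p$, so $p \in \psi(L(\mathcal A)L(\mathcal B))$.

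The main obstacle is nailing down the preliminary observation that $\sigma_{\mathcal A,\mathcal B}(p) \cap Q_A = \{\delta_A(s_A,v) \mid \psi(v) = p\}$ with full rigor: one must check by induction on $|p|_1 = p_1 + \cdots + p_k$ that the $Q_A$-component of $f$ in Equation~\eqref{eq:def_state_label_shuffle} never sees a $Q_B$-state fed back (it does not, since $\delta_A$ is applied to $S \cap Q_A$ and $s_B \notin Q_A$), so the $Q_A$-part evolves exactly like the powerset run of $\mathcal A$. Everything after that is bookkeeping with Lemma~\ref{lem:shuffle_QB_states}. It would also be cleanest to state and use this observation as a small lemma, but since the excerpt does not include one, I would fold its one-line inductive proof into the proof of the proposition.
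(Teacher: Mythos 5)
Your proof is correct and follows essentially the same route as the paper's: both rest on (a) the observation that $\sigma_{\mathcal A,\mathcal B}(p)\cap Q_A = \{\delta_A(s_A,v)\mid \psi(v)=p\}$ and (b) Lemma~\ref{lem:shuffle_QB_states} characterizing $\sigma_{\mathcal A,\mathcal B}(p)\cap Q_B$. The one difference is cosmetic but worth noting: you invoke Lemma~\ref{lem:shuffle_QB_states} in \emph{both} directions of the inclusion, whereas the paper proves the forward direction~(i) by hand --- constructing the semi-automaton $\mathcal C$, observing $\sigma_{\mathcal A,\mathcal B}$ is compatible with $\mathcal C$, and re-deriving $\{s_B\}\subseteq\sigma_{\mathcal A,\mathcal B}(\psi(u))$ from Lemma~\ref{lem:states_along_words} and Lemma~\ref{lem:state_label_ind_form} --- and only uses Lemma~\ref{lem:shuffle_QB_states} in the converse direction~(ii). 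Your version is somewhat more symmetric and a bit shorter; the paper's is more self-contained in that the forward direction does not depend on the stronger Lemma~\ref{lem:shuffle_QB_states} but only on the weaker compatibility apparatus. Either way, the observation~(a), which you flag as the main thing to nail down, is precisely what the paper establishes in step~(ii) via its inline claim $f(S,w)\cap Q_A = \delta_A(S\cap Q_A, w)$ combined with Lemma~\ref{lem:state_label_ind_form}, so your sketch of its one-line inductive proof is on the right track and the proof as a whole is sound.
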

    \begin{proof}
    By assumption $Q_A \cap Q_B = \emptyset$. Set $Q = Q_A \cup Q_B$.
  Construct the semi-automaton $\mathcal C = (\Sigma, Q, \delta)$
   with
   $$
    \delta(q, x) = \left\{
    \begin{array}{ll}
     \delta_A(q,x) & \mbox{if } q \in Q_A; \\
     \delta_B(q,x) & \mbox{if } q \in Q_B.
    \end{array}
    \right.
   $$
   Then $\delta(S, a) = \delta_A(S\cap Q_A, a) \cup \delta_B(S\cap Q_B, a)$
   for each $S \subseteq Q$.
   Let $f : \mathcal P(Q) \times \Sigma \to \mathcal P(Q)$
   be the function from Definition~\ref{def:state_label_shuffle}
   and $\sigma_{\mathcal A, \mathcal B} : \mathbb N_0^k \to \mathcal P(Q)$
   the corresponding state label map.
   Then, for each $S \subseteq Q_A \cup Q_B$ and $a \in \Sigma$, we
   have $\delta(S, a) \subseteq f(S, a)$ by Equation~\eqref{eq:def_state_label_shuffle}, i.e., 
   the semi-automaton $\mathcal C$ is compatible with the state label map.
    
   \medskip 
   
   \noindent (i) First, let $p \in \psi(L(\mathcal A) L(\mathcal B))$.
   Then $p = \psi(u) + \psi(v)$ with $u \in L(\mathcal A)$ and $v \in L(\mathcal B)$.
   By Lemma~\ref{lem:states_along_words}, 
   $\delta(\sigma_{\mathcal A, \mathcal B}(0,\ldots, 0), u) \subseteq \sigma_{\mathcal A, \mathcal B}(\psi(u))$.
   By Definition~\ref{def:state_label_shuffle}, $\{s_A\} \subseteq \sigma_{\mathcal A, \mathcal B}(0,\ldots, 0)$.
   As $u \in L(\mathcal A)$, we have $\delta_A(s_A, u) \in F_A$.
   We will show that this implies $\{ s_B \}\subseteq \sigma_{\mathcal A, \mathcal B}(\psi(u))$.
   
   \medskip 
   
   \noindent\underline{Claim:} $\{ s_B \}\subseteq \sigma_{\mathcal A, \mathcal B}(\psi(u))$ for $u \in L(\mathcal A)$.
   \begin{quote}
      \textit{Proof of the claim.}    If $|u| = 0$, then $s_A \in F_A$.
   Hence, by Definition~\ref{def:state_label_shuffle}, 
   $\{s_A, s_B\}  = \sigma_{\mathcal A, \mathcal B}(0, \ldots, 0) = \sigma_{\mathcal A, \mathcal B}(\psi(u))$.
   Otherwise, write $u = wa$ for some $a \in \Sigma$, $w \in \Sigma^*$ and set $S = f(\sigma_{\mathcal A, \mathcal B}(0, \ldots, 0), w)$.
   So,
   $$
    f( \sigma_{\mathcal A, \mathcal B}(0, \ldots, 0), u )
     = f(S, a)
   $$
   by the extension of $f : \mathcal P(Q) \times \Sigma \to \mathcal P(Q)$ to words.
   As $\mathcal C$ is compatible with $\sigma_{\mathcal A, \mathcal B}$, we find
   $$\delta(\sigma_{\mathcal A, \mathcal B}(0,\ldots, 0), w) \subseteq S.$$
   As $\{s_A\} \subseteq \sigma_{\mathcal A, \mathcal B}(0,\ldots, 0)$,
   this gives, by construction of $\mathcal C$, then $\delta_A(\{s_A\}, w) \subseteq S$.
   Hence, $\delta_A(S \cap Q_A, a) \cap F_A \ne \emptyset$.
   But then, by Equation~\eqref{eq:def_state_label_shuffle},
   $$
    f(S, a) = \delta(S, a) \cup \{s_B\}.
   $$
   By Lemma~\ref{lem:state_label_ind_form}, and as $\psi(u) = q + \psi(w)$
   for $|u| = |w|$ implies $q = (0,\ldots, 0)$,
   $$ 
   \sigma_{\mathcal A, \mathcal B}(\psi(u)) 
    = \bigcup_{\substack{(q, w) \in \mathbb N_0^k \times \Sigma^{|u|} \\ \psi(u) = q + \psi(w)}} f(\sigma_{\mathcal A, \mathcal B}(0,\ldots,0), w). 
   $$
   Hence $f(S,a) = f( \sigma_{\mathcal A, \mathcal B}(0, \ldots, 0), u ) \subseteq \sigma_{\mathcal A, \mathcal B}(\psi(u))$
   and we can deduce $\{s_B\}\subseteq \sigma_{\mathcal A, \mathcal B}(\psi(u))$.~$\qed$
   \end{quote}
   
   \noindent Using Lemma~\ref{lem:state_label_ind_form},
   we find
   $$
     f(\sigma_{\mathcal A, \mathcal B}(\psi(u)), v) \subseteq \sigma_{\mathcal A, \mathcal B}(\psi(u) + \psi(v)).
   $$
   As 
   $$
   \delta_B(\{s_B\}, v) \subseteq \delta_B(\sigma_{\mathcal A, \mathcal B}(\psi(u)) \cap Q_B, v) 
    \subseteq \delta(\sigma_{\mathcal A, \mathcal B}(\psi(u)), v) \subseteq f(\sigma_{\mathcal A, \mathcal B}(\psi(u)), v)
   $$
    and $\delta_B(s_B, v) \in F_B$, we find
    $\sigma_{\mathcal A, \mathcal B}(p) \cap F_B \ne \emptyset$.
   This shows $\psi( L(\mathcal A) L(\mathcal B) ) \subseteq \sigma_{\mathcal A, \mathcal B}^{-1}(\{ S \subseteq Q_A \cup Q_B \mid S \cap F_B \ne \emptyset \})$.
   
   \medskip 
   
     \noindent (ii) Conversely, assume $F_B \cap  \sigma_{\mathcal A, \mathcal B}(p) \ne \emptyset$.
     
     \medskip
     
     \noindent\underline{Claim:} For each $S \subseteq Q$ and $w \in \Sigma^*$
     \begin{equation}\label{eq:Q_A}
      f(S, w) \cap Q_A = \delta_A(S \cap Q_A, w).
     \end{equation}
     \begin{quote}
         \textit{Proof of the claim.} If $|w| = 0$, then $f(S, w) \cap Q_A = S \cap Q_A = \delta(S \cap Q_A, w)$ by definition of the extension of $f$ and the transition function to words.
     Otherwise, write $w = w'a$ with $w' \in \Sigma^*$ and $a \in \Sigma$.
     Then $f(S, w'a) = f(f(S, w'), a)$.
     By Equation~\eqref{eq:def_state_label_shuffle},
     in either case $\delta_A(f(S, w') \cap Q_A, a) \cap F_A \ne \emptyset$ or 
     $\delta_A(f(S, w') \cap Q_A, a) \cap F_A = \emptyset$,
     we have
     $$
      f(f(S, w'), a) \cap Q_A = \delta_A(f(S, w')\cap Q_A, a).
     $$
     Inductively, $f(S, w')\cap Q_A = \delta_A(S\cap Q_A, w')$,
     so that $f(S, w) = \delta_A(f(S, w')\cap Q_A, a) = \delta_A(\delta_A(S\cap Q_a, w'), a) = \delta_A(S \cap Q_A, w)$. $\qed$
     \end{quote}

     \noindent Then Lemma~\ref{lem:state_label_ind_form} and Equation~\eqref{eq:Q_A} give,
     for any $q \in \mathbb N_0^k$,
     \begin{align*} 
      \sigma_{\mathcal A, \mathcal B}(q) \cap Q_A
       & = \left( \bigcup_{\psi(w) = q} f(\sigma_{\mathcal A, \mathcal B}(0,\ldots, 0), w) \right) \cap Q_A \\ 
       & = \bigcup_{\psi(w) = q} ( f(\sigma_{\mathcal A, \mathcal B}(0,\ldots, 0), w) \cap Q_A ) \\ 
       & = \bigcup_{\psi(w) = q} \delta_A(\sigma_{\mathcal A, \mathcal B}(0,\ldots, 0) \cap Q_A, w).
     \end{align*}
     By Lemma~\ref{lem:shuffle_QB_states}, as $\sigma_{\mathcal A, \mathcal B}(p) \cap F_B \ne \emptyset$,
     we have some $v \in L(\mathcal B)$ and $q \in \mathbb N_0^k$
     with $p = q + \psi(v)$ and $\sigma_{\mathcal A, \mathcal B}(q) \cap F_A \ne \emptyset$.
     By the above equations, we find $w \in \Sigma^*$ with $\psi(w) = q$
     and $\delta_A(\sigma_{\mathcal A, \mathcal B}(0,\ldots, 0) \cap Q_A, w) \cap F_A \ne \emptyset$.
     As, by Equation~\eqref{eq:def_state_label_shuffle},
     $\sigma_{\mathcal A, \mathcal B}(0,\ldots, 0) \cap Q_A = \{s_A\}$,
     this gives $w \in L(\mathcal A)$.
     So, we have $p = \psi(w) + \psi(v)$ with $w \in L(\mathcal A)$
     and $v \in L(\mathcal B)$.
     This yields $p \in \psi(L(\mathcal A) L(\mathcal B))$. \qed
   \end{proof}

   Hence, as $\perm(L) = \psi^{-1}(\psi(L))$ for any $L \subseteq \Sigma^*$,
   we can conclude that this state labeling could be used
   to describe the commutative closure of the concatenation,
   which,  by Theorem~\ref{thm:perm_semiring_hom}, equals $\perm(L(\mathcal A))\shuffle \perm(L(\mathcal B)$.

   \begin{corollary}
   \label{cor:parikh_image_concat}
     Suppose we have finite automata 
    $\mathcal A = (\Sigma, Q_A, \delta_A, s_A, F_A)$
    and $\mathcal B = (\Sigma, Q_B, \delta_B, s_B, F_B)$
    with $Q_A \cap Q_B = \emptyset$. 
    Then
    $$
      \perm(L(\mathcal A)L(\mathcal B)) = 
     \psi^{-1}( \sigma_{\mathcal A, \mathcal B}^{-1}(\{ S \subseteq Q_A \cup Q_B \mid S \cap F_B \ne \emptyset \})).
    $$
   \end{corollary}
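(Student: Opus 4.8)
The plan is to read Corollary~\ref{cor:parikh_image_concat} off Proposition~\ref{prop:parikh_image_concat} by pulling the asserted equality of Parikh images back along the Parikh morphism $\psi$. The one elementary fact I would use is that, for every language $L \subseteq \Sigma^*$, we have $\perm(L) = \psi^{-1}(\psi(L))$: a word $u$ lies in $\perm(L)$ exactly when $\psi(u) = \psi(w)$ for some $w \in L$, i.e. when $\psi(u) \in \psi(L)$, i.e. when $u \in \psi^{-1}(\psi(L))$.

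Concretely, write $\mathcal F_B = \{\, S \subseteq Q_A \cup Q_B \mid S \cap F_B \ne \emptyset \,\}$. Proposition~\ref{prop:parikh_image_concat} gives $\psi(L(\mathcal A)L(\mathcal B)) = \sigma_{\mathcal A,\mathcal B}^{-1}(\mathcal F_B)$. Applying $\psi^{-1}$ to both sides and then invoking $\perm(L) = \psi^{-1}(\psi(L))$ for $L = L(\mathcal A)L(\mathcal B)$ yields $\perm(L(\mathcal A)L(\mathcal B)) = \psi^{-1}(\sigma_{\mathcal A,\mathcal B}^{-1}(\mathcal F_B))$, which is the claim. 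It is worth recording alongside this, as the reason the statement is useful, that by Theorem~\ref{thm:perm_semiring_hom} the left-hand side also equals $\perm(L(\mathcal A)) \shuffle \perm(L(\mathcal B))$; so $\sigma_{\mathcal A,\mathcal B}$ of Definition~\ref{def:state_label_shuffle} describes the commutative closure of a binary shuffle of two group languages, in exactly the role $\sigma_{\mathcal A,+}$ plays for the iterated shuffle, which is what lets it be fed into Proposition~\ref{prop:state_label_for_perm_aut} and Theorem~\ref{thm:regularity_condition}.

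I do not expect any real obstacle in the corollary itself: all of the work is already done in Proposition~\ref{prop:parikh_image_concat} (and in the auxiliary Lemmas~\ref{lem:shuffle_no_FB_state} and~\ref{lem:shuffle_QB_states}), whose proofs are where one must carefully track how the function $f$ of Definition~\ref{def:state_label_shuffle} splits across $Q_A$ and $Q_B$ and how reaching an $F_A$-final state injects the start state $s_B$. The only point to state cleanly here is that passing from $\psi(\cdot)$ to $\psi^{-1}(\psi(\cdot))$ loses no information, since $\psi^{-1}$ applied to a subset of $\mathbb N_0^k$ depends only on that subset.
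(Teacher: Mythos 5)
Your proposal is correct and matches the paper exactly: the text immediately preceding the corollary notes that $\perm(L) = \psi^{-1}(\psi(L))$ for any $L \subseteq \Sigma^*$ and then applies this to the equality of Parikh images from Proposition~\ref{prop:parikh_image_concat}. Your added remark about Theorem~\ref{thm:perm_semiring_hom} and the role of $\sigma_{\mathcal A,\mathcal B}$ is also precisely the observation the paper makes in that same passage.
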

   
   Constructing an appropriate automaton over $Q_A \cup Q_B$ and 
   applying Theorem~\ref{thm:regularity_condition}
   then gives the next result.

  \end{toappendix}
  
    \begin{propositionrep}\label{thm:n_times_shuffle}
     Let $\mathcal A_i = (\Sigma, Q_i, \delta_i, q_i, F_i)$
     for $i \in\{1, \ldots, n\}$ be $n$ permutation automata.
     Then
     $$
      \stc(\perm(L(\mathcal A_1)) \shuffle \ldots \shuffle \perm(L(\mathcal A_n)))
       \le \left( \sum_{i=1}^n Q_i \right)^k \prod_{j=1}^k \lcm(L_j^{(1)}, \ldots, L_j^{(n)})
     $$
     where $L_j^{(i)}$ for $i \in \{1,\ldots, n\}$ and $j \in \{1,\ldots, k\}$
     denotes
     the order of the letter $a_j$ as a permutation on $Q_i$.
    \end{propositionrep}
    \begin{proof}
    
As said, we only do the case $n = 2$.
    Let $\mathcal A = (\Sigma, Q_A, \delta_A, s_A, F_A)$
    and $\mathcal B = (\Sigma, Q_B, \delta_B, s_B, F_B)$
    be finite permutation automata.
    Suppose $L_j$ and $K_j$ denote the order of the letter $a_j$ 
    viewed as a permutation on $Q_A$ and $Q_B$ respectively.
    We show that
    $$
     \stc(\perm(L(\mathcal A)) \shuffle \perm(L(\mathcal B)))
      \le (Q_A + Q_B)^k \prod_{j=1}^k  \lcm(L_j, K_j). 
    $$   
    

  We can assume $Q_A \cap Q_B = \emptyset$. Set $Q = Q_A \cup Q_B$.
  Construct the semi-automaton $\mathcal C = (\Sigma, Q, \delta)$
   with
   $$
    \delta(q, x) = \left\{
    \begin{array}{ll}
     \delta_A(q,x) & \mbox{if } q \in Q_A; \\
     \delta_B(q,x) & \mbox{if } q \in Q_B.
    \end{array}
    \right.
   $$
   Let $f : \mathcal P(Q) \times \Sigma \to \mathcal P(Q)$
   be the function from Definition~\ref{def:state_label_shuffle}
   and $\sigma_{\mathcal A, \mathcal B} : \mathbb N_0^k \to \mathcal P(Q)$
   the corresponding state label map.
   Then, for each $S \subseteq Q_A \cup Q_B$ and $a \in \Sigma$, we
   have $\delta(S, a) \subseteq f(S, a)$ by Equation~\eqref{def:state_label_shuffle}, i.e., 
   the semi-automaton $\mathcal C$ is compatible with the state label map.
   The automaton $\mathcal C$ is a permutation semi-automaton,
   and each letter $a_j \in \Sigma$ has order $\lcm(L_j, K_j)$, viewed
   as a permutation on $Q_A \cup Q_B$.
   By Proposition~\ref{prop:state_label_for_perm_aut},
   the automata $\mathcal A_p^{(j)}$ from Definition~\ref{def:sequ_grid_decomp_aut}
   have index at most $(|Q_A \cup Q_B| - 1)\lcm(L_j, K_j)$
   and period at most $\lcm(L_j, K_j)$.
   Hence, using Theorem~\ref{thm:regularity_condition},
   the language $\psi^{-1}(\sigma_{\mathcal A, \mathcal B}(\mathcal F))$
   with $\mathcal F = \{ S \subseteq Q_A \cup Q_B \mid S \cap F_B \ne \emptyset \}$
   is accepted by an automaton of size at most
   $$
    \prod_{j=1}^k \bigg( (|Q_A \cup Q_B| - 1)\lcm(L_j, K_j) + \lcm(L_j, K_j) \bigg).
   $$
   By Corollary~\ref{cor:parikh_image_concat}, the result follows.~\qed

    \end{proof}

\section{Conclusion}

We have shown that the commutative closure of any shuffle language
over group languages is regular.
However, it is unknown if any shuffle language over the group
languages is a regular languages itself.
As a first step, the question
if the iterated shuffle of a group language is regular
might be investigated.
I conjecture this to be true, but do not know how to proof it for general group languages. 
Observe that merely by noting that the commutative closure is regular, we cannot conclude
that the original language is regular. For example, consider
the non-regular context-free language given by the grammar $G$ over $\{a,b\}$ with rules
\[
 S \to aTaS \mid \varepsilon, \quad
 T \to bSbT \mid \varepsilon.
\]
and start symbol $S$. 

\begin{proposition}
 The language $L \subseteq \{a,b\}^*$ generated by the above
 grammar $G$ is not regular,
 but its commutative closure
 is regular.
\end{proposition}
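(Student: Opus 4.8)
The plan is twofold: compute the Parikh image $\psi(L)$ to settle regularity of the commutative closure, and extract a classical non-regular language from $L$ by intersecting with a regular set.

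Write $L=L_S$ for the language of the start symbol and $L_T$ for the language of $T$. Eliminating the tail recursion in the two productions yields the fixpoint equations $L_S=(aL_Ta)^*$ and $L_T=(bL_Sb)^*$. For the commutative closure I would compute $\psi(L)$ from these. Each application of $S\to aTaS$ adds exactly two $a$'s and each application of $T\to bSbT$ exactly two $b$'s, so $|w|_a$ and $|w|_b$ are even for every $w\in L_S$; and since a nonempty element of $(aL_Ta)^*$ contains an $a$, any $w\in L_S$ with $|w|_a=0$ equals $\varepsilon$. Conversely $b^{2q}=(b\varepsilon b)^q\in L_T$, hence $ab^{2q}a\in L_S$, hence $(aa)^{p-1}(ab^{2q}a)\in L_S$ realizes the vector $(2p,2q)$ for all $p\ge 1$, $q\ge 0$; and $\varepsilon\in L_S$. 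Therefore $\psi(L)=\{(0,0)\}\cup\{(2p,2q)\mid p\ge 1,\ q\ge 0\}$, a finite union of products of ultimately periodic subsets of $\mathbb N_0$ and thus a recognizable subset of $\mathbb N_0^2$; consequently $\perm(L)=\psi^{-1}(\psi(L))$ is regular. Explicitly, $\perm(L)=\{\varepsilon\}\cup\{w\in\{a,b\}^*\mid |w|_a\ge 2,\ |w|_a\equiv|w|_b\equiv 0\pmod 2\}$.

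To prove that $L$ is not regular I would show $L\cap (ab)^*(ba)^*=\{(ab)^k(ba)^k\mid k\ge 0\}$. The inclusion $\supseteq$ follows from $S\Rightarrow aTa\Rightarrow ab\,S\,ba$, iterated $k$ times and then closed via $S\to\varepsilon$. For $\subseteq$ I would prove by strong induction on $i+j$ that $(ab)^j(ba)^i\in L_S$ forces $i=j$. The word $(ab)^j(ba)^i$ has no factor $aa$; since two consecutive blocks in a factorization as a product of members of $aL_Ta$ would produce a factor $aa$, the word is a single block $a\,t\,a$ with $t=b(ab)^{j-1}(ba)^{i-1}b\in L_T$, which in particular forces $i,j\ge 1$. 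The word $t$ has at most one factor $bb$, so it is a product of at most two members of $bL_Sb$; a short case analysis on whether $i$ or $j$ equals $1$ then shows that every configuration except the matched one is excluded --- a stray block would either end in the letter $a$ (hence not lie in $bL_Sb$) or need a Parikh vector with coordinates of opposite parity, both impossible --- and the surviving configuration gives $(ab)^{j-1}(ba)^{i-1}\in L_S$, whence $i-1=j-1$ by the induction hypothesis. Finally $\{(ab)^k(ba)^k\mid k\ge 0\}$ is not regular: pumping $s=(ab)^p(ba)^p$, the pumped factor $y$ lies inside the prefix $(ab)^p$, and $xy^2z$ is never of the form $(ab)^k(ba)^k$, because when $|y|$ is odd $\psi(xy^2z)$ has an odd coordinate and when $|y|$ is even $xy^2z=(ab)^{p+|y|/2}(ba)^p$ has its unique factor $bb$ in the wrong place. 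Since regular languages are closed under intersection, $L$ is not regular.

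The main obstacle is the inductive step $(ab)^j(ba)^i\in L_S\Rightarrow i=j$: one must exclude, for every decomposition of $(ab)^j(ba)^i$ into members of $aL_Ta$ and of the inner word into members of $bL_Sb$, all but the matched configuration, which rests on combining the two combinatorial observations above with the parity restriction $\psi(L_S)\subseteq 2\mathbb N_0\times 2\mathbb N_0$.
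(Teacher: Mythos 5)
Your proposal is correct and follows essentially the same plan as the paper: you intersect $L$ with the regular set $(ab)^*(ba)^*$ and show by induction (exploiting the absence of a second $aa$/$bb$ factor together with the parity constraint $\psi(L)\subseteq 2\mathbb N_0\times 2\mathbb N_0$) that the result is $\{(ab)^k(ba)^k\mid k\ge 0\}$, and you determine $\psi(L)$ directly to conclude that $\perm(L)$ is regular. The only cosmetic difference is the final step of the non-regularity argument, where you pump $\{(ab)^k(ba)^k\}$ directly while the paper takes the inverse image under the morphism $c\mapsto ab,\ d\mapsto ba$ to reduce to $\{c^nd^n\}$; both are standard and equally valid.
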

\begin{proof}
\begin{enumerate}

\item $L \cap (ab)^*(ba)^* = \{ (ab)^n(ba)^n \mid n \ge 0 \}$.

 \medskip 
 
 It is easy to see
 that $\{ (ab)^n (ba)^n \mid n \ge 0 \} \subseteq L \cap (ab)^*(ba)^*$.
 For the other inclusion, we will first show that if
 \[
  S \to u 
 \] 
 with $u \in (ab)^*(ba)^*$, then $u  = \varepsilon$
 or $S \to abSba \to u$ with $u = abvba$, which
 implies $v \in (ab)^*(ba)^*$.
 So assume $S \to u$ with $u \ne \varepsilon$.
 Then, we must have
 \[
  S \to aTaS \to u,
 \] 
 As, by assumption $u \notin \Sigma^*aa\Sigma^*$,
 we must apply $S \to \varepsilon$ and could not apply $T \to \varepsilon$.
 So, the following steps are necessary
 \begin{equation} \label{eqn:T_expansion}
  S \to aTaS \to aTa \to abSbTa \to u.
 \end{equation}
 Assume we expand $T$ into a non-empty word,
 then 
 \[
  abSbTa \to abSbbSbTa.
 \] 
 As the factor $bb$ occurs at most once in any word from $(ab)^*(ba)^*$,
 the above must expand to $abSbbSba$.
 This, in turn, implies that the first $S$ must expand into a word from $(ab)^*a$.
 However, such a word always contains either an odd number of $a$'s
 or an odd number of $b$'s, and by the production rules, as these letters are always
 introduced in pairs, this is not possible.
 Hence, we cannot expand $T$ in Equation~\eqref{eqn:T_expansion}
 into a non-empty word and we must have $T \to \varepsilon$.
 Then,
 \[
    S \to aTaS \to aTa \to abSba \to u.
 \]
 So, we can write $u = abvba$ with $v \in (ab)^*(ba)^*$.

 Finally, we reason inductively. 
 If $u = \varepsilon$, then $u \in \{ (ab)^n(ba)^n \mid n \ge 0 \}$.
 Otherwise, by the previously shown statement,
 we have $u = abvba$ with $S \to v$ and $v \in (ab)^*(ba)^*$.
 Hence, inductively, we can assume $v = (ab)^n(ba)^n$ for some $n \ge 0$,
 which implies $u = (ab)^{n+1}(ba)^{n+1}$.

\medskip 

\item The generated language is not regular.

\medskip 

 Assume $L$ is regular. Then, with the above result,
 also $\{ (ab)^n(ba)^n \mid n \ge 0 \}$ would be regular.
 However, for the homomorphism $\varphi : \{c,d\}^* \to \{a,b\}^*$
 given by $\varphi(c) = ab$, $\varphi(d) = ba$
 we have $\{ c^n d^n \mid n \ge 0 \} = \varphi^{-1}( \{ (ab)^n(ba)^n \mid n \ge 0 \} )$.
 As the last language is well-known to be not regular,
 and as regular languages are closed under inverse homomorphic mappings,
 the language $\{ (ab)^n(ba)^n \mid n \ge 0 \}$ could not be regular.

 \medskip
 
 \item The commutative closure of $L$
  is $\{ u \in \{a,b\}^* : |w|_a \equiv 0 \pmod{2}, |w|_b \equiv 0 \pmod{2}, |w|_a \ge \min\{1,|w|_b\} \}$, which is a regular language.
 
 \medskip
    
    We have, for any $n \ge 0$ and $m \ge 0$,
    that $a(bb)^ma(aa)^n \in L$ and $\varepsilon \in L$.
    Also, as each rule introduces the letters
    $a$ or $b$ in pairs, any word in $L$
    has an even number of $a$ and $b$'s
    and as we can only introduce the letter $b$ with
    the non-terminal $T$, which we only can apply after producing at least one $a$, we see that if we have at least
    one $b$, then we need to have at least one $a$.
    Combining these observations yields
    that the commutative closure equals the language
    written above and the defining conditions of this language
    could be realized by automata.
    
\end{enumerate}
 \noindent So, we have shown the claims made in the proposition.~\qed
\end{proof}

\smallskip \noindent \footnotesize
\textbf{Acknowledgement.} I thank the anonymous reviewers who took their time reading through this work.

\bibliographystyle{splncs04}
\bibliography{permbib} 
\end{document}